\theoremstyle{plain}
\theoremstyle{definition}
\newtheorem{theorem}{Theorem}
\newtheorem{assumption}{Assumption}
\newtheorem{lemma}{Lemma}
\newtheorem{observation}{Observation}
\newtheorem*{observation*}{Observation}
\newtheorem{example}{Example}
\newtheorem{definition}{Definition}
\theoremstyle{remark}
\renewcommand\thmcontinues[1]{Continued}
\newcommand{\hospitals}{\mathcal{F}}
\newcommand{\students}{\mathcal{W}}
\newcommand{\smatchings}{M}
\newcommand{\histories}{\mathcal{H}}
\renewcommand{\Re}{\mathbb{R}}
\newcommand{\tcs}{\mathcal{T}}
\newcommand{\reduced}{\mathcal{R}}
\newcommand{\redmatching}{\smatchings_{\reduced}}
\newcommand{\abs}[1]{ \left| #1 \right| }
\newcommand{\size}[1]{\big| #1 \big| }
\newcommand{\term}[1]{\textit{#1}}
\newcommand{\fhistories}{\overline{\histories}}
\renewcommand{\emph}[1]{\textit{#1}}
\renewcommand{\exp}{\mathbb{E}}
\renewcommand{\tilde}{\widetilde}
\renewcommand{\hat}{\widehat}
\DeclareMathOperator*{\argmin}{arg\,min}
\newenvironment{proofof}[1]{\par\bigskip
   \noindent \textbf{Proof of #1.} } {\qed \par \bigskip}
\renewcommand{\paragraph}[1]{{\flushleft \textbf{\color{DarkRed}#1 }}}
\newcommand{\hchi}{\raise0.4ex\hbox{$\chi$}}
\setlist{topsep=3pt, itemsep=-3pt}
\g@addto@macro \normalsize {%
 \setlength\abovedisplayskip{8pt plus 2pt minus 2pt}%
 \setlength\belowdisplayskip{8pt plus 2pt minus 2pt}%
}
\begin{document}

\title{\text{Stability in Repeated Matching Markets}}

\author{Ce Liu\thanks{Department of Economics, Michigan State University (e-mail: \href{mailto: celiu@msu.edu}{celiu@msu.edu}).} \thanks{I am grateful to Nageeb Ali for his continuing guidance and encouragement, and Joel Sobel for his advice. I thank Mohammad Akbarpour, Christopher Chambers, Henrique de Oliveira, Laura Doval, Federico Echenique, Jon Eguia, Simone Galperti, Ed Green, Fuhito Kojima, Maciej Kotowski, Shengwu Li, Elliot Lipnowski, Zizhen Ma, Arijit Mukherjee, Esteban Peralta, Mike Powell, Ran Shorrer, Thomas Wiseman, Alexander Wolitzky, Kyle Woodward, Qinggong Wu, Hanzhe Zhang, three anonymous referees at EC'19, as well as seminar and workshop participants at 2018 NSF/NBER/CEME Mathematical Economics Conference, 5th Workshop on Relational Contracts, SEA 2019, ES World Congress 2020, UCSD, NYU, HKU, HKUST, Michigan State, Northwestern Kellogg, University of Michigan, Western Michigan University, and Stanford University for helpful comments.}}

\date{\today}

\maketitle
\begin{abstract}
This paper develops a framework for repeated matching markets. The model departs from the Gale-Shapley matching model by having a fixed set of long-lived players (hospitals) match with a new generation of short-lived players (residents) in every period. I show that there are two kinds of hospitals in this repeated environment: some hospitals can be motivated dynamically to voluntarily reduce their hiring capacity, potentially making more residents available to rural hospitals without violating stability; the others, however, are untouchable even with repeated interaction and must obtain the same match as they do in a static matching. In large matching markets with correlated preferences, at most a vanishingly small fraction of the hospitals are untouchable. The vast majority of hospitals can be motivated using dynamic incentives.


%
\end{abstract}


\setcounter{page}{1}
\setstretch{1.25}

\clearpage


\section{Introduction}
A 2017 study by the US Government Accountability Office\footnote{https://www.gao.gov/products/GAO-17-411.} projects a deficit of over 20,000 primary care physicians in rural hospitals by 2025, and points to the disproportionate concentration of medical residents in urban hospitals as a key contributing factor to this problem. Historically, this workforce shortage in rural hospitals has lead to lower medical utilization and worse health conditions of the rural population.\footnote{See, for example, \cite{bindman1990}, \cite{hadley1991} and \cite{rosenbach1995}, and a recent coverage by NPR: https://www.npr.org/sections/health-shots/2019/05/21/725118232/the-struggle-to-hire-and-keep-doctors-in-rural-areas-means-patients-go-without-c.} Despite significant policy interests in increasing the number of medical residents at rural hospitals, it has proved difficult to provide the extra funding needed for rural hospitals to be competitive against their urban counterparts.


One promising solution is to re-design the National Resident Matching Program (NRMP), so that it matches more residents to rural hospitals (see, for example, Chapter 5.4.4 of \citealp{rothsotomayor1992}). However, according to the celebrated Rural Hospital Theorem \citep{roth1986rural}, doing so will always end up creating instabilities, and jeopardizing the operation of the hospital-resident matching market. The Rural Hospital Theorem states:
\begin{quote}
\textit{If preferences over individuals are strict, any hospital that does not fill its quota at some (static) stable matching is assigned precisely the same set of students at every (static) stable matching.} 
\end{quote}
If a rural hospital fails to fill its hiring slots at one static stable matching, it is then impossible for the NRMP to create a different static stable matching that gives more or better matches to this hospital.

However, this interpretation of the Rural Hospital Theorem is based on an assumption that treats the market as a one-shot interaction. In reality, hospital-resident matching is an ongoing process that takes place every year: hospitals are long-lived players that hire from this market in every recruitment cycle, whereas students are short-lived players that participate in this process only once. While only static matchings can be created in one-shot environments, in repeated matching markets, a matching clearing house can create different future matchings depending on what has  happened in the past. This way, it can effectively create history-dependent matching processes that are not subject to the implication of the Rural Hospital Theorem. 

An example of such dynamic processes already exists in the Japanese Resident Matching Program: if a hospital hires against the recommendation from the matching clearing house, it is excluded from future participation in the market.\footnote{See \cite{kojimakamada2015} for more details on the Japanese matching market.} Thus a different future matching is created to punish the hospital for its past deviation.  In the US, the NRMP is a non-profit organization that lacks the legal authority to externally enforce compliance from the hospitals, so any matching process it uses must be self enforcing. The goal of this paper is to investigate what can be achieved through history-dependent matching processes that are self enforcing. In particular, I will focus on whether it is possible to reduce urban hospitals' hiring quotas without violating stability, thereby making more residents available to rural hospitals.

To this end, I consider a two-sided one-to-many matching model with long-lived hospitals and short-lived students. In each period, a new generation of students enter the market and live for one period. The stage game is the canonical one-to-many matching game \`{a} la \cite{galeshapley1962} among hospitals and students who are currently active. I define a stability notion---self-enforcing matching process---that generalizes static stability to this repeated environment. Specifically, a matching process is a complete contingent plan that specifies a current stage-game matching as a function of past histories. A blocking coalition in the stage game can comprise a hospital and a set of students who also find this deviation profitable, but unlike the theory of static matching, hospitals take the continuation play into account, and are unwilling to deviate if doing so leads to unattractive future outcomes. A matching process is said to be self enforcing if it is immune to not only unilateral deviations by hospitals or students, but also sequential blocking coalitions that can be chained together by a hospital over a possibly infinite horizon.

The results in this paper identify two kinds of hospitals in repeated matching markets: some hospitals can be motivated to voluntarily reduce their hiring quotas through continuation play, making more residents available for rural hospitals; but there are also other hospitals that are untouchable by dynamic enforcement. I then move on to large repeated matching markets to characterize the relative sizes of these two kinds of hospitals, and show that the vast majority of hospitals can indeed be motivated dynamically.



%

\paragraph{Repeated Matching.}
%
\cref{section: repeated market} studies a setting where the same stage-game matching is played repeatedly in every period, and contrasts it with the benchmark static matching environment where the stage-game is treated as a one-shot interaction. To highlight the effects of repeated interactions, in this section I will assume that the same set of representative students are born every period; random populations will be considered later on in \cref{section: large markets}.

In \cref{subsection: top coalition sequence}, I identify a set of players named the \textit{top coalition sequence}: a hospital and a group of students form a \term{top coalition} in the stage game if they are mutual favorites; once identified, a top coalition is removed from the stage game, and new top coalitions are searched for among the remaining players. The top coalition sequence is constructed by carrying through this iterative process until there is no new top coalition. I find that the top coalition sequence defines a partition over the set of players. 

\cref{theorem:tcs} shows that hospitals and students in the top coalition sequence are untouchable even with repeated play and history dependence. A top coalition hospital is always matched to its top coalition students in every static stable matching. In repeated matching markets, they remain so
in every self-enforcing matching process and at every possible history of the market; this statement also holds regardless of the patience level of the hospitals. 

\cref{theorem:tcs} has a simple intuition. A hospital in a top coalition is immune from any future repercussions for its current behavior: in every future generation, it will be the favorite hospital of its favorite students, so there is no credible way to separate this hospital from its top coalition students in the future. As a result, it is impossible to motivate this hospital to give up its top coalition students in the current period through continuation value. Since this arguments holds at every history of the market and regardless of patience, we can effectively treat players in the top coalition as inactive players. By iteratively applying the same argument to the remaining active players, an inductive procedure extends this logic to the entire top coalition sequence.

In contrast to \cref{theorem:tcs}, \cref{theorem:folk-theorem} proves a folk theorem for the hospitals outside of the top coalition sequence. When defining the minmax payoff for hospitals outside of the top coalition sequence, it is crucial to make sure that the top coalition sequence is left intact, and calibrate the minmax payoffs accordingly. \cref{theorem:folk-theorem} then shows that if a (randomized) stage-game matching leaves the top coalition sequence intact, respects students' participation constraints, and guarantees all other hospitals higher than their appropriately defined minmax payoffs, then it can be sustained as the stationary outcome of a self-enforcing matching process when hospitals are patient.

As we will see in the example in \cref{section: example}, in certain matching markets the top coalition sequence may be empty, but under some other preference configurations the top coalition sequence may span the entire set of players. In the matching market for medical residents, hospitals' valuations of students are positively correlated, while students also have positively correlated preferences over hospitals.  It is then crucial to ask: in these markets, how big is the impact from the untouchable hospitals?

\paragraph{Large Repeated Matching.} To answer this question, in \cref{section: large markets}, I build on the repeated matching model in \cref{section: repeated market}, but allow students to be drawn randomly from a fixed distribution in every period. Preferences are correlated: both hospitals and students can be divided into finite quality classes; players' payoffs are determined by their match partners' quality class along with an idiosyncratic shock. I will let market size grow to infinity and focus on large market asymptotics.

In this setting, I will call the best class of hospitals \textit{elite hospitals} if the size of this quality class is vanishingly small relative to the size of the best class of students.  \cref{theorem: cannot reduced capacity} shows that for every fixed discount factor, as market size grows large, all elite {hospitals} (should they exist) must always hire at full capacity in every self-enforcing matching process. In other words, {elite hospitals} are untouchable in large matching markets. 

To understand the intuition for \cref{theorem: cannot reduced capacity}, note that an elite hospital faces not only no vertical competition, but also vanishing horizontal competition from other elite hospitals. Such ``market power'' means that despite the randomness in preference realizations, an elite hospital can secure a high continuation value no matter what. At any fixed patience level, as market size grows large, this payoff guarantee eventually becomes high enough so that even the worst future punishment cannot justify giving up a hiring quota in the current period.


 
\cref{theorem: cannot reduced capacity} is reminiscent of \cref{theorem:tcs}, and highlights the difficulty in disciplining certain hospitals in the markets. However, elite hospitals, by definition, must make up only a vanishingly small fraction of the market. By contrast, \cref{theorem: can reduce capacity} confirms that as long as a hospital quality class makes up a non-vanishing fraction of the market, we can construct a matching process that simultaneously reduces the hiring capacity of all hospitals in this quality class; this matching process is also self-enforcing when market size is sufficiently large and hospitals are sufficiently patient. Perhaps surprisingly, \cref{theorem: can reduce capacity} also applies to the top quality class as long as its size is non-vanishing: in this case, there are no elite hospitals in the market. 

A direct implication of \cref{theorem: can reduce capacity} is that almost all hospitals can be motivated to voluntarily reduce their hiring capacities through history dependence, regardless of how they are vertically differentiated. To the extent that policy interests lie in allocating more students to rural hospitals, the untouchable hospitals would only make a negligible impact on the number of students that can be made available to the rural hospitals.


One obstacle to proving the results above is that players may derive approximately efficient payoffs from all static stable matchings (see \citealp{pittel1989, pittel1992} and  \citealp{lee2016} for large market results on this point; \citealp{ashlagi2017} show that this is true even in small matching markets). This makes it difficult to use static stable matchings as punishments in the proof of  \cref{theorem: can reduce capacity}. Instead, I construct a variant of the student-proposing serial dictatorship algorithm, and show that it can punish hospitals effectively even in large matching markets, while also being  robust to different preference correlation structures. The algorithm leverages horizontal competition among hospitals in the same quality class to achieve the desired punishments, regardless of how these hospitals are vertically differentiated.

\subsection{Related Literature} \label{section: literature}

This paper is related to several different lines of research. First, my paper is part of a large and active literature on community enforcement, which studies how repeated interactions can lead to desirable outcomes that are not sustainable in one-shot interactions. See, for example, \cite{kandori1992social}, \cite{ellison1994cooperation}, \cite{wolitzky2013cooperation}, \cite{ali2016ostracism}, \cite{acemogluwolitzky1,acemogluwolitzky2}, and \cite{debwolitzky}. The main difference of my paper is that I focus on two-sided one-to-many matching environments, which may contain ``top coalition'' players that cannot be motivated dynamically even when patience is high. In order to quantify the impact of these untouchable players, I build on techniques from the large matching market literature to obtain asymptotic characterizations of their relative size in the market.



Second, this paper is also related to the literature on dynamic matching. 
\citet{du2016rigidity} and \citet{doval2018} consider the existence of self-enforcing arrangements in a setting where matching is one-to-one, and players leave the market permanently once matched.\footnote{See \cite{unver2010dynamic}, \cite{anderson2015}, \citet{baccara2015}, \citet{leshno2017dynamic}, and \cite{akbarpour2017} for the welfare implications of various dynamic matching algorithms in such markets.}  
  Another strand of this literature investigates self enforcing arrangements in matching markets where the links among long-lived players can be revised over time. See, for example, \cite{corbae2003}, \cite{damianolam2005}, \cite{kurino2009}, \cite{newtonsawa2015},  \cite{kadamkotowski20182, kadamkotwoski2018}, \cite{kotowski2015note}, and \cite{altinok2019dynamic}. The main difference of the current paper is that I study a setting where a fixed set of long-lived players match with multiple short-lived players in every period. As a result of this difference, while dynamic incentives typically impede either stability or efficiency in the existing literature, in my paper they are used as carrot and stick to enforce more stable outcomes.


Third, the current paper is also part of a nascent literature combining repeated games and cooperative games. \cite{bernheimslavov2009} study a repeated version of Condorcet Winner. \cite{aliliu} consider coalitional deviations in general repeated games where the stage game can either be a strategic-form game or a cooperative game. The solution concept in my paper builds on the full history dependence and subgame-perfection requirement in these papers. The main difference is in the form of effective coalitions: in both these papers, the effective coalitions consist of subsets of long-lived players; in the current paper, however, the effective coalitions are those that consist of a single long-lived player and multiple generations of short-lived players. \cite{aliliu} also focus on the effects of public versus secret payments, whereas the current paper focuses on matching markets without transfers. More recently, \cite{bardhi2020early} build on the solution concept in \cite{aliliu} to study early career discrimination in matching markets where wages are flexible. 

%

\section{Example} \label{section: example}

In this section, I illustrate the effect of history dependence through a stylized example based on the matching market between hospitals and medical residents.

Suppose there are three hospitals: two of these hospitals, $f_1$ and $f_2$, are urban hospitals, while $f_r$ is rural ({I reserve the letter ``$h$'' for a history, which will be introduced later. Instead, I use the letter $f$ to represent a hospital as a firm, and $w$ for a student who is also a worker}). Hospitals are long-lived players, each with two hiring slots to fill every year. On the other side of the market, five representative students $w_1, \ldots, w_5$ enter the market looking for residency jobs each year.  Students are short-lived players.

The left panel of \cref{table:example preferences} is hospitals' stage-game utilities. For this example, we shall assume that hospitals have additively separable utilities from matched students, and derive $0$ from unfilled positions. Observe that $w_5$ is every hospital's least preferred student. Students' preference over hospitals are in the right panel of \cref{table:example preferences}. Each student prefers working for any hospital over unemployment. Observe that the rural hospital $f_r$ is the worst hospital for every student. 

\[
\begin{array}{c | c c c c c} 
\toprule
u_f(w) & 5  &  4  & 3 & 2 & 1\\
\midrule
f_1 & w_1 & w_2 & w_3 & w_4 & w_5\\
f_2 & w_3 & w_2 & w_4 & w_1 & w_5\\
f_r & w_2 & w_4 & w_3 & w_1 & w_5\\
\bottomrule
\end{array}
\hspace{10ex}
\begin{array}{c | c c c} 
\toprule
 &   & \succ &  \\
\midrule
w_1 & f_2 & f_1 & f_r \\
w_2 & f_2 & f_1 & f_r \\
w_3 & f_1 & f_2 & f_r \\
w_4 & f_1 & f_2 & f_r \\
w_5 & f_1 & f_2 & f_r \\
\bottomrule
\end{array}
\]
\begingroup
\captionof{table}{Example: Preferences\label{table:example preferences}}
\endgroup

\paragraph{Static Stable Matchings.} If the market in each period is treated as an isolated one-shot interaction, then there are two static stable matchings, shown in \Cref{figure:example matchings}. Both of them match the rural hospital $f_r$ with the worst student $w_5$ while leaving its other position unfilled.

\begin{figure}[h]  \centering
\minipage{0.32\textwidth} \centering
\begin{tikzpicture}
\tikzstyle{every node} = [draw, shape=circle, line width=1pt, inner sep=0pt, minimum size=20pt]
\tikzset{every path/.append style={line width=1pt}}
\path (0,3) node (f1) {$f_1$};
\path (0,2) node (f2) {$f_2$};
\path (0,1) node (fr) {$f_r$};
\path (3,4) node (w1) {$w_1$};
\path (3,3) node (w2) {$w_2$};
\path (3,2) node (w3) {$w_3$};
\path (3,1) node (w4) {$w_4$};
\path (3,0) node (w5) {$w_5$};
\draw (f1) -- (w1);
\draw (f1) -- (w3);
\draw (f2) -- (w2);
\draw (f2) -- (w4);
\draw (fr) -- (w5);
\end{tikzpicture}	
\caption*{$m_{\hospitals}$}
\endminipage 
\hspace{10ex}
\minipage{0.32\textwidth} \centering
\begin{tikzpicture}
\tikzstyle{every node} = [draw, shape=circle, line width=1pt, inner sep=0pt, minimum size=20pt]
\tikzset{every path/.append style={line width=1pt}}
\path (0,3) node (f1) {$f_1$};
\path (0,2) node (f2) {$f_2$};
\path (0,1) node (fr) {$f_r$};
\path (3,4) node (w1) {$w_1$};
\path (3,3) node (w2) {$w_2$};
\path (3,2) node (w3) {$w_3$};
\path (3,1) node (w4) {$w_4$};
\path (3,0) node (w5) {$w_5$};
\draw (f1) -- (w3);
\draw (f1) -- (w4);
\draw (f2) -- (w1);
\draw (f2) -- (w2);
\draw (fr) -- (w5);
\end{tikzpicture}
\caption*{$m_{\students}$}
\endminipage 
\caption{Static Stable Matchings\label{figure:example matchings}}
\end{figure}

Suppose for the welfare of the patients living in rural areas, we want to match $f_r$ with $w_2$ and create the matching $m_0$ in \Cref{figure:unstable matching}. But $m_0$ is unstable: $f_1$ and $w_2$ will form a blocking pair. 

\begin{figure}[h] \centering
\minipage{0.32\textwidth} \centering
\begin{tikzpicture}
\tikzstyle{every node} = [draw, shape=circle, line width=1pt, inner sep=0pt, minimum size=20pt]
\tikzset{every path/.append style={line width=1pt}}
\path (0,3) node (f1) {$f_1$};
\path (0,2) node (f2) {$f_2$};
\path (0,1) node (fr) {$f_r$};
\path (3,4) node (w1) {$w_1$};
\path (3,3) node (w2) {$w_2$};
\path (3,2) node (w3) {$w_3$};
\path (3,1) node (w4) {$w_4$};
\path (3,0) node (w5) {$w_5$};
\draw (f1) -- (w1);
\draw (f1) -- (w5);
\draw (f2) -- (w3);
\draw (f2) -- (w4);
\draw (fr) -- (w2);
\end{tikzpicture}	
\endminipage 
\caption{$m_0$: An Unstable Matching\label{figure:unstable matching}}
\end{figure}
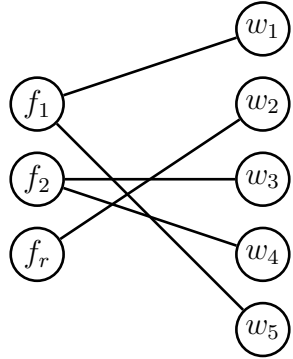

\paragraph{A Self-Enforcing Matching Process.}
Now suppose the matching clearing house can vary the matching it creates in the future based on how hospitals matched in the past. Consider the matching process $\mu^0$ in \Cref{figure:example triggering}: hospitals and students start from the state on the left, matching according to $m_0$ in every period.  If any hospital deviates, players will move to the state on the right, which is an absorbing state. In this state, players match according to the student optimal matching $m_{\students}$ in every period. Note that this is the familiar idea of Nash Reversion, but in this case, the stage game is a cooperative game. Hospitals' stage-game payoffs from $m_{\students}$ and $m_0$ are included in \Cref{figure:example triggering}.

If the market evolves according to $\mu^0$, then the stage-game matching $m_0$ is played in every period. I will show that $\mu^0$ is self-enforcing when patience is high. A one-shot deviation principle, established later in \cref{lemma: one-shot-deviation}, implies that a matching process is self-enforcing if and only if two requirements are satisfied at every history of the market: 
\begin{enumerate}
	\item no student wishes to unilaterally leave her matched hospital, and
	\item no hospital finds it profitable to conduct a one-shot deviation  with a group of students who also find this deviation profitable.
\end{enumerate}

\begin{figure}[h]  \centering
\makebox{
\includegraphics[width=4.5in]{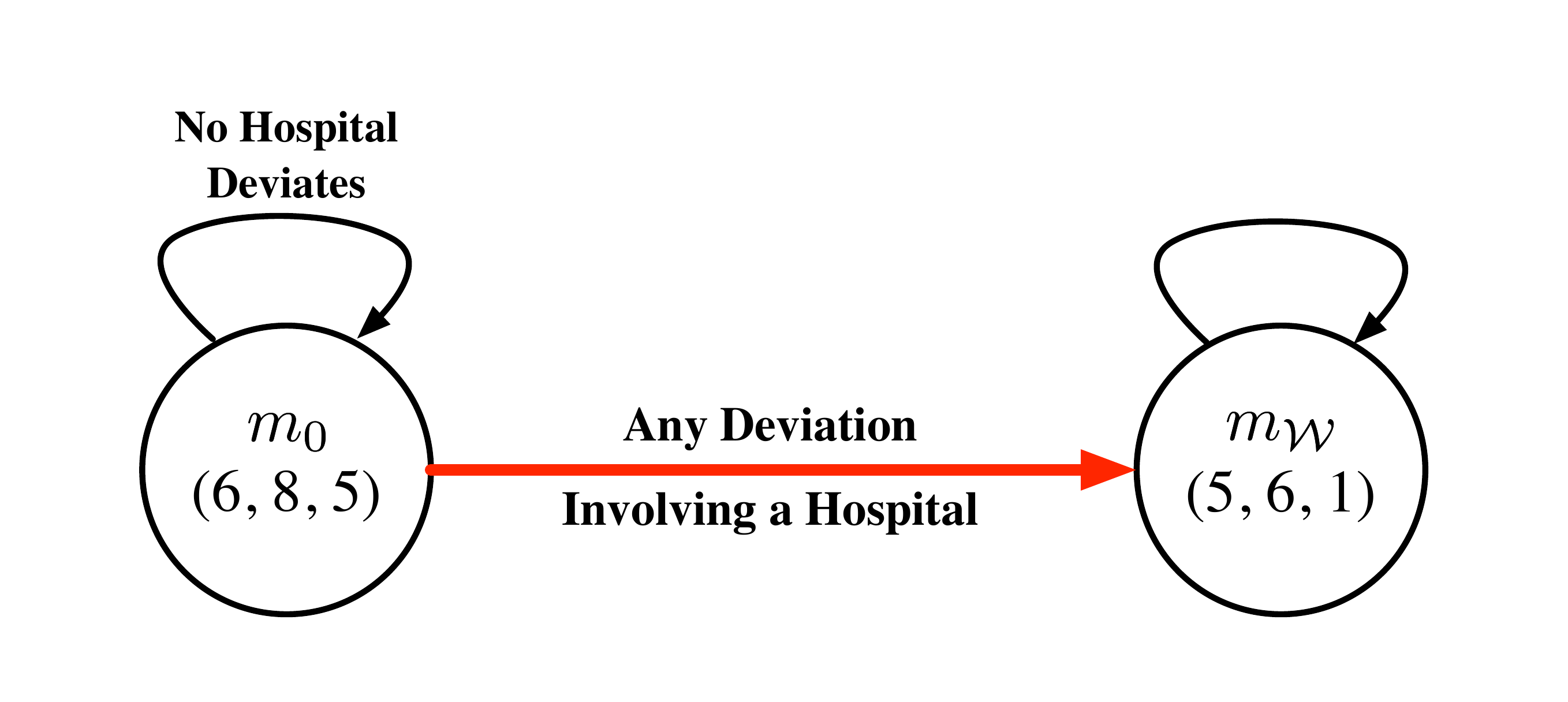}
}
\caption{$\mu_0$: A Self-Enforcing Matching Process Implementing $m_0$ \label{figure:example triggering}}
\end{figure}

%

The state on the right is self-enforcing since it is the infinite repetition of a static stable matching: by construction, there are no profitable deviations. In the state on the left, since every student prefers employment, no students would wish to unilaterally leave their hospital.  By following $\mu^0$, $f_1$ receives $6$ every period from matching with $\{ w_1, w_5\}$. If $f_1$ were to deviate with any student, it receives no more than $9$ in the current period, and $5$ from all future periods. As long as $f_1$'s discount factor $\delta$ satisfies $(1-\delta )9 + \delta 5 < 6$, or $\delta >{3}/{4}$, $f_1$ would not find such one-shot deviations profitable. A similar argument rules out any profitable one-shot deviations involving $f_2$. In light of \cref{lemma: one-shot-deviation}, $\mu^0$ is a self-enforcing matching process for $\delta >{3}/{4}$.

\paragraph{Limits of Self Enforcement.}
The analysis so far has shown it's possible to use history dependence to expand the set of stable outcome. The next example will illustrate that certain preference configurations can severely limit this ability.
\[
\begin{array}{c | c c c c c} 
\toprule
u_f(w) & 5  &  4  & 3 & 2 & 1\\
\midrule
f_1 & w_1 & w_2 & w_3 & w_4 & w_5\\
f_2 & w_3 & w_2 & w_4 & w_1 & w_5\\
f_r & w_2 & w_4 & w_3 & w_1 & w_5\\
\bottomrule
\end{array}
\hspace{10ex}
\begin{array}{c | c c c} 
\toprule
 &   & \succ &  \\
\midrule
w_1 & f_1 & f_2 & f_r \\
w_2 & f_1 & f_2 & f_r \\
w_3 & f_1 & f_2 & f_r \\
w_4 & f_1 & f_2 & f_r \\
w_5 & f_1 & f_2 & f_r \\
\bottomrule
\end{array}
\]
\begingroup
\captionof{table}{Students Share Identical Preference \label{table:matchings aligned}}
\endgroup\vspace{0.25in}
\vspace{-0.05in}

Consider the market in \cref{table:matchings aligned}: the only difference from \cref{table:example preferences} is that now all students share a common preference ranking $f_1 \succ f_2 \succ f_r$ over hospitals. There is a unique static stable matching $m^*$ in this market, as depicted in \Cref{figure:unique stable matching}. In this market, however, no self-enforcing matching process can sustain any matching other than $m^*$ regardless of how patient the hospitals are.

\begin{figure}[h] \centering
\minipage{0.32\textwidth} \centering
\begin{tikzpicture}
\tikzstyle{every node} = [draw, shape=circle, line width=1pt, inner sep=0pt, minimum size=20pt]
\tikzset{every path/.append style={line width=1pt}}
\path (0,3) node (f1) {$f_1$};
\path (0,2) node (f2) {$f_2$};
\path (0,1) node (fr) {$f_r$};
\path (3,4) node (w1) {$w_1$};
\path (3,3) node (w2) {$w_2$};
\path (3,2) node (w3) {$w_3$};
\path (3,1) node (w4) {$w_4$};
\path (3,0) node (w5) {$w_5$};
\draw (f1) -- (w1);
\draw (f1) -- (w2);
\draw (f2) -- (w3);
\draw (f2) -- (w4);
\draw (fr) -- (w5);
\end{tikzpicture}	
\endminipage 
\caption{$m^*$: The Unique Stable Matching\label{figure:unique stable matching} }
\end{figure}
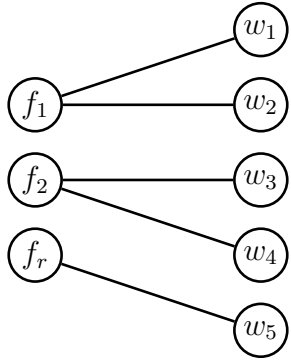

 To see why, first observe that as everyone's favorite hospital, $f_1$ finds its favorite students $\{w_1, w_2\}$ available in every future generation: whenever $f_1$ is not matched to $\{w_1, w_2\}$, it can always poach them. Since $\{w_1, w_2\}$ also happen to give $f_1$ the highest possible stage-game payoff, it is therefore impossible to either punish or reward $f_1$ through continuation value. Without changes in continuation value, $f_1$ acts like a short-lived player, and will always match with $\{w_1, w_2\}$ in every self-enforcing matching process at every history of the market.

%

Now since $\{f_1, w_1, w_2\}$ are always matched together, they can be disregarded by other players. This makes $f_2$ every student's favorite hospital among the remaining hospitals, and $f_2$ finds its favorite students among the remaining students, $\{w_3, w_4\}$, always available in every future generation. The only way to credibly remove $w_3$ or $w_4$ from $f_2$ is to match them with $f_1$---otherwise $f_2$ can simply poach them back. But this isn't possible since $f_1$ is always occupied by $\{w_1, w_2\}$ at every history. Without changes in continuation value, $f_2$ also behaves myopically, and matches with $\{w_3, w_4\}$ at every history. A similar ``peeling'' argument along students' shared preference list ensures that $f_r$ is always matched with $w_5$ in every self-enforcing matching process. 

None of the arguments above involved the hospitals' patience, so for all $0<\delta<1$, the only self-enforcing matching process is the one where $m^*$ is played after every history. Even with history dependence, the market functions like a one-shot interaction.

\paragraph{Takeaway.} From the examples above, we see that depending on players' preference configuration, history dependence may or may not be able to expand the set of stable outcomes. It's worth noting that the uniqueness of static stable outcome is {not} responsible for the collapse of dynamic enforcement: in fact, if the hospitals share a common ranking over students, the market will have a unique static stable matching. But in this case it is possible to expand the set of stable outcomes through history dependence (see \cref{observation:common-ranking}). 

The results in \cref{section: repeated market} show that in general, there is a partition over the set of players: in one element of the partition, players may obtain matches that are unavailable in static stable matchings if hospitals are patient; in the other element of the partition, however, players always match in the same way as they do in a static stable matching regardless of hospitals' patience.  The results in \cref{section: large markets} characterize asymptotically the size and composition of these two kinds of players when market size grows to infinity.



\section{Repeated Matching Market} \label{section: repeated market}
In this section, I first review the benchmark static matching environment, then extend the model to repeated matching markets. I also introduce the notion of top coalition sequence, and show that it determines whether or not a hospital can be motivated through continuation play.

\subsection{Model}
\paragraph{{Players}.}
At the beginning of every period $t = 0, 1, 2\ldots$, a new generation of {students} $\students$ 
enter the market to match with a fixed set of {hospitals} $\hospitals$.
Hospitals are long-lived players that persist through time. Students are short-lived and remain in the market for only one period. Matching is one-to-many: each hospital $f$ has $q_f > 0$ hiring slots to fill in every period.

Each hospital $f$ has utility function $\tilde{u}_f: 2^{\students} \rightarrow \Re$ over subsets of students. Hospitals' utility functions are strict \big($\tilde{u}_f(W) = \tilde{u}_f(W') $ only if $W = W'$\big) and responsive: for all $f\in \hospitals$, $W \subseteq \students $, $w \in W$ and $w' \notin W$,
\[
	\tilde{u}_f\big (W\cup \{w' \}\backslash\{w\} \big ) > \tilde{u}_f \big (W\big)  \;\;\;\;\text{ if } \;\;\;\; \tilde{u}_f \big( \{w' \} \big) > \tilde{u}_f \big( \{w \}\big).
\]

Hospitals share a common discount factor $\delta$, and evaluate a sequence of flow utilities through exponential discounting. Each student $w$ has a strict preference relation $\succ_w$ over the set of hospitals and being unmatched (being unmatched is denoted $w$). I write $f \succeq_w f'$ if either $f \succ_w f'$ or $f = f'$.

\paragraph{{Stage Game}.}
The stage game in every period is a static one-to-many matching game played between the hospitals and the students who are active in that period. Formally, a stage-game matching $m$ is a mapping defined on the set $\hospitals \cup \students$ such that: (i) for every $w\in \students$, $m(w) \in \hospitals \cup \{ w\}$; (ii) for every $f\in \hospitals$, $m(f) \subseteq {\students}$ and $\abs{m(f)} \le q_f $; and (iii) $w \in m(f)$ if and only if $m(w) = f$. Let $\smatchings$ denote the set of all stage-game matchings. For each $f\in \hospitals$, let $u_f:\smatchings \rightarrow \Re\,$  be hospital $f$'s utility function over stage-game matchings induced from its preference over students: $u_f(m) \equiv \tilde{u}_f(m(f))$ for all $m\in \smatchings$.


A stage-game matching $m$ is subject to deviations from three kinds of coalitions: (i) a singleton coalition $\{f\}$ where hospital $f$ unilaterally fires a subset of its employees; (ii) a singleton coalition $\{w\}$ where student $w$ leaves her employer and remain unmatched; and (iii) a coalition $\{f\} \cup W$ with $|W | \le q_f$, where hospital $f$ and $W$ match together, abandoning any other pre-existing match partners. A coalitional deviation is profitable if all members of the coalition prefer the deviation to the original matching $m$. A stage-game matching is \term{stable} if none of these three kinds of coalitional deviations can be profitable; it is \term{individually rational} if no singleton coalitions are profitable.\footnote{This notion of stability is stronger than pairwise stability, but weaker than the core. These different solution concepts coincide in static matching environments where preferences satisfy substitutability (see, for example, \citealt{rothsotomayor1992}).} 


In static matching models, there is no need to specify the resulting matching outcome after a deviation. This is because in static matching environments, the profitability of a deviation does not depend on how others respond. However, in repeated matching markets where the past influences the future, we have to specify what outcome is realized after a deviation. 

To this end, let $[m,(f,W)] \in M$ denote the resulting stage-game matching after coalition $\{f\}\cup W$ deviates from stage-game matching $m$. I make the following assumption:
\begin{assumption} \label{assumption: no further deviation}
The stage-game matching $m' = [m, (f, W) ]$ satisfies $m'(f) = W$, and ${m}' (f') = m (f')\backslash W $ for all $f'\ne f$.
\end{assumption}
\cref{assumption: no further deviation} states that members of the deviating coalition are matched together in the resulting stage-game matching; in addition, players abandoned by the deviators remain unmatched, while those untouched by the deviation remain matched as before. One can make alternative assumptions that specify how other players may further deviate within the period after the initial deviation. As long as it is possible to identify the hospital that {initiated} the first deviation, identical results follow.\footnote{\cref{assumption: no further deviation} is only needed for establishing \cref{lemma: identifiability of manipulator} in \cref{section: supp prelim}: it is always possible to identify the hospital responsible for a deviation. As the main application of the current paper, the NRMP keeps extensive records of past matching outcomes. 
}

\paragraph{Repeated Matching Game.} The timing in each period is as follows: at the beginning of the period, a realization $\omega\in \Omega$ is drawn from a public randomization device (by, for example, a centralized matching clearing house); based on $\omega$ and the history of past interactions, a recommended stage-game matching is created for the active players in the market; hospitals and students then decide whether to deviate from this recommendation, which leads to the realized stage-game matching. Note that the public randomization device is \textit{not} intended to represent the random realization of players' preferences.\footnote{In fact preferences are assumed to be pre-determined so far; in \cref{section: large markets}, I augment the stage-game to account for the random preferences realizations.} Instead, it represents the ability of the matching clearing house to randomize over its recommendations.

A $t$-period ex ante histories $h = (\omega_\tau, m_\tau)_{\tau = 0}^{t-1}$ specifies a sequence of past realizations from the public randomization device and matching outcomes, before the randomization at the beginning of period $t$ is drawn. I use $\fhistories_t$ to denote the set of all $t$-period ex ante histories, with $\fhistories_0 = \{\emptyset \}$ the singleton set comprising the initial null history. Let $\fhistories \equiv \cup_{t=0}^\infty \fhistories_t$ be the set of all ex ante histories. $\histories_t \equiv \fhistories_t \times \Omega$ is the set of $t$-period ex post histories, and $\histories \equiv \fhistories \times \Omega$ the set of all ex post histories.

A \term{matching process} proposes a stage-game matching following each ex post history: a {matching process} $\mu$ is  a mapping $\mu: \histories \rightarrow M$. In the context of the NRMP, one can interpret a matching process as proposals from a history-dependent matching protocol. I use $\mu(f|h)$ and $\mu(w|h)$ to denote the match partners for hospital $f$ and student $w$ in the stage-game matching $\mu(h)$, respectively. 

Let $\fhistories_\infty = (\Omega\times M)^\infty$ be the set of {outcomes} of the repeated matching game. For an outcome $h \in \fhistories_\infty$, let $m_{\tau}(h)$ denote the stage game matching in the $\tau$-th period of $h$. Following every $t$-period (ex ante or ex post) history $\hat{h} \in \fhistories \cup \histories$, let
\begin{equation*}
U_f(\hat{h}|\mu) \equiv (1-\delta) \exp_{\mu}\Big[  \sum_{\tau = t}^\infty \delta^{\tau - t} u_f ( m_\tau(h ) ) \, \Big| \, \hat{h} \, \Big].
\end{equation*}
denote the continuation payoff hospital $f$ obtains from $\mu$ following $\hat{h}$, where the expectation is taken with respect to the measure over $\fhistories_\infty$ induced by $\mu$ conditional on $\hat{h}$.

\paragraph{{Deviation Plan}.} In repeated matching markets, unlike in static matching markets, players may participate in deviations that span multiple periods. I make two observations that allow us to tractably analyze these deviations.

Recall that there are three kinds of deviations in the stage game. Our first observation is that a deviation in the stage-game by the singleton coalition $\{f\}$, where $f$ fires a subset of its employees, is equivalent to a deviation by the coalition consisting of $f$ and the students that remain with $f$. It is therefore without loss to focus in the stage game only on (i) deviations by coalitions in the form of $\{w\}$, and (ii) deviations by coalitions in the form of $\{f\}\cup W$.

The second observation is that as a long-lived player, a hospital $f$ can participate in a sequence of deviations, by forming coalitions with students across multiple generations. Each of these coalitions must be immediately profitable for the participating short-lived students, but not necessarily for $f$ since hospitals care about the utility it collects from the entire sequence. 

Motivated by this second observation, I define a \term{deviation plan} for hospital $f$ as a complete contingent plan that specifies, at each ex post history, a set of students with whom $f$ wishes to form a deviating coalition: a {deviation plan} for hospital is a mapping $d_f: \histories \rightarrow  2^{\students}$. Together with the original matching process, a deviation plan generates an altered distribution over the outcome of the game $\fhistories_{\infty}$. Given a matching process $\mu$ and $f$'s deviation plan $d_{f}$, the \term{manipulated matching process}, denoted $[\mu, (f, d_{f})]: \histories \rightarrow \smatchings$, is a matching process defined by:
\begin{equation*}
	[\mu, (f, d_{f})] (h) \equiv \big [\mu(h), (f, d_{f}(h)) \big] \;\;\; \text{for every }\; h\in \histories.	
\end{equation*}
The deviation plan $d_{f}$ is \term{feasible} if $|d_f(h) \le q_f|$ and $f \succ_w \mu(w|h)$ for every ex post history $h$ and $w \in d(h)\backslash \mu(f|h)$: at every history,  any student specified by the deviation plan is either already recommended by $\mu$ to work for $f$, or finds himself strictly better off to work for $f$ against the recommendation by $\mu$. Lastly, $d_f$ is \term{profitable} if there exists ex post history $h$ such that $U_{f} (h|[\mu, (f, d_{f}) ])\allowbreak > U_{f}(h|\mu)$.

For much of the analysis in this paper, I will make use of a special class of deviation plans:  a deviation plan $d_{f}$ is a \term{one-shot deviation} from a matching process $\mu$ if there is a unique ex post history $\hat{h}$ where $d_{f} (\hat{h}) \ne \mu (f|\hat{h})$. 





\paragraph{Self-Enforcing Matching Process.}
The two observations above motivate the solution concept.
\begin{definition} \label{definition:self-enforcing-matching-process}
	A matching process $\mu: \histories \rightarrow \smatchings$ is \term{self-enforcing} if
	\begin{enumerate}
		\item $\mu(w|h) \succeq_w  w$ for every $w$ at every ex post history $h$; and
		\item no hospital $f$ has a deviation plan that is both feasible and profitable.
	\end{enumerate}
\end{definition}

The first requirement in \cref{definition:self-enforcing-matching-process} guards against deviations by singleton students in every generation; the second requirement asks that no hospital can profit from chaining together a sequence of deviating coalitions, each one of which immediately profitable for the deviating students, but not necessarily for the hospital itself. Note that these requirements are imposed at every ex post history, including those that are off path: this embeds sequential rationality in the same way as subgame perfection does in a repeated non-cooperative game.

In one-shot matching market, \cref{definition:self-enforcing-matching-process} coincides with the definition for static stable matchings. Unlike the core, however, \cref{definition:self-enforcing-matching-process} does not allow for deviations that may comprise multiple hospitals. As argued in \cite{roth1991natural}, deviations consisting of multiple hospitals are likely difficult to coordinate, so coalitions with only one hospital are most relevant when modeling the stability of hospital resident matching. In \cite{aliliu} we show that in the absence of transfers, modeling coalitions consisting of multiple long-run players (or hospitals in the current context) does not alter the sustainable outcomes when players are patient.
\bigskip

 \cref{lemma: one-shot-deviation} below establishes a one-shot deviation principle for self-enforcing matching processes: to check for the stability of matching process, instead of checking all deviation plans, it suffices to focus on those that only depart from the matching process at a single history.

\begin{lemma}[One-Shot Deviation Principle] \label{lemma: one-shot-deviation}
	A matching process $\mu$ is self-enforcing if and only if
	\begin{enumerate}
		\item $\mu(w|h) \succeq_w  w$ for every $w$ at every ex post history $h$; and
		\item no hospital $f$ has a one-shot deviation that is both profitable and feasible.
	\end{enumerate}
\end{lemma}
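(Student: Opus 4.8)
The plan is to prove the two implications separately; the forward one is immediate, and the converse is a one‑shot deviation principle in the style of the standard repeated‑games argument, adapted to the cooperative stage game. For the ``only if'' direction, observe that requirement~1 is verbatim requirement~1 of \cref{definition:self-enforcing-matching-process}, and that a one‑shot deviation is a particular deviation plan, so if $\mu$ is self‑enforcing then no feasible one‑shot deviation can be profitable.

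For the ``if'' direction I would argue by contraposition. Suppose $\mu$ is not self‑enforcing. If requirement~1 fails we are done, so assume some hospital $f$ has a feasible, profitable deviation plan $d_f$, witnessed at an ex post history $\hat h$ of period $t$, i.e.\ $U_f(\hat h\mid[\mu,(f,d_f)])>U_f(\hat h\mid\mu)$. Two reductions set things up. First, replace $d_f$ by the plan that follows $d_f$ at weak continuations of $\hat h$ and follows $\mu$ at all other ex post histories: this changes neither feasibility (the pruned coordinates are set to $\mu(f\mid\cdot)$, for which the quota and preference constraints are vacuous) nor $U_f(\hat h\mid\cdot)$, since that quantity depends only on the matching process restricted to continuations of $\hat h$. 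Second, a truncation step using continuity at infinity: since $\students$ is finite the flow utilities $u_f$ are bounded, say by $\bar u$, and since $\delta<1$, the plan $d_f^{(k)}$ obtained from $d_f$ by reverting to $\mu$ at all ex post histories of period $\ge t+k$ is still feasible and satisfies $\big|U_f(\hat h\mid[\mu,(f,d_f)])-U_f(\hat h\mid[\mu,(f,d_f^{(k)})])\big|\le 2\bar u\,\delta^{k}$; hence for $k$ large enough $d_f^{(k)}$ is still profitable at $\hat h$. Call such a plan one of \emph{depth} $\le k$ (relative to $\hat h$).

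The core is a downward induction on $k$. For $k=1$ a depth‑$\le 1$ plan differs from $\mu$ only at the single ex post history $\hat h$ itself, so it is a one‑shot deviation (or else it coincides with $\mu$ and is not profitable), contradicting the hypothesis. For the inductive step, fix a feasible depth‑$\le(k{+}1)$ plan $d_f$ that is profitable at $\hat h$, and consider any period‑$(t{+}k)$ ex post history $h$ extending $\hat h$: on the subtree rooted at $h$ the manipulated process $[\mu,(f,d_f)]$ is either $\mu$ itself or the one‑shot deviation from $\mu$ at $h$ with coalition $d_f(h)$ (feasible because $d_f$ is), so the hypothesis gives $U_f(h\mid[\mu,(f,d_f)])\le U_f(h\mid\mu)$. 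Let $d_f'$ agree with $d_f$ except that $d_f'(h)=\mu(f\mid h)$ at every such $h$; then $d_f'$ is feasible and of depth $\le k$. Writing the recursive decomposition
\[
U_f(\hat h\mid\nu)=(1-\delta)\,\exp_\nu\!\Big[\,\textstyle\sum_{\tau=t}^{t+k-1}\delta^{\tau-t}u_f(m_\tau)\;\Big|\;\hat h\,\Big]+\delta^{k}\,\exp_\nu\!\big[\,U_f(h_{t+k}\mid\nu)\;\big|\;\hat h\,\big],
\]
and noting that $[\mu,(f,d_f)]$ and $[\mu,(f,d_f')]$ agree at all histories of period $<t+k$ (so the first term and the conditional law of the random period‑$(t{+}k)$ ex post history $h_{t+k}$ coincide under both), while $[\mu,(f,d_f')]$ reduces to $\mu$ on every period‑$(t{+}k)$ subtree, we obtain $U_f(\hat h\mid[\mu,(f,d_f')])\ge U_f(\hat h\mid[\mu,(f,d_f)])>U_f(\hat h\mid\mu)$, contradicting the induction hypothesis. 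Hence no feasible depth‑$\le k$ plan is profitable for any $k$, so $d_f^{(k)}$ cannot exist; contradiction, so $\mu$ is self‑enforcing.

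The main obstacle I anticipate is bookkeeping rather than a deep idea: keeping ex ante and ex post histories straight so that ``one‑shot deviation,'' the recursive decomposition, and the conditioning history $h_{t+k}$ all line up; checking that pruning/truncating a deviation plan preserves feasibility; and making the continuity‑at‑infinity estimate precise. Once the notation is fixed, the engine is the peeling step — removing the deepest surviving layer of deviations can only raise $f$'s payoff, precisely because each removed sub‑deviation was a weakly unprofitable one‑shot deviation from $\mu$.
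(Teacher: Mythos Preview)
Your proof is correct and follows essentially the same approach as the paper. The paper's own proof simply observes that the hospital's problem is a single-agent discounted dynamic program and invokes the classical one-shot deviation principle for such problems \citep{blackwell1965discounted} to locate a history where deviating once and then conforming to $\mu$ is strictly profitable; your truncate-and-peel argument is precisely the standard proof of that principle, spelled out in full rather than cited.
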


The proof of \cref{lemma: one-shot-deviation} follows similar arguments to the one-shot deviation principle for repeated non-cooperative games, and is relegated to  \cref{section: supp prelim}. An immediate implication of \cref{lemma: one-shot-deviation} is that self-enforcing matching processes exist at every patience level.

\begin{observation}\label{theorem:existence}
	There exists a self-enforcing matching process for every $0\le \delta<1$.
\end{observation}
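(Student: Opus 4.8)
The plan is to exhibit a single matching process that is self-enforcing for every discount factor, and the natural candidate is the infinite repetition of a static stable matching, which exists in the stage game by Gale–Shapley. Let $m^*$ be a static stable matching (with the convention that it is individually rational, so $m^*(w) \succeq_w w$ for every $w$), and define $\mu^*(h) \equiv m^*$ for every ex post history $h \in \histories$. Thus $\mu^*$ is stationary: it proposes $m^*$ after every history regardless of what has happened. I then need to verify the two conditions of \cref{definition:self-enforcing-matching-process}, and by \cref{lemma: one-shot-deviation} it suffices to check the simpler one-shot-deviation versions.

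The first condition, $\mu^*(w|h)\succeq_w w$ for all $w$ and all $h$, is immediate since $\mu^*(h)=m^*$ and $m^*$ is individually rational by construction. For the second condition, suppose for contradiction that some hospital $f$ has a one-shot deviation $d_f$ that is feasible and profitable. Being a one-shot deviation, $d_f$ differs from $\mu^*(f|\cdot)$ at exactly one ex post history $\hat h$, and the resulting manipulated process agrees with $\mu^*$ at every other history. After the deviation at $\hat h$, Assumption~\ref{assumption: no further deviation} pins down the realized stage-game matching as $m' = [m^*, (f, d_f(\hat h))]$, and from the next period on $\mu^*$ prescribes $m^*$ again. So $f$'s continuation payoff under the manipulated process at $\hat h$ is $(1-\delta)u_f(m') + \delta u_f(m^*)$, while under $\mu^*$ it is $u_f(m^*)$. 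Profitability therefore forces $u_f(m') > u_f(m^*)$, i.e.\ $\tilde u_f(d_f(\hat h)) > \tilde u_f(m^*(f))$. Feasibility of $d_f$ says $|d_f(\hat h)| \le q_f$ and every $w \in d_f(\hat h)\setminus m^*(f)$ strictly prefers $f$ to $\mu^*(w|\hat h) = m^*(w)$. But then the coalition $\{f\}\cup d_f(\hat h)$ is a profitable blocking coalition against the static matching $m^*$ in the stage game, contradicting the stability of $m^*$. (If $d_f(\hat h)\subseteq m^*(f)$ then $m' = m^*$ up to firing, and $\tilde u_f$ being strict and responsive rules out $u_f(m')>u_f(m^*)$, so we may assume some new student is involved.)

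Hence no such profitable feasible one-shot deviation exists, and by \cref{lemma: one-shot-deviation} the process $\mu^*$ is self-enforcing; since nothing in this argument used $\delta$, it holds for every $0\le\delta<1$. I do not anticipate a real obstacle here: the only point requiring minor care is translating a stage-game block against $m^*$ into the deviation-plan language (matching feasibility of $d_f$ with the definition of a stage-game blocking coalition $\{f\}\cup W$, $|W|\le q_f$), and handling the degenerate case where the deviation only fires students — both are routine given responsiveness and strictness of $\tilde u_f$. The one genuinely external input is the existence of a static stable matching, which is the classical Gale–Shapley result for the one-to-many stage game with responsive preferences.
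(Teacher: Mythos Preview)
Your proof is correct and matches the paper's approach exactly: repeat a static stable matching $m^*$ (guaranteed to exist by Gale--Shapley under responsive preferences) after every history, and invoke \cref{lemma: one-shot-deviation} to reduce self-enforcement to the absence of stage-game blocks against $m^*$. One minor slip worth fixing: in your parenthetical handling the case $d_f(\hat h)\subseteq m^*(f)$, strictness and responsiveness alone do not preclude profitable firing (an unacceptable student could in principle be worth dropping); the correct appeal is directly to stability of $m^*$, whose definition already rules out profitable singleton-$\{f\}$ firing deviations.
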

Recall that we assumed hospitals preferences are responsive. Standard results in the static matching literature (see, for example, Theorem 6.5 in \citealp{rothsotomayor1992}) then ensure the existence of stable static matchings; By \cref{lemma: one-shot-deviation}, the infinite repetition of a static stable matching is always a self-enforcing matching process.

\subsection{Top Coalition Sequence} \label{subsection: top coalition sequence}


In the second example in \cref{section: example}, history dependence is powerless to top hospitals due to their immunity to future punishments. In a general matching environment without assuming common student preferences, the appropriate notion of top players is captured by \textit{top coalitions}.

Fix an arbitrary subset of hospitals and students $F\cup W \subseteq \hospitals\cup \students$. A hospital $\hat{f} \in F$ and a set of students $\hat{W} \subseteq W$, $|{\hat{W}}| \le q_f$ form a \term{top coalition} in $F\cup W$ if
\begin{enumerate}
	\item $\tilde{u}_{\hat{f}}\,(\hat{W} ) \ge u_{\hat{f}}\,( W' ) \text{ for all } W' \subseteq W$: $\hat{W}$ is $\hat{f}$'s favorite group of students among $W$; and
	\item $\hat{f} \succeq_w f'$ for all  $w\in \hat{W}$ and $f'\in F\cup\{w\}$: $\hat{f}$ is the favorite hospital for every student in $\hat{W}$.
\end{enumerate}

In other words, $\hat{f}$ and $\hat{W}$ are mutual favorites. The  \term{top coalition sequence} takes this idea further by iteratively finding and eliminating top coalitions in the remaining players, until no new top coalition can be found. 

\begin{definition} \label{definition:top-coalition-sequence} 
The top coalition sequence is the ordered set $\tcs = \{(\hat{f}_1, \hat{W}_{1}), (\hat{f}_2, \hat{W}_{2}), \ldots \}$ produced by the following procedure:\footnote{Whenever it causes no confusion, I will use $\tcs$ to denote both the set of $(f,W)$ pairs and the set of players that show up in those pairs.}
\begin{itemize}
	\item Initialization: Set $\tcs = \emptyset$;
	\item New Phase: 
	\vspace{-6pt}
	\begin{enumerate}
		\item If $(\hospitals \cup \students) \backslash \tcs$ contains no top coalition, stop.
		\item If $(\hospitals \cup \students) \backslash \tcs$ has a top coalition $(\hat{f}, \hat{W})$, add $  (\hat{f}, \hat{W} ) $ to $\tcs$ and restart New Phase.

	\end{enumerate}
\end{itemize}
\end{definition}


Top coalition sequence is related to but distinct from the ``top coalition property'' studied in a variety of cooperative game settings. While the top coalition property is an assumption which requires certain subsets of players to have at least one a top coalition, the top coalition sequence, by contrast, is an object that is constructed from arbitrary player preferences.\footnote{See, for example, \cite{eckhout2000}, \cite{banerjeekonishisonmez}, \cite{pycia2012stability} and \cite{wu2015} for applications of the top coalition property. See \cite{peralta2019} for an example of the application of top coalition sequence in static matching environments.}

Below are some observations on how the composition of the top coalition sequence may depend on the preference configurations in the market. 

\begin{observation} \label{observation:common-ranking}
Suppose all hospitals are acceptable to all students ($f \succ_w w$ for all $w\in \students$):
\begin{enumerate}
	\item If students share a common preference ranking over hospitals, then all players are in the top coalition sequence.
	\item When hospitals share a common utility function over students, the top coalition sequence may be empty.
\end{enumerate}
\end{observation}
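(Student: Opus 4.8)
The plan is to dispatch the two clauses separately; each is short. For clause (1), fix the common preference ranking and write it as $f^{(1)}\succ_w f^{(2)}\succ_w\cdots\succ_w f^{(n)}$, the same list for every $w\in\students$. The engine of the argument is this: at any stage of the procedure in \cref{definition:top-coalition-sequence}, so long as $f^{(1)}$ has not yet been placed in $\tcs$, the hospital $f^{(1)}$ is the \emph{only} one that can sit in a top coalition with a nonempty student set, since every remaining student still ranks $f^{(1)}$ strictly above all other remaining hospitals, so requirement (2) can hold only for $\hat f = f^{(1)}$. (A coalition of the form $(\hat f,\emptyset)$ would require every remaining student to be unacceptable to $\hat f$, which, in a market whose hospitals' combined capacity covers $\students$, cannot happen until the students have all been used up.) Conversely, letting $\hat W_1$ be $f^{(1)}$'s favorite admissible group of students, both requirements are met by $(f^{(1)},\hat W_1)$. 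Hence the first step of the procedure necessarily removes $f^{(1)}$ together with a block of its favorite students; after deletion the common ranking has $f^{(2)}$ on top among the survivors, and the same reasoning applies verbatim. Iterating, $\tcs$ peels off $f^{(1)},f^{(2)},\dots$ in order, each paired with a block of students, so every hospital lands in $\tcs$; and since the hospitals' combined capacity covers $\students$, every student has been absorbed by the time the last hospital is processed, so $\tcs$ exhausts $\hospitals\cup\students$.

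For clause (2) it is enough to exhibit one market. Let $\hospitals=\{f_1,f_2\}$ and $\students=\{w_1,w_2\}$, let both hospitals have quota $2$ and share a single strict utility function under which the favorite group of students is $\{w_1,w_2\}$ (so both students are acceptable to both hospitals), and let $w_1$ rank $f_1$ first and $w_2$ rank $f_2$ first, with both hospitals acceptable to both students. Any top coalition $(\hat f,\hat W)$ must, by requirement (1), have $\hat W=\{w_1,w_2\}$, since that is each hospital's unique favorite group of students. But $f_1$ is not $w_2$'s favorite hospital and $f_2$ is not $w_1$'s favorite hospital, so requirement (2) fails for both candidates $\hat f\in\{f_1,f_2\}$. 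Therefore $\hospitals\cup\students$ contains no top coalition, the procedure halts at once, and $\tcs=\emptyset$.

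Neither clause is hard, and I would not expect a real obstacle. The one place that needs attention is clause (1): \cref{definition:top-coalition-sequence} permits an arbitrary selection of top coalition at each step, so one must check that the common ranking actually forces the choice (up to which of a hospital's favorite students are taken) and that no student is left stranded. The degenerate $(\hat f,\emptyset)$ coalitions and any residual capacity are the only bookkeeping subtleties, and they are handled by the parenthetical remarks above.
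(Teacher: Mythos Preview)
Your approach matches the paper's on both clauses: clause (1) is handled by the same iterative elimination along the common preference list, and clause (2) by a counterexample. The paper's counterexample uses four students rather than two---$\hospitals=\{f_1,f_2\}$, $\students=\{w_1,w_2,w_3,w_4\}$, each hospital with quota $2$ and common utility $u_f(w_i)=i$, with odd-indexed students preferring $f_1$ and even-indexed preferring $f_2$---so that the shared favorite pair $\{w_3,w_4\}$ is split between the two hospitals. Your two-student version is smaller and works for the identical reason.

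One minor caution on clause (1): the hypothesis that combined capacity covers $\students$ is not part of the statement. You are right that without it (and without the tacit assumption that every student is acceptable to some hospital) the literal claim that \emph{all students} land in $\tcs$ can fail; the paper leaves this implicit. But your parenthetical conflates capacity with acceptability---the capacity bound alone does not preclude $(\hat f,\emptyset)$ coalitions, since those arise when the remaining students are unacceptable to $\hat f$, not when seats run short. The capacity bound is really only needed for your final sentence, to ensure no student is stranded once the last hospital is processed.
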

The first observation above follows from iterative elimination of the top hospital along the students' shared preference list, just like in the second example in \cref{section: example}. The second observation is illustrated through the following example.
\begin{example}
	$\hospitals= \{f_1, f_2\}$ and $\students = \{w_1, w_2, w_3, w_4\}$. The two hospitals $f_1$ and $f_2$ are identical: both have capacity $q_{f_1}=q_{f_2}=2$ and share a common utility function $u_{f_1}(w_i) = u_{f_2}(w_i) = i$. Suppose $\succ_{w_i} = f_1, f_2, w_i $ if $i$ is odd, and $\succ_{w_i} = f_2, f_1, w_i $ if $i$ is even.
	
	The procedure in \cref{definition:top-coalition-sequence} stops at the first step: both hospitals point to $\{w_3, w_4\}$ as their favorite students, but since neither $f_1$ nor $f_2$ is the favorite for both $\{w_3, w_4\}$, there is no top coalition and $\tcs=\emptyset$.
\end{example}

\subsection{The Limit of Self-Enforcement} \label{subsection: repeated results}

The results in this section explore the extent to which history dependence can be used to alter the matches obtained by players in the top coalition sequence versus those that are not.


\paragraph{Impossible to Motivate Top Coalition Sequence.} As \cref{theorem:tcs} shows, if a student is in the coalition sequence, then even in a repeated matching market and when hospitals are patient, it is impossible to send her to a rural hospital without creating instability.

\begin{theorem} \label{theorem:tcs}
Suppose $(\hat{f}, \hat{W})$ is in the top coalition sequence, then $\hat{f}$ is matched to $\hat{W}$ in all static stable matchings. Moreover, for every $0<\delta<1$, $\hat{f}$ is matched to $\hat{W}$ in every self enforcing matching process at every ex post history.
\end{theorem}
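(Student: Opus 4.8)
The plan is to prove the two claims by induction on the position of $(\hat{f}, \hat{W})$ in the top coalition sequence $\tcs$. Fix the ordering $\tcs = \{(\hat{f}_1, \hat{W}_1), (\hat{f}_2, \hat{W}_2), \ldots\}$ from \cref{definition:top-coalition-sequence}, and let $(\hat{f}_k, \hat{W}_k)$ be the $k$-th pair. The inductive hypothesis is: in every self-enforcing matching process and at every ex post history, every hospital $\hat{f}_j$ with $j < k$ is matched exactly to $\hat{W}_j$ (and symmetrically each $w \in \hat{W}_j$ is matched to $\hat{f}_j$); the base case $k=1$ is the substantive content and will be proved the same way as the inductive step once the already-resolved pairs are treated as ``removed'' from the market.

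For the inductive step, fix a self-enforcing matching process $\mu$ and an ex post history $h$; by the inductive hypothesis the players in $(\hat{f}_1, \hat{W}_1), \ldots, (\hat{f}_{k-1}, \hat{W}_{k-1})$ are matched among themselves at every history, so effectively the active market after history $h$ is $(\hospitals \cup \students) \setminus \{(\hat{f}_j, \hat{W}_j) : j < k\}$, and $(\hat{f}_k, \hat{W}_k)$ is a top coalition of this residual set. The key observation is that $\hat{f}_k$ can guarantee itself its maximal possible stage-game payoff $\tilde{u}_{\hat{f}_k}(\hat{W}_k)$ in every period from any history onward: at any ex post history $h'$, if $\mu(\hat{f}_k | h') \ne \hat{W}_k$, then since (i) every $w \in \hat{W}_k$ ranks $\hat{f}_k$ above every other hospital active in the residual market and above being unmatched, and (ii) by the inductive hypothesis no $w \in \hat{W}_k$ is matched to any $\hat{f}_j$, $j<k$ (those are occupied by $\hat{W}_j$), every such $w$ strictly prefers $\hat{f}_k$ to $\mu(w|h')$ — so the one-shot deviation $d_{\hat{f}_k}$ with $d_{\hat{f}_k}(h') = \hat{W}_k$ is feasible. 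Here I will need \cref{assumption: no further deviation} (and the remark that the instigating hospital is identifiable, i.e. \cref{lemma: identifiability of manipulator}) to be sure the deviation indeed delivers $\hat{W}_k$ to $\hat{f}_k$ and that the continuation is a well-defined matching process.

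Now suppose toward a contradiction that at some history $h$ we have $\mu(\hat{f}_k | h) \ne \hat{W}_k$. Since $\tilde{u}_{\hat{f}_k}$ is strict and responsive and $\hat{W}_k$ is $\hat{f}_k$'s favorite group among the residual students, $u_{\hat{f}_k}(m) < \tilde{u}_{\hat{f}_k}(\hat{W}_k)$ for every stage-game matching $m$ with $m(\hat{f}_k) \ne \hat{W}_k$ in the residual market; let $\varepsilon > 0$ be the gap between $\tilde{u}_{\hat{f}_k}(\hat{W}_k)$ and the largest such $u_{\hat{f}_k}(m)$. The deviation plan that plays $d_{\hat{f}_k}(h') = \hat{W}_k$ at \emph{every} ex post history $h'$ is feasible by the argument above, and under the manipulated process $\hat{f}_k$ gets $\tilde{u}_{\hat{f}_k}(\hat{W}_k)$ in every period, hence continuation payoff exactly $\tilde{u}_{\hat{f}_k}(\hat{W}_k)$ from history $h$; but $U_{\hat{f}_k}(h|\mu) \le (1-\delta)\big(\tilde{u}_{\hat{f}_k}(\hat{W}_k) - \varepsilon\big) + \delta\, \tilde{u}_{\hat{f}_k}(\hat{W}_k) = \tilde{u}_{\hat{f}_k}(\hat{W}_k) - (1-\delta)\varepsilon < \tilde{u}_{\hat{f}_k}(\hat{W}_k)$ since $\mu(\hat{f}_k|h)\ne \hat{W}_k$ caps the period-$t$ flow at $\tilde{u}_{\hat{f}_k}(\hat{W}_k) - \varepsilon$. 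So the always-poach deviation is profitable, contradicting self-enforcement. (Alternatively one can invoke \cref{lemma: one-shot-deviation} directly: a single-period poach at $h$ already raises $\hat{f}_k$'s current flow by $\varepsilon$ while the continuation value from the next period on is bounded below by its always-achievable guarantee, but the cleanest route is the stationary always-poach plan above, which needs no delicate bookkeeping of continuation values.) This forces $\mu(\hat{f}_k|h) = \hat{W}_k$ at every $h$, completing the induction; the static stable matching claim is the special case obtained by viewing a static stable matching as the infinite repetition of itself, which is self-enforcing by \cref{lemma: one-shot-deviation}.

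The main obstacle I anticipate is making the inductive hypothesis interact cleanly with feasibility of the deviation plan: I must rule out that some $w \in \hat{W}_k$ is matched by $\mu$ at history $h'$ to a hospital $\hat{f}_j$ with $j < k$ that $w$ might prefer to $\hat{f}_k$ — but this cannot happen precisely because the inductive hypothesis pins each such $\hat{f}_j$ to exactly $\hat{W}_j \not\ni w$, so $w$'s match is always some residual-market hospital (or unmatched), all of which $w$ ranks strictly below $\hat{f}_k$ by the top-coalition condition. One should also check that ``residual market'' reasoning is legitimate at \emph{off-path} histories, which it is because \cref{definition:self-enforcing-matching-process} imposes its requirements at every ex post history; and that the base case $k=1$ needs no hypothesis at all since $(\hat{f}_1, \hat{W}_1)$ is a top coalition of the full set $\hospitals \cup \students$.
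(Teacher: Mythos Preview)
Your proposal is correct and follows essentially the same approach as the paper: induction on the position in the top coalition sequence, using the ``always poach $\hat{W}_k$'' deviation plan and the inductive hypothesis to guarantee feasibility. Two small notes: your invocation of \cref{lemma: identifiability of manipulator} is unnecessary here (only \cref{assumption: no further deviation} is needed to ensure the deviation delivers $\hat{W}_k$), and your derivation of the static claim as a corollary of the repeated claim---via embedding a static stable matching as its infinite repetition---is a slight streamlining over the paper, which proves the static case directly by exhibiting the blocking coalition.
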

\cref{theorem:tcs} states that the hospitals and students in the top coalition sequence are always matched together in both static and repeated matching markets. This holds in a stark sense in repeated matching markets since it applies regardless of hospitals' patience and after every ex post history, including those that are off-path.



Here are the key steps for establishing \cref{theorem:tcs}: first, $(\hat{f}, \hat{W})$ being mutual favorites implies that they must be matched together in any static stable matching. Second, in a repeated matching market,  $(\hat{f}, \hat{W})$ being mutual favorites further implies that $\hat{f}$ is not punishable through continuation value: whenever a matching process recommends $\hat{f}$ to match with $W\ne \hat{W}$, everyone in $\hat{W}$ is willing to deviate with $\hat{f}$, so $\hat{f}$'s continuation value cannot be lower than $u_{\hat{f}}(\hat{W})$; at the same time, $\hat{f}$'s continuation value also cannot be higher than $u_{\hat{f}}(\hat{W})$, so $\hat{f}$'s continuation value must be precisely $u_{\hat{f}}(\hat{W})$, no matter what happens in the current period. Without credible changes in continuation value, $\hat{f}$ behaves just like a short-lived player, so it must always match with $\hat{W}$ at every ex post history. Finally, an inductive argument extends this logic to the entire top coalition sequence.

%
%

\paragraph{Difference from Standard Folk Theorem.} The ``impossibility'' implication from \cref{theorem:tcs} stands in contrast with what one might expect from standard folk theorems for repeated games, where many outcomes are sustainable at high patience levels---here, the matching outcome for players in $\tcs$ is unique no matter how high the  patience is. 


To understand why, think of the hospitals as players who choose groups of students as their ``actions''.\footnote{See the Conclusion and  \cref{section: product structure} for more detailed discussions on this alternative representation of the stage game, and how it relates to cooperative game form of two-sided matching games.} Note that whenever a stage-game matching $m$ is recommended, a hospital $f$ can always deviate from this recommendation by choosing from students in $D_f(m) \equiv m(f) \cup \{w \in \students:  f\succ_w m(f) \}$. At the same time, a self-enforcing matching process can only credibly recommend stage-game matchings in $\smatchings^{\circ}\equiv\{m\in \smatchings: m \succeq_w w \text{ for all } w\in \students\} $.
One way to define the minmax payoff for each hospital $f$ is then
\begin{equation} \label{equation: wrong-minmax}
	\underline{u}_{f} \equiv \min_{m \in \smatchings^{\circ}} \max_{ W \subseteq D_f(m) } u_f(W).	
\end{equation}
A folk theorem would state that any payoff profile giving every hospital $f$ strictly higher than its respective $\underline{u}_{f}$ can be sustained through a self-enforcing matching process as $\delta \rightarrow 1$. 

However, the problem with this approach is that whenever $\tcs$ is nonempty and contains, say, $(\hat{f}, \hat{W})$ as its first element, then $\hat{W} \subseteq D_{\hat{f}}(m)$ for all $m \in \smatchings^{\circ}$. According to \cref{equation: wrong-minmax}, hospital $\hat{f}$'s minmax payoff satisfies 
\begin{equation*} 
	\underline{u}_{\hat{f}} = \min_{m \in \smatchings^{\circ}} \max_{ W \subseteq D_f(m) } u_f(W) \ge u_{\hat{f}}(\hat{W}) = \max_{W \subseteq \students } u_{\hat{f}}(W).	
\end{equation*}
In other words, $\hat{f}$'s minmax payoff is identical to its highest feasible payoff, so the set of feasible payoff profiles giving $\hat{f}$ \textit{strictly} higher than $\underline{u}_{\hat{f}}$ is an empty set. A top coalition hospital essentially has an ``action'' that guarantees itself the highest possible payoff from the stage game independent of the ``actions'' of other hospitals, and the folk theorem is always vacuous for such payoff structures.


Moreover, \cref{equation: wrong-minmax} also leads to incorrect minmax payoffs for hospitals that are not in $\tcs$. In light of \  \cref{theorem:tcs}, some stage-game matchings in $\smatchings^{\circ}$ fails to account for the constraint that any credible recommendation must match $\hat{f}$ with $\hat{W}$. As a result, \cref{equation: wrong-minmax} incorrectly assumes that $\hat{f}$'s hiring capacity can be used when punishing other hospitals, which underestimates the minmax payoffs for hospitals other than $\hat{f}$.

It is worth noting that the subtlety introduced by the top coalition sequence is different from the failure of full dimensionality that is studied in \cite{wen1994folk} and \cite{FLT}. In \cite{wen1994folk} and \cite{FLT}, there is a non-empty set of payoff profiles that are feasible and give each player strictly higher than her minmax; however, this set lacks dimensionality due to long-run players having aligned preferences. The presence of top coalition sequence, by contrast, 
leads to an empty set of such payoff profiles. The construction of top coalition sequence in \cref{definition:top-coalition-sequence} is an interative process of finding degenerate payoff dimensions while recalibrating minmax payoff for remaining hospitals, until we arrive at a reduced game without new top coalitions.


\paragraph{Modified Folk Theorem in the Reduced Game.} Motivated by the discussion above, let us introduce a few notations that are useful for analyzing the reduced game after the top coalition sequence has been romoved. Let $\reduced \equiv (\hospitals\cup \students)\backslash \tcs$ denote the players that are not in the top coalition sequence, and $\redmatching$ denote the set of stage-game matchings that ensure the top coalition sequence is matched together:
\begin{equation*}
	\redmatching \equiv \big\{m\in \smatchings: m(\hat{f}) = \hat{W} \text{ for all } (\hat{f},\hat{W} ) \in \tcs \big\}.
\end{equation*}
Let $\redmatching ^{\circ}$ denote the stage-game matchings in $\redmatching$ that are individually rational for the students. 
For every hospital $f\in \hospitals \cap \reduced$, its \textit{reduced-game minmax} is 
\begin{equation} \label{equation:reduced-minmax}
	\underline{u}^{\reduced}_{f} \equiv \min_{m \in \redmatching^{\circ}} \max_{ W \subseteq D_f(m) } u_f(W).	
\end{equation}
Notice that for every hospital $f\in \hospitals \cap \reduced$, its reduced-game minmax is higher than the naive calculation in \cref{equation: wrong-minmax}, since the minimization is taken over a more restricted set of stage-game recommendations. Finally, let $\Lambda^* \equiv \{\lambda \in \Delta(\redmatching^{\circ}): u_f({\lambda}) > \underline{u}^{\reduced}_f \text{ for all } f \in \hospitals\cap\reduced \}$ denote the randomizations over $\redmatching^{\circ}$ that secure each hospital strictly higher than its reduced-game minmax.


In contrast to \cref{theorem:tcs}, \cref{theorem:folk-theorem} shows that history dependence can be used to change the matches obtained by players outside of the top coalition sequence: every random matching in  $\Lambda^*$ can be sustained on path in a stationary manner in a self-enforcing matching process.

\begin{theorem}\label{theorem:folk-theorem}
For every $\lambda \in \Lambda^*$, there is a $\underline{\delta}$ such that for every $\delta\in(\underline{\delta},1)$, there exists a self-enforcing matching process that randomizes according to $\lambda$ in every period. 
\end{theorem}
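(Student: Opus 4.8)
The plan is a carrot-and-stick construction in the spirit of Abreu's simple strategy profiles. Fix $\lambda\in\Lambda^*$ and write $v_f\equiv u_f(\lambda)$. Since $\redmatching^{\circ}$ is a finite nonempty set—it contains every static stable matching, which is individually rational and, by the first part of \cref{theorem:tcs}, matches each $(\hat f,\hat W)\in\tcs$ together—for each $f\in\hospitals\cap\reduced$ we may fix a \emph{minmaxing matching} $m_f\in\argmin_{m\in\redmatching^{\circ}}\max_{W\subseteq D_f(m)}u_f(W)$, so that $\max_{W\subseteq D_f(m_f)}u_f(W)=\underline{u}^{\reduced}_f$. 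I would then build a matching process $\mu$ with an on-path phase in which the public randomization device draws a recommendation from $\lambda$ every period, and, for each $f\in\hospitals\cap\reduced$, a punishment phase $P_f$ in which $m_f$ is recommended. Transitions: follow the current phase; whenever a hospital is identified as having initiated a deviation—possible by \cref{lemma: identifiability of manipulator} under \cref{assumption: no further deviation}—switch to that hospital's punishment phase if it is reduced, and ignore the (never profitable) deviation if it is a top-coalition hospital. By construction $\lambda$ is played in every period on path, as the theorem demands.

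To verify self-enforcement I would apply the one-shot deviation principle (\cref{lemma: one-shot-deviation}) at every ex post history. Condition~1 (students individually rational) holds everywhere because every recommendation lies in $\redmatching^{\circ}\subseteq\smatchings^{\circ}$ and, after any \emph{feasible} deviation, abandoned students become unmatched while switching students strictly prefer their new hospital. Condition~2 splits into three cases. (i) A top-coalition hospital $\hat f$ is always recommended $\hat W$, and because no student of an earlier top coalition prefers $\hat f$ to her recommended match, $\hat W\subseteq D_{\hat f}(m)\subseteq$ (the set of students available to $\hat f$ when it enters $\tcs$), over which $\hat W$ is $\hat f$'s favorite group; hence $\hat f$ has no profitable one-shot deviation, whatever the continuation. (ii) On the path, a one-shot deviation by a reduced hospital $f$ earns a one-period gain bounded by the payoff range and then triggers $P_f$, from which $f$ cannot escape and in which its flow never exceeds $\underline{u}^{\reduced}_f<v_f$; making $P_f$ long relative to $\min_{f\in\hospitals\cap\reduced}(v_f-\underline{u}^{\reduced}_f)$ and taking $\delta$ near $1$ kills it. (iii) The remaining—and delicate—case is a deviation by a hospital $g$ during another hospital $f$'s punishment phase.

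The main obstacle is precisely this cross-deviation step. The recommendation $m_f$ is chosen to hold $f$ down, not to shield $g$, so $g$'s flow payoff during $P_f$ can drop below its own reduced-game minmax $\underline{u}^{\reduced}_g$, and in principle $g$ might deviate in order to redirect the market to a continuation it prefers. Since the theorem assumes no full-dimensionality condition (which would otherwise let one choose the $m_f$'s so that every other hospital stays above its own minmax during $P_f$), I would close the gap with Abreu-style optimal penal codes: take each $P_f$ to be (approximately) the \emph{worst} self-enforcing continuation for $f$. Then no hospital ever strictly gains by steering an ongoing punishment toward its own phase, because by definition its own phase delivers its lowest attainable self-enforcing value; the only binding incentive constraints are the on-path ones handled in case~(ii), with punishments re-triggered on any deviation so that deviating never shortens a punishment. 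The argument then rests on two lemmas: that the worst self-enforcing continuation value for each $f\in\hospitals\cap\reduced$ is at most $\underline{u}^{\reduced}_f$—here the definition \eqref{equation:reduced-minmax}, with its restriction to $\redmatching^{\circ}$ (keeping $\tcs$ intact) and its accounting for $f$'s poaching set $D_f(\cdot)$, is exactly what makes ``recommend $m_f$ with re-trigger'' a self-enforcing process pinning $f$ near $\underline{u}^{\reduced}_f$; and that the resulting simple profile satisfies the one-shot deviation inequalities uniformly in the history once $\delta>\underline\delta(\lambda)$. The rest is routine discount-factor bookkeeping.
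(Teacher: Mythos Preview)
Your diagnosis of the cross-deviation problem is correct, but the proposed fix has a genuine gap. The process ``recommend $m_f$ with re-trigger'' is not self-enforcing: against the recommendation $m_f$, hospital $f$'s best feasible one-shot deviation earns $\underline{u}^{\reduced}_f=\max_{W\subseteq D_f(m_f)}u_f(W)\ge u_f(m_f)$, typically strictly, while the re-trigger leaves $f$'s continuation unchanged, so $f$ deviates. Even if you refine $m_f$ so that $f$ is already playing its best response (one can check this modified matching stays in $\redmatching^{\circ}$ with the same $D_f$), the construction still breaks for a \emph{different} hospital $g$: nothing prevents $u_g(m_f)<\underline{u}^{\reduced}_g$, in which case for high $\delta$ hospital $g$ strictly prefers to trigger $P_g$ and collect $\underline{u}^{\reduced}_g$ forever rather than stay in $P_f$. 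More broadly, the appeal to Abreu-style optimal penal codes is circular here. Every self-enforcing process gives $f$ at least $\underline{u}^{\reduced}_f$ (consider the deviation plan that best-responds in each period), so your first lemma really asserts that the worst self-enforcing value \emph{equals} $\underline{u}^{\reduced}_f$; exhibiting a self-enforcing continuation that attains this lower bound is precisely the construction the folk theorem demands, and Abreu's bootstrap presupposes it rather than supplies it.

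The paper closes the gap with an ingredient you dismissed too quickly. It observes that the reduced game automatically satisfies the Non-Equivalent Utilities condition of \citet{abreuduttasmith}: leave any $f\in\hospitals\cap\reduced$ unmatched and $f$ is indifferent to how another $f'\in\hospitals\cap\reduced$ matches, so no two reduced hospitals' payoffs over $\redmatching^{\circ}$ can be affine transformations of one another. NEU then delivers player-specific \emph{reward} lotteries $\{\lambda^f\}\subseteq\Lambda^*$ with $u^f_f<u^0_f$ and $u^f_f<u^{f'}_f$ for all $f'\neq f$. The process follows \citet{fudenberg1986folk}: a deviation by $f$ triggers $L$ periods of the minmax recommendation $\underline m_f$, after which play moves permanently to $\lambda^f$. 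The carrot after the stick is exactly what your construction lacks---it gives $f$ a reason to endure its own minmax (complying ends the punishment; deviating restarts it), and it handles cross-deviations because any $g\neq f$ strictly prefers the eventual $\lambda^f$ to steering play into its own phase and ending at $\lambda^g$. You set this route aside on the ground that no full-dimensionality condition is assumed; the point is that NEU holds for free in the reduced matching game and is enough.
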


The first step of the proof is showing that hospitals' payoffs in the reduced game always satisfy the Non-Equivalent Utilities (NEU) condition: no hospital's payoff can be a positive affine transformation of another \citep{abreuduttasmith}.  The proof then uses this condition to construct player-specific punishments to deter deviations. Given these punishments, the final step adapts the construction from \cite{fudenberg1986folk} to show that the payoff profile corresponding to any $\lambda \in \Lambda^*$ can be sustained in a self-enforcing matching process when hospitals are patient.


%

\section{Large Market Analysis} \label{section: large markets}

We see in \cref{section: repeated market} that certain hospitals are untouchable, while the others can be motivated through history dependence. But how big is the untouchable set, relative to those that are adjustable? The goal of this section is to quantify asymptotically the size of these two kinds of hospitals. In order to do this, I build on the repeated matching model introduced in \cref{section: repeated market}, but augment it with randomly drawn students and focus on large market analysis. \cref{subsection: lm setup} introduces this setup; \cref{subsection: lm asymp} characterizes the large market asymptotics in this environment.

\subsection{The Setup} \label{subsection: lm setup}
I consider a sequence of market sizes $n$, letting $n$ diverge to infinity. In a market of size $n$, the stage game consists of $n$ hospitals $\hospitals_n$, each with hiring quota $q$,\footnote{The assumption that each hospital has identical quota is only for convenience. The same results continue to hold if each hospital has a different quota.} and students $\students_n$, where $\abs{\students_n} = \lceil \beta nq \rceil$ with $\beta>0$. Let $\smatchings_n$ denote the set of stage-game matchings among $\hospitals_n$ and $\students_n$.



I use the finite-tier random preference model to capture positive preference correlations that arise from quality differentiations in the market.\footnote{See, for example, \cite{ashlagi2017} for an ordinal version of the multiple-tiered preferences, and \cite{che2019efficiency} for a cardinal parameterization of this class of preferences. See also \cite{lee2016} for a different way of modeling preference correlation without ``tiers.''} For every $n$, hospitals can be partitioned into $K$ quality classes $\hospitals_n = \{ \hospitals_n^1, \hospitals_n^2, \ldots, \hospitals_n^K \}$. Every student prefers a hospital from a higher quality class to those from a lower quality class; but each student's preference ranking over hospitals within the same quality class $\hospitals_n^k$ is drawn uniformly from all permutations of $\hospitals_n^k$. Let $\pi_n$ denote a realization of student preferences that are compatible with this restriction. I assume that the proportion of tier-$k$ hospitals, $\abs{\hospitals^k_n}/n$, converges to $x_k \ge 0$ for $1\le k \le K$.

Similarly, students can be partitioned into $L$ quality classes $\students_n = \{\students^1_n, \students^2_n, \ldots, \students^L_n \}$. When a hospital $f$ matches with student $w\in \students^l_n $, the hospital receives
\begin{equation*}
	\tilde{u}_f(w) = V(C_l, \zeta_{f,w} ),
\end{equation*}
where $C_l$ is the common value shared by all students in $\students_n^l$ satisfying $C_l > C_{l'}$ for all $l<l'$, and $\zeta_{f,w}$ is the idiosyncratic match quality between $f$ and $w$. I assume that the quality component for each tier, $C_l$, is constant over time, while the idiosyncratic components $\zeta_{f,w}$ are drawn independently for every student in each cohort from the uniform distribution over $[0,1]$. $V(.,.)$ is a continuous and strictly increasing function from $\Re^2_+$ to $\Re_+$, and satisfies $V(C_l, 0) > V(C_{l'},1)$ for all $l < l'$ (so there is no overlap between tiers). Hospitals have additive utilities for each hiring slot: $\tilde{u}_f(W) = \sum_{w\in W} \tilde{u}_f(w) $, and derive zero utility from unfilled positions. Let $ \zeta_n =  \{\zeta_{f,w} \}_{f\in \hospitals_n, w\in \students_n}$ denote a realization of the matrix of idiosyncratic match qualities. I assume that the proportion of tier-$l$ students, $\abs{\students^l_n}/ \abs{\students_n} $, converges to $y_l \ge 0$ for $1\le l \le L$.

The timing in each period is as follows: first, a new cohort of students arrive and preferences $\pi_n$ and $\zeta_n$ are realized; the public randomization $\omega \in \Omega$ is then realized; based on the realization of $(\pi_n, \zeta_n, \omega_n)$, a stage-game matching is recommended for $\hospitals_n$ and $\students_n$; players then decide whether to deviate from this recommendation, which determines the outcome of the stage-game. I will refer to the realization of $s_n = (\pi_n, \zeta_n, \omega_n ) \in S_n$ together as a \term{state}. The notions of ex ante and ex post histories, introduced in \cref{section: repeated market}, are modified accordingly with $s_n$ replacing $\omega$.



\subsection{Which Hospitals Are Untouchable?} \label{subsection: lm asymp}
I find that in repeated large matching markets, the vast majority of hospitals can be motivated dynamically to voluntarily reduced their capacities, making more residents available to hospitals that would otherwise struggle to fill their hiring quotas. The only exceptions, should they exist, are elite hospitals that by definition only makes up a vanishingly small fraction of the market.

Specifically, I say the top class of hospitals $\hospitals^1_n$ are elite hospitals if their size is vanishing relative to the top class of students.

\begin{definition}
	$\hospitals^1_n$ is an elite quality class if $\abs{\hospitals^1_n}/\abs{\students^1_n} \rightarrow 0$ as $n\rightarrow \infty$.
\end{definition}

A hospital in $\hospitals^1_n$ faces no vertical competition, since it dominates hospitals from all other lower quality classes. If $\hospitals^1_n$ is in addition an elite hospital class, then it also faces diminishing horizontal competition from within the same quality class. Note that elite hospitals may not necessarily exist. For example, if $\lim_{n\rightarrow\infty} \hospitals^1_n/n >0 $, then no hospitals in the market would qualify as elite hospitals.

\cref{theorem: cannot reduced capacity} shows that no matter how patient the hospitals are, when market size is large, it would be difficult to reduce elite hospitals' hiring capacities through history dependence.

\begin{theorem} \label{theorem: cannot reduced capacity}
Suppose $\hospitals^1_n$ is an elite hospital class, then for every discount factor $0<\delta < 1 $, there exists $N$ such that for all $n \ge N$, every self-enforcing matching process $\mu$ satisfies $ \abs{\mu(f|h)} = q$ for all $f\in \hospitals^1_n$ and all ex post history $h$.
\end{theorem}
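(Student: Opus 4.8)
The plan is to mirror the logic of \cref{theorem:tcs} in the large-market setting: show that an elite hospital $f\in\hospitals^1_n$ can guarantee itself a continuation value arbitrarily close to its highest feasible payoff $q\,V(C_1,1)$ at \emph{every} history and for \emph{every} self-enforcing $\mu$, so that no continuation play can be cheap enough to induce $f$ to leave a slot unfilled.

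\textit{Step 1: a secure deviation plan.} The crux is \emph{vanishing horizontal competition}. At any ex post history $h$, the recommendation $\mu(h)$ can place at most $(\abs{\hospitals^1_n}-1)\,q$ top-tier students with $f$'s rivals in $\hospitals^1_n$; every other student in $\students^1_n$ is unmatched, matched to a lower-tier hospital, or already matched to $f$, and in each case she weakly (strictly, unless already with $f$) prefers to join $f$, since all students rank tier-$1$ hospitals above everything else. Because $\abs{\hospitals^1_n}/\abs{\students^1_n}\to 0$ -- which also forces $\abs{\students^1_n}\to\infty$ -- for $n$ large there are at least $q$ such ``poachable'' top-tier students at every history, so the deviation plan $d_f^{*}$ that at every history has $f$ hire the $q$ poachable top-tier students with the highest realized idiosyncratic qualities is feasible. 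Worst-casing over how $\mu$ might hide the best students with rivals, each of the $q$ students hired under $d_f^{*}$ has idiosyncratic quality at least the $(\abs{\hospitals^1_n}q)$-th largest among the $\abs{\students^1_n}$ i.i.d.\ uniform draws of that cohort; since $\abs{\hospitals^1_n}q/\abs{\students^1_n}\to 0$, this order statistic converges in probability to $1$, so with $g_n\coloneqq\exp\!\big[V\big(C_1,\zeta_{(\abs{\hospitals^1_n}q)}\big)\big]$ we have $g_n\to V(C_1,1)$ and $f$'s expected stage payoff under $d_f^{*}$ in any period is at least $q\,g_n$. As states are drawn i.i.d.\ across periods, it follows that, starting from any ex post history and playing $d_f^{*}$, $f$'s continuation value has expectation at least $q\,g_n$ over the next period.

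\textit{Step 2: the contradiction.} Suppose a self-enforcing $\mu$ had $\abs{\mu(f\mid h_0)}\le q-1$ at some $h_0$. Then $f$ can hire one more poachable top-tier student, say $w^{*}\notin\mu(f\mid h_0)$ (one exists since there are at least $q$ poachable top-tier students). Consider the feasible deviation plan that at $h_0$ has $f$ hire $\mu(f\mid h_0)\cup\{w^{*}\}$ and at every other history follows $d_f^{*}$. By additivity of $u_f$ and Step 1 its payoff at $h_0$ is at least $(1-\delta)\big[u_f(\mu(f\mid h_0))+V(C_1,\zeta_{f,w^{*}})\big]+\delta\,q\,g_n$, whereas $U_f(h_0\mid\mu)\le(1-\delta)\,u_f(\mu(f\mid h_0))+\delta\,q\,V(C_1,1)$ since every continuation payoff is at most $q\,V(C_1,1)$. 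Non-profitability forces $(1-\delta)\,V(C_1,\zeta_{f,w^{*}})\le\delta\,q\,\big(V(C_1,1)-g_n\big)$, whose right-hand side tends to $0$. But $V(C_1,\zeta_{f,w^{*}})\ge V(C_1,0)>0$ whenever there is more than one student tier (then $V(C_1,0)>V(C_2,1)>0$), so choosing $N$ with $\delta\,q\,(V(C_1,1)-g_n)<(1-\delta)\,V(C_1,0)$ for all $n\ge N$ yields a contradiction; with a single student tier the same inequality holds on every positive-probability history, because only $o(\abs{\students^1_n})$ top-tier students can have idiosyncratic quality below any fixed level, so $\zeta_{f,w^{*}}$ is bounded away from $0$. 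Hence $\abs{\mu(f\mid h)}=q$ for every $f\in\hospitals^1_n$ and every $h$.

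I expect the uniform lower bound of Step 1 to be the main obstacle: it must hold at \emph{every} history -- including off-path histories and those reached after unlucky preference draws -- even though $\mu$ is free to push the highest-quality students toward $f$'s rivals. This is exactly what the elite condition $\abs{\hospitals^1_n}/\abs{\students^1_n}\to0$ buys: whatever $\mu$ does, $f$ can still poach all but an asymptotically negligible top fraction of $\students^1_n$, so a one-sided order-statistics estimate pins $f$'s secured per-period payoff near $q\,V(C_1,1)$, and the current-period gain from filling a slot then dominates any continuation loss once $n$ is large.
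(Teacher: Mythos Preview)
Your argument is correct and follows the same two-step template as the paper: first show that any elite hospital $f$ can secure a continuation value arbitrarily close to $qV(C_1,1)$ at every ex ante history, then argue that leaving a slot unfilled at any ex post history creates a strictly profitable deviation. The execution of Step~1 differs. The paper's secure deviation has $f$ poach from the set of tier-1 students who \emph{rank $f$ as their overall favorite} and have $\zeta_{f,w}$ close to $1$; this set is $\mu$-independent, and a Chernoff bound shows it exceeds $q$ with high probability, from which the continuation bound follows. You instead have $f$ poach the best $q$ students among those \emph{not currently placed at a rival tier-1 hospital}; this set is $\mu$-dependent, but you worst-case it via the deterministic count $(\abs{\hospitals^1_n}-1)q$ together with the order-statistic convergence $\zeta_{[\abs{\hospitals^1_n}q]}\to 1$. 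Your route is slightly more elementary (no concentration inequality) and gives a pointwise stage-payoff lower bound that depends only on the realized state; the paper's route has the conceptual advantage that the poachable set itself is independent of $\mu$, so its continuation-value lemma is a statement purely about random preferences.

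One small slip: in your single-tier aside, the claim that ``only $o(\abs{\students^1_n})$ top-tier students can have idiosyncratic quality below any fixed level'' is false---roughly $\epsilon\abs{\students^1_n}$ students have $\zeta_{f,w}<\epsilon$. What you presumably intend is the converse observation: only $o(\abs{\students^1_n})$ students are \emph{non-poachable} (those at rival tier-1 hospitals), so among the poachable ones you can always choose $w^*$ with $\zeta_{f,w^*}$ bounded away from $0$. The paper bypasses this simply by invoking $u_f(\hat w)\ge V(C_1,0)$, and you correctly note $V(C_1,0)>V(C_2,1)>0$ when $L\ge 2$; neither argument fully addresses the residual case $L=1$ with $V(C_1,0)=0$ at measure-zero histories, which is a model-level edge case rather than a flaw in your reasoning.
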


Note that \cref{theorem: cannot reduced capacity} is not a folk theorem: instead of taking patience $\delta$ to $1$, I consider an arbitrary fixed $\delta$ while taking market size $n$ to infinity. 
Here is the intuition for establishing \cref{theorem: cannot reduced capacity}. As $n \rightarrow \infty$, an elite hospital $f$ faces no vertical competition and diminishing horizontal competition. When market is sufficiently large, $f$ can with very high probability fill all its hiring slots with students who are mutual favorites with it. This guarantees $f$ \textit{almost} the highest possible continuation value, no matter what happens in the present. Even though this is not exactly the highest possible continuation value, since $\delta$ is fixed instead of arbitrarily close to $1$, such small variation in future value is not enough to motivate hospital $f$ to give up a hiring slot in the current period. As a result, $f$ always hires at full capacity after every history.

Note that for every \textit{fixed} $N$, there is a sufficiently high $\delta$ that allows self-enforcing matching processes with reduced capacity for $\hospitals^1_n$. So another way of interpreting \cref{theorem: cannot reduced capacity} is that the ability to reduce $\hospitals^1_n$'s capacity comes down to a race between patience $\delta$ and market size $N$, and no amount of patience can be sufficient for all large market sizes.




While \cref{theorem: cannot reduced capacity} highlights the difficulties in reducing the capacity of elite hospitals, it is worth remembering that elite hospitals, by definition, only make up a vanishing fraction of the total seats in the market. To the extent that the policy interests lie in allocating more residents to rural hospitals, their impact is negligible.
The next result, \cref{theorem: can reduce capacity}, confirms that as long as a hospital class makes up a non-vanishing fraction of the market, it is indeed possible to reduce their capacities. Perhaps surprisingly, this also applies to $\hospitals^1_n$. Furthermore, unlike \cref{theorem: cannot reduced capacity}, the affirmative message of \cref{theorem: can reduce capacity} is not the result of a race between $\delta$ and $N$: \cref{theorem: can reduce capacity} holds true especially when both $\delta$ and $N$ are large.

\begin{theorem} \label{theorem: can reduce capacity}
Suppose $\lim_{n\rightarrow \infty} \abs{\hospitals^k_n}/n >0$ for some $1\le  k \le K$. We can find a sequence of matching processes $\{\mu^*_n\}_{n=1}^\infty$, a discount factor $0< \underline \delta <1$, a market size $N$, and probability $p^*>0$, such that for all $\delta \in (\underline{\delta},1 )$ and $n>N$:
\begin{enumerate}
	\item $\mu^*_n$ is a self-enforcing matching process; and
	\item $P(|\mu^*_n(f|h)| <q) \ge p^*$ for every $f\in \hospitals^k_n$ and every on-path ex post history $h$.
\end{enumerate} 
\end{theorem}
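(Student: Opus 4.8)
The plan is to construct $\mu^*_n$ as a carrot-and-stick matching process. On the path of play, whenever it is consistent with \cref{theorem:tcs}, each tier-$k$ hospital is recommended one fewer student than its quota, with the freed students absorbed by otherwise-vacant lower-tier hospitals or simply left unmatched (admissible since every hospital is acceptable and being unmatched is too), so that item~2 holds by construction on those realizations. Off path, once a deviating hospital is identified (cf.\ \cref{assumption: no further deviation}), play moves to a player-specific punishment phase $P_f$ that, for a large number of periods, drives $f$ toward its reduced-game minmax before reverting to a reward phase, with further deviations re-triggering the appropriate $P_g$. With this structure, the one-shot deviation principle (\cref{lemma: one-shot-deviation}) reduces everything to two estimates that must hold with constants \emph{independent of $n$}: the one-shot gain from any feasible deviation is at most $q\,V(C_1,1)$ (a hospital fills at most $q$ slots, each worth at most $V(C_1,1)$), and the punishment lowers a tier-$k$ hospital's expected continuation payoff below its on-path payoff by an amount bounded away from $0$ uniformly in $n$. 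Granting these, taking $\delta$ close to $1$ and the punishment length large defeats on-path deviations, while the standard Fudenberg--Maskin ordering of phase payoffs---using the Non-Equivalent-Utilities property of reduced-game hospital payoffs established in the proof of \cref{theorem:folk-theorem}---defeats deviations inside punishment phases and keeps the family $\{P_g\}_{g\in\hospitals_n}$ consistent.

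The delicate ingredient is the punishment, and this is where a variant of student-proposing serial dictatorship is needed. One cannot punish with the repetition of a static stable matching: in large markets stable matchings are approximately efficient, so they would hand a deviating hospital nearly its best attainable payoff and provide no bite. Instead, to minmax a tier-$k$ hospital $f$ the punishment recommends a matching in which every hospital that is a good substitute for $f$ is saturated by higher-value students---feasible precisely because $\hospitals^k_n$ is non-vanishing, so there are $\Theta(n)$ tier-$k$ hospitals and $\Theta(n)$ students in each priority tier, and running students through a serial dictatorship that reserves $f$ leaves $f$ able to attract only students from well below where a tier-$k$ hospital sits on path. The resulting bound on $f$'s best response (hence on its per-period punishment payoff) holds with probability $1-o(1)$ over the within-tier preference and idiosyncratic-quality draws, and a Chernoff estimate upgrades this to a bound in expectation for all large $n$; contrasting it with the on-path payoff, which by construction is secured at a strictly higher value-tier, yields the required $n$-independent slack.

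For the reduced-capacity recommendation to be available with probability at least some $p^*>0$, one checks that a fixed tier-$k$ hospital lies outside the (random) top coalition sequence with probability bounded away from $0$ uniformly in $n$---otherwise \cref{theorem:tcs} would force $f$ to hire its top-coalition students and possibly exhaust its quota. But the number of students who would ever join a given $f$ in a top coalition is dominated by the number ranking $f$ first within its tier, which is asymptotically Poisson with an $n$-independent mean by the uniform within-tier draws; hence with probability bounded away from $0$ this number is below $q$, and on that event $\mu^*_n$ reduces $f$'s capacity. Combined with the deterrence argument, this gives items~1 and~2.

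I expect the main obstacle to be making the punishment estimate simultaneously \emph{uniform in $n$} and \emph{compatible with self-enforcement of the punishment phase}: the serial-dictatorship recommendation is not itself a stable stage-game matching, so deviations within the punishment are deterred only by continuation threats in the usual way, and every constant entering the incentive inequalities---the minmax level, the on-path level, the reward-phase slack furnished by NEU---must be bounded away from its threshold by a margin that does not vanish as the market grows. Pinning down how far down the tier ladder a reserved hospital's attainable students fall, and passing rigorously from ``with high probability'' to ``in expectation,'' is the technically demanding part; the remaining repeated-game bookkeeping adapts the construction behind \cref{theorem:folk-theorem}.
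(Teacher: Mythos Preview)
Your high-level architecture is right and matches the paper: a Fudenberg--Maskin automaton with an on-path lottery that includes a reduced-capacity matching, player-specific punishment phases built around a minmax recommendation, and incentive constants that are uniform in $n$. You also correctly flag that static stable matchings are useless as punishments here. But two pieces of the proposal do not line up with what the argument actually needs.

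First, the paragraph about the top coalition sequence is a red herring. \cref{theorem:tcs} applies to the stationary model of \cref{section: repeated market}, where the same representative students arrive each period and the top coalition is time-invariant; its proof works because a top-coalition hospital can deviate to the \emph{same} set of students every period and thereby guarantee a fixed continuation value. In the large-market model of \cref{section: large markets}, preferences $(\pi_n,\zeta_n)$ are redrawn each period, so even if $f$ happens to sit in a realized top coalition today it has no such guarantee tomorrow. The matching process is therefore free to recommend reduced capacity to $f$ at \emph{every} state---any one-shot deviation is bounded by $qV(C_1,1)$ and deterred by future punishment in the usual way, not preempted by \cref{theorem:tcs}. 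Accordingly, the paper's $p^*$ is simply the fixed weight the on-path lottery places on the reduced-capacity matching $m^*_n$; no Poisson estimate on top-coalition membership is needed or used.

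Second, your punishment is not the paper's, and the discrepancy matters. You propose to ``reserve $f$'' in a student-proposing serial dictatorship and then argue that $f$ can only poach students far down its list. The paper instead runs the serial dictatorship \emph{in the order of $f$'s own preference ranking over students} (\cref{definition:minmax matching}). This ordering buys two things simultaneously: (i) the position at which $f$ is picked is uniform on $\{1,\ldots,|\hospitals^k_n|q\}$ (\cref{lemma: seat distribution}), pinning $f$'s expected punishment payoff strictly below the reward level by a margin independent of $n$ (\cref{lemma: punishment effective}); and crucially (ii) $f$ has \emph{no profitable stage-game deviation} from the punitive matching (\cref{lemma: no deviation while punished}), because any student $f$ prefers to its assigned students already chose a hospital she prefers to $f$ when $f$ still had vacancy. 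Property (ii) is what makes the punishment-phase incentive check go through cleanly: in state $\underline\theta(f,t)$, $f$'s best stage-game response yields exactly the punitive payoff, so deviating only delays the reward. Your ``reserve $f$'' scheme does not obviously satisfy (ii); without it you must separately bound $f$'s best poaching payoff uniformly in $n$, which you have not done. The paper also makes the reward explicit---seat-proposing random serial dictatorship $\hat m_n$, with the payoff bound supplied by \cref{lemma: rsd high payoffs}---so the gap between reward and punishment is a concrete constant rather than an asserted one.
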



The second condition in \cref{theorem: can reduce capacity} says that the matching process $\mu^*_n$ can reduce the capacity of all hospitals in $\hospitals^k_n$ with positive probability; moreover, this positive probability is bounded away from $0$ as $n\rightarrow \infty$.

The proof of \cref{theorem: can reduce capacity} builds on techniques from the repeated game literature and the large matching market literature. In order to sustain matching processes with reduced capacities for $\hospitals^k_n$, I utilize a folk-theorem construction like that from \cite{fudenberg1986folk}. On path, hospitals in $\hospitals^k_n$ are asked to reduce their capacity and as a reward for their compliance, they are sometimes allowed to choose their favorite students through a hospital-proposing random serial dictatorship. I use existing results on large matching markets from \cite{che_tercieux_effequiv} to characterize the expected payoffs from such rewards. In case of a non-compliance from any hospital $f\in \hospitals^1_n$, $f$ undergoes a punishment phase for a few periods to offset the gains from its deviation, before the market resets with $f$ being asked to make an even greater reduction in its capacity.

However, unlike standard repeated games, large matching markets also present a unique challenge. In matching markets with correlated preferences, hospitals may obtain close to efficient payoffs from all stable matching algorithms \citep{pittel1989, lee2016, ashlagi2017}. The crucial part of the proof for \cref{theorem: can reduce capacity} boils down to finding a ``punishment algorithm'' that: 1. holds down a hospital's expected payoff asymptotically as market size gets large, and 2. makes sure the punished hospital can find no student willing to jointly deviate from this punishment---such an algorithm essentially plays a role analogous to that of a minmax action profile in standard repeated games. I find that certain variants of the student-proposing serial dictatorship satisfy both requirements. These punishment algorithms are effective even when all static stable matchings are close to utilitarian-efficient.

The formal proofs are in \cref{section: proof of thm4} and \cref{section: supp matching process}. I describe the intuition for the punishment algorithm here. For simplicity, let us focus on the case when matching is one-to-one, and suppose that $f\in \hospitals^1_n $ is the hospital being punished.
\smallskip

\noindent \textit{\underline{The Algorithm}:}
The algorithm assigns priorities to students according to $f$'s preference ranking, and then run the student-proposing serial dictatorship based on assigned priorities. As students propose and exit the market, the algorithm moves down $f$'s preference list. How high $f$ ranks its match therefore boils down to how early it is drafted by a student.
\smallskip

\noindent \textit{\underline{Why the Algorithm Lowers $f$'s Payoff}:} From the students' perspective, the hospitals are being sampled without replacement, where a new hospitals is picked from the remaining ones each round. Since preferences within each quality class are uniformly random, the number of draws it takes for $f$ to be sampled is uniformly distributed from 1 to $\abs{\hospitals^1_n}$, so $f$ will be matched uniformly within the top $|\hospitals^1_n|/n$ fraction of its preference list. If $|\hospitals^1_n|/n$ is non-vanishing, there is a non-vanishing probability that $f$ will be excluded from the very top of its preference list, therefore lowering $f$'s payoff. In this way, the algorithm essentially leverages horizontal competition (or in other words, the size of $|\hospitals^1_n|$) to punish $f$.
\smallskip


\noindent \textit{\underline{Why $f$ Can Find No Student to Deviate With}:} If a student did not get matched to $f$, there are two possibilities: either she picked a better hospital early on, or she had low priority because she is ranked lower than the student who did get matched with $f$. Either way, $f$ cannot find any better student to replace its match.

\section{Conclusion} \label{section: conclusion}

This paper provides a framework and solution concept for studying stability in repeated matching markets, which combine elements from repeated noncooperative games with the cooperative stability notion for two-sided matching markets. While history-dependent play can be used to alter the matches obtained by some hospitals, some other hospitals are immune to such influences due to the unique payoff structure in matching environments. To understand the impact of such hospitals, I consider large matching markets with correlated preferences. I find that in large matching markets, these elite hospitals make up at most a negligible fraction of the market.

The analysis of repeated matching markets provides insights into market design. In one-shot matching markets, the Rural Hospital Theorem rules out the possibility of improving matching outcomes for rural hospitals. A contribution of the current paper is to show that in repeated matching markets, while it is difficult to impose hiring caps on elite hospitals, it is nevertheless possible to reduce the hiring capacity of most, if not all, other hospitals in the market without violating stability. Stability therefore should not represent a hurdle for designing matching processes that can make more residents available for rural hospitals.
A promising avenue for future research is in understanding whether  matching processes like simple exclusion can be self-enforcing in an approximate sense in large matching markets.

By directly modeling deviations by coalitions, the repeated matching model that I developed in this paper preserves the cooperative spirit of static matching models. My analysis is also consistent with an alternative model where hospitals treat subsets of students as actions, and student preferences are used to compute the payoffs accruing to hospitals under each action profile. In this strategic-form game, a hospital in a top coalition has an action that guarantees its highest payoff regardless of the actions of other hospitals. This equivalence holds only when utility is non-transferable and there are no payoff externalities (see \cref{section: product structure} for a more detailed discussion and an illustrative example). In ongoing work, I consider repeated matching markets in such environments, and prove the existence of self-enforcing matching process under a broad class of preferences.

{\small
	\addcontentsline{toc}{section}{References}
	\setlength{\bibsep}{0.25\baselineskip}
	\bibliographystyle{aer}
	\bibliography{DynMatch}
}

\addtocontents{toc}{\protect\setcounter{tocdepth}{1}}

\appendix
\renewcommand{\theequation}{\arabic{equation}}
  \renewcommand{\thesection}{\Alph{section}}
\small

\section{Appendix} \label{section: appendix}


\subsection{Proof of \cref{theorem:tcs}}

Suppose $\tcs = \{(\hat{f}_k, \hat{W}_k)   \}_{k=1}^K$. In both one-shot and repeated matching environments, the proof proceeds by induction.

\noindent \textbf{Static Stable Matching:} In every static stable matching $\hat{f}_1$ and $\hat{W}_1$ must be matched together since they are mutual favorites. Suppose $\hat{f}_i$ and $\hat{W}_i$ must be matched together in all stable matchings for $1 \le i \le k-1$, but suppose by contradiction that there is a stable matching $m$ where $\hat{f}_k$ and $\hat{W}_k$ are not matched together. By the induction hypothesis, $m(\hat{f}_k) \subseteq \students \backslash \cup_{i=1}^{k-1}\hat{W}_i $ so $\tilde{u}_{\hat{f}}(\hat{W}_k) > \tilde{u}_{\hat{f}}(m(\hat{f}_k))$, and $m(w) \in \hospitals \backslash \{\hat{f}_i: 1 \le i \le k-1 \}$ so $\hat{f}_k  \succ_w  m(w)$ for all $w \in \hat{W}_k\backslash m(\hat{f}_k) $. This is a contradiction to $m$ being a stable matching, since $\hat{f}_k$ and $\hat{W}_k$ find it profitable to jointly deviate. So $\hat{f}_k$ and $\hat{W}_k$ must be matched together in all stable matchings. This completes the induction step.

\noindent \textbf{Self-Enforcing Matching Process:} The proof again proceeds by induction. First, I prove that in every self-enforcing matching process $\mu$, ${\mu}( \hat{f}_1|h ) =\hat{W}_1$ for all $h\in \histories^F$. Suppose by contradiction that ${\mu}(\hat{f}_1 |\tilde{h}) \ne \hat{W}_1$ at some $\tilde{h} \in \histories$. Consider the deviation plan $d_1$ defined by $d_1(h) =\hat{W}_1 $ for all $h\in \histories$. $d_1$ is clearly feasible for $\hat{f}_1$ since $\hat{f}_1$ is every student's favorite hospital. In addition,
	\begin{equation*} 
	\begin{split}
		U_{\hat{f}_1}({\tilde{h}|\mu } ) & = (1-\delta) \tilde{u}_{\hat{f}_1}(\mu( \hat{f}_1 |\tilde{h})) + \delta U_{\hat{f}_1}({\tilde{h}, \mu(\tilde{h})}|\mu ) \\
		& < (1 -\delta) \tilde{u}_{\hat{f}_1} (\hat{W}_1) + \delta u_{\hat{f}_1} (\hat{W}_1) = U_{\hat{f}_1}\big ({\tilde{h} \big| [\mu, (\hat{f}_1, d_1) }] \big)
	\end{split}
	\end{equation*}
so $d_1$ is also profitable for $\hat{f}_1$. This is a contradiction to $\mu$ being self-enforcing. So ${\mu}(\hat{f}_1|h ) =\hat{W}_1$ for all $h\in \histories$. 
	
Suppose it has been shown that in every self-enforcing matching process $\mu$, ${\mu}(\hat{f}_i| h )  = \hat{W}_i $ for $i = 1, \ldots, k-1$ at every ex post history $h\in \histories$, and suppose by contradiction that there is a self-enforcing matching process $\mu$ such that ${\mu}(\hat{f}_k|\tilde{h} )  \ne \hat{W}_k $ for some $\tilde{h} \in \histories$. By the inductive hypothesis, $\mu(\hat{f}_k| h ) \subseteq \students \backslash \cup_{i=1}^{k-1} \hat{W}_i$ at all $h\in \histories$, so $\tilde{u}_{\hat{f}_k}( {\mu}(\hat{f}_k|\tilde{h} ) ) < \tilde{u}_{\hat{f}_k} (\hat{W}_k)$, and $U_{\hat{f}_k}({\tilde{h}\, |\, \mu } ) \le \tilde{u}_{\hat{f}_k} (\hat{W}_k) $.

Consider the deviation plan $d_k$ defined by $d_k(h) =\hat{W}_k $ for all $h\in \histories$. $d_k$ is feasible for $\hat{f}_k$ since for every student in $\students \backslash \cup_{i=1}^{k-1} \hat{W}_i$, $\hat{f}_k$ is the best hospital among $\hospitals\backslash \{f_i \}_{i=1}^{k-1}$. In addition
	\begin{equation*} 
	\begin{split}
		U_{\hat{f}_k}(\tilde{h}| \mu  ) & = (1-\delta) \tilde{u}_{\hat{f}_k}(\mu( \hat{f}_k |\tilde{h})) + \delta U_{\hat{f}_k}({\tilde{h}, \mu(\tilde{h})\, |\, \mu } ) \\
		& < (1 -\delta) \tilde{u}_{\hat{f}_k} (\hat{W}_k) + \delta \tilde{u}_{\hat{f}_k} (\hat{W}_k) = U_{\hat{f}_k}\big ({\tilde{h} \big| [\mu, (\hat{f}_k, d) }] \big)
	\end{split}
	\end{equation*}
so $d_k$ is both feasible and profitable, contradicting the assumption that $\mu$ is self-enforcing. So in every self-enforcing matching process $\mu$, ${\mu}(\hat{f}_k| h )  = \hat{W}_k $ at every ex post history $h\in \histories$. This completes the induction.


\subsection{Proof of \cref{theorem: cannot reduced capacity}}

\subsubsection{Preliminaries}
I first establish a few preliminary results in order to prove \cref{theorem: cannot reduced capacity}. \cref{lemma: minmax set full1} proves that when market size is sufficiently large, an elite hospital $f$ is very likely able to fill its quotas with students who: 1. rank $f$ as their favorite hospital, and 2. give $f$ close to the highest possible stage-game utility.
\medskip

For every $f\in \hospitals^1_n$, let $\hat{W}^1_n(f,\epsilon) \equiv \{w\in \students^1_n : f \succeq_w f' \text{ for all } f'\in \hospitals^1_n, \text{ and } u_f (w) > V(C_1 ,1) -\epsilon \}$.

\begin{lemma} \label{lemma: minmax set full1}
Suppose $\abs{\hospitals^1_n}/ \abs{\students^1_n} \rightarrow 0$. In the stage game, for every $\epsilon>0$, there exists $N$ such that $P \big( \big| \hat{W}(f,\epsilon,k ) \big| > q \big) > 1-\epsilon $ for all $n>N$ and every $f\in \hospitals^1_n$. 
\end{lemma}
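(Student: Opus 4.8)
The plan is to show that, with probability close to $1$, a large number of tier-$1$ students both rank a given elite hospital $f$ first among all tier-$1$ hospitals and draw a high idiosyncratic quality $\zeta_{f,w}$, so in particular at least $q$ of them do. Fix $f \in \hospitals^1_n$ and $\epsilon > 0$. First I would identify the two independent events that together put a student $w \in \students^1_n$ into $\hat{W}^1_n(f,\epsilon)$: (i) $f \succeq_w f'$ for all $f' \in \hospitals^1_n$, which since $w$'s ranking over $\hospitals^1_n$ is uniform has probability exactly $1/|\hospitals^1_n|$; and (ii) $u_f(w) = V(C_1, \zeta_{f,w}) > V(C_1,1) - \epsilon$, which by continuity and strict monotonicity of $V(C_1,\cdot)$ and the uniform law of $\zeta_{f,w}$ on $[0,1]$ has some fixed probability $\rho(\epsilon) > 0$ not depending on $n$. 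Since $\pi_n$ and $\zeta_n$ are drawn independently, $w \in \hat{W}^1_n(f,\epsilon)$ with probability $\rho(\epsilon)/|\hospitals^1_n|$, independently across $w \in \students^1_n$.

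Next I would bound the expected size: $\exp\big[|\hat{W}^1_n(f,\epsilon)|\big] = |\students^1_n| \cdot \rho(\epsilon)/|\hospitals^1_n| = \rho(\epsilon)\cdot (|\students^1_n|/|\hospitals^1_n|)$. By the elite-hospital hypothesis $|\hospitals^1_n|/|\students^1_n| \to 0$, so this mean diverges to $\infty$ as $n \to \infty$. Then I would apply a concentration / tail bound for a sum of independent Bernoulli variables — a one-sided Chebyshev or Chernoff bound suffices — to conclude that $P\big(|\hat{W}^1_n(f,\epsilon)| \le q\big) \to 0$ uniformly over $f \in \hospitals^1_n$ (the bound depends on $f$ only through $|\hospitals^1_n|$, hence is uniform). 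Concretely, with $\lambda_n := \rho(\epsilon)|\students^1_n|/|\hospitals^1_n| \to \infty$, the variance is at most $\lambda_n$, so Chebyshev gives $P(|\hat{W}^1_n(f,\epsilon)| \le q) \le P(|\hat{W}^1_n(f,\epsilon)| - \lambda_n \le -(\lambda_n - q)) \le \lambda_n/(\lambda_n - q)^2 \to 0$. Choosing $N$ so that this is below $\epsilon$ for all $n > N$ finishes the argument.

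The one genuinely delicate point is making sure the high-quality threshold event (ii) really does have an $n$-independent positive probability: this needs $V(C_1,\cdot)$ continuous and strictly increasing on $[0,1]$ (so $V(C_1,1) - V(C_1,t) < \epsilon$ on a set of positive Lebesgue measure near $t=1$) together with $\zeta_{f,w} \sim \mathrm{Unif}[0,1]$, both of which are in the model. A secondary bookkeeping point is that the statement as written refers to "$\hat W(f,\epsilon,k)$", which I read as $\hat W^1_n(f,\epsilon)$ (the tier index being $k=1$ for the elite class); I would simply note this and proceed. No step here is conceptually hard — the content is that an elite class is so small relative to tier-$1$ students that even the "doubly lucky" subset is large — so I expect the writeup to be short, with the only care needed being the uniformity in $f$ and the explicit $n$-independence of $\rho(\epsilon)$.
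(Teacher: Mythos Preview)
Your proposal is correct and follows essentially the same approach as the paper: identify membership in $\hat W^1_n(f,\epsilon)$ as the intersection of two independent events (ranking $f$ first, with probability $1/|\hospitals^1_n|$; and drawing $\zeta_{f,w}$ high enough, with an $n$-free positive probability), observe that the resulting binomial count has mean $\rho(\epsilon)\,|\students^1_n|/|\hospitals^1_n|\to\infty$, and apply a tail bound. The only cosmetic differences are that the paper passes through an explicit subset $\tilde W^1_n(f,\epsilon)=\{w:\zeta_{f,w}>\underline\zeta\text{ and }f\text{ ranked first}\}$ and uses a Chernoff bound, whereas you work directly with $\hat W^1_n(f,\epsilon)$ and write out Chebyshev; neither change affects the argument.
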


\begin{proof}
Let $\underline{\zeta} \in [0,1) $ be a number such that $V({C}_1, \underline{\zeta} ) > V(C_1, 1) - \epsilon$. Define $\tilde{W}^1_n(f,\epsilon) \equiv \{w \in \students^1_n : f \succeq_w f' \text{ for all } f'\in \hospitals^1_n, \text{ and } \zeta_{f,w} > \underline{\zeta}  \}$, so $\tilde{W}^1_n(f,\epsilon) \subseteq \hat{W}^1_n(f,\epsilon)$.

From the perspective of hospital $f\in \hospitals^1_n$, every student $w \in \students^1_n$ satisfies $\zeta_{f,w} > \underline{\zeta}$ with probability $1-\underline{\zeta}  >0 $. Furthermore, a student is equally likely to rank any hospital within $\hospitals^1_n$ as her top choice within $\hospitals^1_n$. Let $Y_w(f,\epsilon) $ be the Bernoulli random variable that takes value $1$ if $w \in \tilde{W}^1_n(f,\epsilon)$ and $0$ otherwise. Let $\phi\equiv (1-\underline \zeta)/\abs{\hospitals^1_n} $. Note that the random variables $\{ Y_w(f,\epsilon): w\in \students^1_n \}$ are independently and identically distributed with rate $\phi$, and as a result, $\size{\tilde{W}^1_n(f,\epsilon)} = \sum_{w\in \students_n} Y_w(f,\epsilon)$ follows binomial distribution $B(\abs{\students^1_n}, \phi )$. By the Chernoff bound,


\begin{equation*}
	P\Big ( \,   \big|{\tilde{W}^1_n(f,\epsilon)} \big|   \le q \Big) \; \le  \min_{t>0} e^{tq}  \prod_{w\in \students^1_n} \exp \big[ e^{-tY_w(f,\epsilon )} \big]  
\end{equation*}
Since $\exp \big[ e^{-tY_w(f,\epsilon )} \big]  = 1 + \phi (e^{-t}-1) \le e^{\phi (e^{-t}-1)}$ for all $w\in \students^1_n$, we have
\begin{equation*}
	P\Big ( \,   \big|{\tilde{W}^1_n(f,\epsilon)} \big|   \le q \Big) \; \le  \min_{t>0} e^{tq} \cdot e^{\abs{\students^1_n}\phi (e^{-t}-1)} \le e^{q}\cdot  e^{\abs{\students^1_n}\phi (e^{-1}-1)},
\end{equation*}
where the second inequality above follows from setting $t=1$. Since $\abs{\students^1_n}\phi = (1-\underline{\zeta} )\abs{\students^1_n}/\abs{\hospitals^1_n} \rightarrow\infty$ and $e^{-1}-1<0$, we have $P ( \,   \big|{\tilde{W}^1_n(f,\epsilon)} \big|   \le q ) \rightarrow 0 \text{ as }  n\rightarrow\infty$. Finally, since ${\tilde{W}^1_n(f,\epsilon)} \subseteq {\hat{W}^1_n(f,\epsilon)}$, it follows that $P ( \,   \big|{\hat{W}^1_n(f,\epsilon)} \big|   \le q ) \rightarrow 0 \text{ as }  n\rightarrow\infty$. 
\end{proof}

\cref{lemma: high continuation value1} shows that an elite hospital can secure a high continuation value in large markets.

\begin{lemma} \label{lemma: high continuation value1}
Suppose $\abs{\hospitals^1_n}/ \abs{\students^1_n} \rightarrow 0$. For every $\epsilon>0$ and discount factor $0< \delta <1$, there exists $N$ such that for all $n>N$, the continuation value of every $f\in \hospitals^1_n$ in every self-enforcing matching process $\mu$ satisfies $U_f(\overline{h}|\mu) \ge qV(C_1,1)-\epsilon$ at every ex ante history $\overline{h}$.
\end{lemma}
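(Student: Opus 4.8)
The plan is to exhibit, for each elite hospital $f\in\hospitals^1_n$, a single ``stubborn'' deviation plan that guarantees $f$ a continuation payoff arbitrarily close to $qV(C_1,1)$, and then use the fact that a self-enforcing $\mu$ admits no feasible profitable deviation plan to conclude that $f$'s actual continuation value under $\mu$ is at least that high. Fix $\epsilon>0$ and choose an auxiliary $\epsilon'>0$ with $(1-\epsilon')\,q\,(V(C_1,1)-\epsilon')\ge qV(C_1,1)-\epsilon$ (a one-line expansion shows any $\epsilon'\le \epsilon/\bigl(q(1+V(C_1,1))\bigr)$ works). Define the deviation plan $d_f$ as follows: at every ex post history $h$ whose current-period state satisfies $\bigl|\hat W^1_n(f,\epsilon')\bigr|\ge q$, let $d_f(h)$ be some $q$ students from $\hat W^1_n(f,\epsilon')$ (chosen by a fixed tie-break); at every other history let $d_f(h)=\mu(f\,|\,h)$, i.e.\ $f$ simply follows the recommendation that period.

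First I would check that $d_f$ is feasible. By construction $|d_f(h)|\le q=q_f$. Moreover, each $w\in\hat W^1_n(f,\epsilon')$ is a tier-$1$ student who ranks $f$ first among $\hospitals^1_n$; since every student prefers any tier-$1$ hospital to any lower-tier hospital (and hospitals are acceptable), $f$ is $w$'s unique most-preferred outcome, so $f\succ_w\mu(w\,|\,h)$ whenever $w\notin\mu(f\,|\,h)$. Hence the feasibility condition holds at every history, including off-path ones.

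Next I would lower-bound $f$'s payoff under the manipulated process $[\mu,(f,d_f)]$. Because $(\pi_n,\zeta_n)$ is redrawn i.i.d.\ each period, \cref{lemma: minmax set full1} (applied with $\epsilon'$) yields an $N$, depending only on $\epsilon'$ and hence on $\epsilon$, such that for all $n>N$ and after every history the current period is ``good'' (i.e.\ $\bigl|\hat W^1_n(f,\epsilon')\bigr|\ge q$) with probability greater than $1-\epsilon'$; in a good period $f$'s flow utility under $d_f$ is at least $q(V(C_1,1)-\epsilon')$, and in the complementary event it is still nonnegative because all stage utilities are nonnegative. Thus the conditional expected flow utility in every period is at least $(1-\epsilon')q(V(C_1,1)-\epsilon')$ regardless of the history, and aggregating over periods gives $U_f\bigl(\overline h\,\big|\,[\mu,(f,d_f)]\bigr)\ge(1-\epsilon')q(V(C_1,1)-\epsilon')\ge qV(C_1,1)-\epsilon$ at every ex ante history $\overline h$.

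Finally, since $\mu$ is self-enforcing and $d_f$ is feasible, $d_f$ is not profitable, so $U_f\bigl(h\,\big|\,[\mu,(f,d_f)]\bigr)\le U_f(h\,|\,\mu)$ at every ex post history $h$; taking expectation over the current-period state realization turns this into the same inequality at every ex ante history $\overline h$. Combining with the previous bound yields $U_f(\overline h\,|\,\mu)\ge qV(C_1,1)-\epsilon$ for all $n>N$, which is the claim. I expect the only delicate points to be the feasibility check --- verifying from the tier structure that the poached students genuinely prefer $f$ to their $\mu$-recommended partner at every history --- and the bookkeeping translating the ex post ``not profitable'' inequality into an ex ante statement; the probabilistic heavy lifting has already been done in \cref{lemma: minmax set full1}.
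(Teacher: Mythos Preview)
Your proof is correct and follows essentially the same strategy as the paper: construct a feasible ``always poach from $\hat W^1_n(f,\cdot)$'' deviation plan, invoke \cref{lemma: minmax set full1} for the probability bound, and use the no-profitable-deviation property of self-enforcing $\mu$ to conclude. Your payoff computation is in fact slightly cleaner than the paper's---you bound the conditional expected flow utility period by period, whereas the paper truncates at a finite horizon $T$ and conditions on all $T$ periods being good; both work, but yours avoids the extra $\delta^T$ bookkeeping.
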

\begin{proof}

Suppose that the stage game satisfies  $P\big( \big|{\hat{W}^1_n(f,\tilde \epsilon) }\big| >q \big ) > 1 - \tilde \epsilon$ for some $\tilde{\epsilon}$, and consider the following deviation plan $d_f$ by an arbitrary hospital $f\in \hospitals^1_n $: at every ex post history, given the realized preferences in the current stage game,
\begin{equation*}
d_f(h)= \underset{W\subseteq \hat{W}^1_n(f, \tilde \epsilon ),\, \abs{W}\le q }{\arg\max} \;\; \underset{w \in W}{\sum}  u_f(w).
\end{equation*}
I first show that with sufficiently small $\tilde \epsilon$, hospital $f$ can guarantee itself $qV(C_1,1)-\epsilon$ by using $d_f$ to deviate from any matching process. The claim of the lemma then follows by invoking \cref{lemma: minmax set full1}.

Fix an arbitrary matching process $\mu$. Note first that since every student in $\hat{W}^1_n(f,\tilde \epsilon)$ ranks $f$ as their favorite hospital, $d_f$ is a feasible deviation plan for $f$ by construction.

To see that $d_f$ guarantees $qV(C_1,1)-\epsilon$ for sufficiently small $\tilde \epsilon$, let $T$ be large enough so that $\delta^TV(C_1,1)q < \epsilon/2$. At every ex ante history $\overline{h}$, hospital $f$'s continuation payoff from the manipulated matching process $[\mu, (f,d_f)]$ satisfies
\begin{align}
	U_f(\overline{h}\, \big| \, [\mu, (f,d_f)] ) &\ge (1- \delta^T)\left\{ (1-\tilde \epsilon)^T q(V(C_1,1) - \tilde \epsilon)  + [1- (1 - \tilde\epsilon )^T] 0 \right\} + \delta^T \cdot 0 \label{inequality: first T period payoff} \\
	 & = (1- \delta^T) (1-\tilde \epsilon)^T q(V(C_1,1) - \tilde \epsilon)  \nonumber
\end{align}
Inequality \eqref{inequality: first T period payoff} above follows from decomposing $f$'s continuation payoff between those accrued within the first $T$ and those after period $T$; for the first $T$ periods, the expected payoff is further decomposed by whether $| \hat{W}^1_n(f,\tilde\epsilon )|\ge q$ all the way through this phase or not.

Since $(1- \delta^T)(1-\tilde \epsilon)^T q(V(C_1,1) - \tilde \epsilon ) \rightarrow (1- \delta^T)qV(C_1,1)$ as $\tilde \epsilon \rightarrow 0$, we can find $\underline{\epsilon}$ such that $(1-\delta^T)(1- \underline\epsilon)^T q(V(C_1,1) - \underline \epsilon) \geq (1-\delta^T)qV(C_1, 1) - {\epsilon}/2$.
By \cref{lemma: minmax set full1}, there exists $N$ such that $P\big( \big|{\hat{W}^1_n(f,\underline \epsilon) }\big| >q \big ) > 1 - \underline \epsilon$ for all $n>N$. So for all $n>N$, we have
\begin{align*}
	U_f(\overline{h}\, \big| \, [\mu, (f,d_f)] ) & \ge (1-\delta^T)qV(C_1, 1) - {\epsilon}/2 \\
	& = (1-\delta^T)qV(C_1, 1) + \delta^TqV(C_1,1) - {\epsilon}/2 - \delta^TqV(C_1,1)\\
	& > (1-\delta^T)qV(C_1, 1) + \delta^TqV(C_1,1) - {\epsilon}/2 - {\epsilon}/2\\
	& = qV(C_1, 1) -\epsilon
\end{align*}

Finally, if $\mu$ is a self-enforcing matching process, then it must satisfies
\begin{equation*}
	U_f( \overline{h} |\mu ) \ge U_f(\overline{h}\, \big| \, [\mu, (f,d_f)] ) > qV(C_1, 1) -\epsilon
\end{equation*}
at every ex ante history $\overline{h}$.
\end{proof}

\subsubsection{Proof of \cref{theorem: cannot reduced capacity}}
Let $\epsilon \equiv  \frac{1 -\delta }{2\delta}  V(C_1, 0)$. By \cref{lemma: high continuation value1}, there exists $N_1$ such that if $n>N_1$, $U_f(\overline{h}|\mu ) \ge V(C_1, 1)- \epsilon$ for every self-enforcing matching process $\mu$ at every ex ante history $\overline{h}$. In addition, since $\abs{\hospitals^1_n}/ \abs{\students^1_n} \rightarrow 0$, there exist $N_2$ such that $q\abs{\hospitals^1_n} < \abs{\students^1_n}$ for all $n>N_2$. I consider market sizes larger than $N \equiv \max\{N_1, N_2\}$.

Let $\mu$ be an arbitrary self-enforcing matching process, and suppose by contradiction that $|\mu(f|\hat{h})|<q$ at some ex post history $\hat{h}$. I will show that this leads to a contradiction. 

At $\hat{h}$, hospital $f$'s continuation payoff at $h$ satisfies
\begin{equation*}
	U_f(\hat{h}|\mu) \le (1-\delta) \sum_{w\in \mu(f|h)} u_f(w) + \delta qV(C_1,1).
\end{equation*}
Meanwhile, since $|\students^1_n | > q\abs{\hospitals^1_n} $, at history $\hat{h}$, there exists $\hat{w}\in \students^1_n$ such that $\mu(\hat{w} |\hat{h} )\notin {\hospitals^1_n}$. Consider the following deviation plan $d_f$ for hospital $f$:
\begin{equation*}
d_f (h) = 
\begin{cases}
\mu(f|h) & \text{ if } h\ne \hat{h} \\
\mu(f|h)\cup \{\hat{w} \} & \text{ if } h= \hat{h}
\end{cases}
\end{equation*}
Let $m' \equiv \big[\mu(\hat{h} ), \big(f, \mu(f|\hat{h})\cup\{\hat{w} \}  \big)  \big]$ denote the stage-game matching that results from $f$ recruiting $\hat{w}$ in addition to its recommended students at $\hat{h}$. Hospital $f$'s continuation payoff from the manipulated matching process $[\mu,(f,d_f) ] $ satisfies
\begin{align*}
	U_f(\hat{h}|[\mu,(f,d_f) ]) & = (1-\delta) \Big[ \sum_{w\in \mu(f|h)} u_f(w) + u_f(\hat{w} ) \Big] + \delta U_f(\hat{h}, m' | \mu )\\
	& \ge (1-\delta)  \sum_{w\in \mu(f|h)} u_f(w) + (1-\delta ) u_f(\hat{w} )  + \delta [V(C_1,1) -\epsilon ]\\
	& = (1-\delta)  \sum_{w\in \mu(f|h)} u_f(w) + \delta V(C_1,1) + (1-\delta ) u_f(\hat{w} )  -\delta \epsilon \\
	& \ge U_f(\hat{h}|\mu ) + (1-\delta ) u_f(\hat{w} )  -\delta \epsilon
\end{align*}

Since $\hat{w} \in \students^1_n$, $u_f(\hat{w}) \ge V(C_1,0) $. In addition, recall that $\epsilon =  \frac{1 -\delta }{2\delta}  V(C_1, 0)$ by construction. We have
\begin{align*}
	U_f(\hat{h}|[\mu,(f,d_f) ]) 	& \ge U_f(\hat{h}|\mu ) + \frac{1-\delta}{2} V(C_1,0) > U_f(\hat{h}|\mu ),
\end{align*}
contradicting the assumption that $\mu$ is self-enforcing.

\subsection{Proof of \cref{theorem: can reduce capacity}} \label{section: proof of thm4}
I focus on hospital quality classes that fill all their hiring quotas in static stable matchings, since otherwise the statement of the theorem would be trivial. For hospitals that cannot fill their slots in static stable matchings, it is possible to reduce their capacity even further, following similar arguments as the ones below.


\cref{section: submarket} introduces the submarket faced by $\hospitals^k_n$; \cref{section: reward} characterizes the payoffs $\hospitals^k_n$ obtain from the reward matching; \cref{section: punishment} introduces algorithm used to punishment a deviating hospital, and characterize its expected payoff from such punishment. The matching process used to sustain capacity reduction is constructed using these punishments and rewards, I leave the formal construction to \cref{section: supp matching process}.

\subsubsection{Submarkets} \label{section: submarket}
For each $1\le l \le L$, let $Q^{\students}_n (l) \equiv \sum_{l'\le l} |\students^{l'}_n |$ denote the number of students who are in a quality class no worse than $\students^l_n $; similarly, for each $1 \le k \le K$, let $Q^{\hospitals}_n (k) \equiv q \sum_{k'\le k } |\hospitals^{k'}_n |$ denote the number of hospital seats that are in a quality class no worse than $\hospitals^k_n$.

I say that a student quality class $\students^l_n$ is achievable by hospitals in quality class $ \hospitals^k_n$ if
\begin{equation*}
	\lim_{n\rightarrow \infty} \frac{ Q^{\students}_n(l) }{Q^{\hospitals}_n(k-1) } \ge 1 \;\; \text{ and }   \;\; 	\lim_{n\rightarrow \infty} \frac{ Q^{\students}_n(l-1) }{Q^{\hospitals}_n(k) } \le 1.
\end{equation*}
The first inequality above ensures that not all students in $\students^l_n$ can be absorbed by hospitals in higher quality classes than $\hospitals^k_n$; the second inequality ensures that not all seats in $\hospitals^k_n$ can be filled by students in higher quality classes than $\students^{l}_n$. Let $\mathcal{A}(k)\subseteq \{1, \ldots, L\} $ denote the set of student quality classes achievable by hospital quality class $k$. 

In the analysis in this section, when market size is sufficiently large, the only relevant students are those that are in an achievable quality class.  I therefore focus on the submarket that consists of only hospitals in $\hospitals^k_n$ and students in its achievable quality classes.

\bigskip

Let the numbers $\alpha_1, \alpha_2,\ldots,\alpha_I \ge 0 $ be numbers that satisfy $\sum_{i=1}^{I-1}\alpha_i <1$ and $\sum_{i=1}^{I}\alpha_i \ge 1$. Consider a sequence of submarkets: for every $n$, the hospital side is made up of $\hospitals^k_n$, while the student side consists of $\cup_{i=1}^I \mathcal{V}^i_n$, where for each $i$,
\begin{equation*}
\mathcal{V}^i_n \subseteq \students^{i}_n \;\;\;\; \text{ and } \;\;\;\;\frac{\abs{\mathcal{V}^i_n}}{\abs{\hospitals^k_n}q} \rightarrow \alpha_i	 \text{ as $n\rightarrow \infty$.}
\end{equation*}
I treat each seat on the hospital side of the market as an individual player that inherits the preference of its hospital. For each seat $s$, let $\tilde{u}_s(.)$ denote the utility function of the seat, which is identical to $\tilde{u}_f(.)$ of the hospital that it belongs to.

\subsubsection{Reward for Compliance} \label{section: reward}
Let $\hat{\phi}_n$ denote the matching resulting from the seat-proposing random serial dictatorship in the submarket. $\hat{\phi}_n$ is played as a reward for the hospitals when they comply with capacity reduction. \cref{lemma: rsd high payoffs} characterizes the payoff from this reward.
Noting that the matching $\hat{\phi}_n$ is Pareto efficient, I will make use of the following result from \cite{che_tercieux_effequiv}:

\begin{lemma}{\citep{che_tercieux_effequiv}}\label{lemma: CT} 
As $n\rightarrow \infty$,
\begin{equation*}
\frac {\sum_{s} u_s(\hat{\phi}_n(s) )}{\abs{\hospitals^k_n }q} \; \overset {p}{\longrightarrow} \; \sum_{i=1}^{I-1} \alpha_i V(C_i,1) + (1- \sum_{i=1}^{I-1} \alpha_i)V(C_I,1).
\end{equation*}
\end{lemma}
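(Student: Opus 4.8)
The approach rests on two observations. First, the seat-proposing random serial dictatorship always delivers a matching $\hat\phi_n$ that is Pareto efficient for the seat side of the submarket. Second, in large markets the realized \emph{utilitarian} welfare of any Pareto-efficient matching is asymptotically independent of which efficient matching is used, and is pinned down by a deterministic ``frictionless'' benchmark in which congestion is ignored. Accordingly, I would invoke the efficiency/payoff-equivalence result of \cite{che_tercieux_effequiv}: because the idiosyncratic match qualities $\zeta_{f,w}$ are i.i.d.\ with full support on $[0,1]$ and enter utilities additively alongside the tier common values, their theorem yields that $\frac{1}{\abs{\hospitals^k_n}q}\sum_s u_s(\hat\phi_n(s))$ converges in probability to the frictionless benchmark value $W^\ast$ of the tiered submarket. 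The remaining task is purely deterministic: identify $W^\ast$.

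To compute $W^\ast$, I would use the tier structure. By the no-overlap condition $V(C_l,0)>V(C_{l'},1)$ for $l<l'$, every seat strictly prefers any tier-$i$ student to any tier-$i'$ student with $i'>i$, so in the benchmark (as in the serial dictatorship itself) the tier-$i$ student pool is fully absorbed before any tier-$(i+1)$ student is taken. Since $\abs{\mathcal V^i_n}/(\abs{\hospitals^k_n}q)\to\alpha_i$, $\sum_{i<I}\alpha_i<1$, and $\sum_{i\le I}\alpha_i\ge 1$, a mass $\alpha_i$ of seats (normalizing $\abs{\hospitals^k_n}q$ to $1$) is matched within tier $i$ for each $i<I$, a positive remaining mass $1-\sum_{i<I}\alpha_i\le\alpha_I$ is matched within tier $I$, and no seat is left unmatched. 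Within each such phase I would show, by a standard Hall-type counting argument --- exploiting that for any threshold $1-\epsilon$ there are $\Theta(n)$ tier-$i$ students with idiosyncratic component above $1-\epsilon$ and that distinct hospitals draw independent preferences --- that all but an $\epsilon$-fraction of the seats in that phase can be matched to students with idiosyncratic component exceeding $1-\epsilon$; continuity and boundedness of $V$ then give each such seat a utility of at least $V(C_i,1)-o_\epsilon(1)$. Summing phase contributions and letting $\epsilon\downarrow 0$ yields $W^\ast=\sum_{i=1}^{I-1}\alpha_i V(C_i,1)+\bigl(1-\sum_{i=1}^{I-1}\alpha_i\bigr)V(C_I,1)$, the claimed limit.

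The routine components are verifying that $\hat\phi_n$ is Pareto efficient, checking that the hypotheses of \cite{che_tercieux_effequiv} hold in the present environment (finitely many tiers, additive common-plus-idiosyncratic utilities, $q\ge 1$ identical-preference seats per hospital, and restriction to the achievable submarket), and the $\epsilon$-$\delta$ bookkeeping translating ``idiosyncratic component near $1$'' into ``seat utility near $V(C_i,1)$.'' I expect the genuinely substantive step to be pinning down the frictionless benchmark $W^\ast$: one must control the $o(n)$ ``boundary'' seats near each tier transition and the last seats served --- both negligible in the per-capita average --- and verify that the independence of preferences across hospitals suffices to rule out persistent congestion at the top of each tier's idiosyncratic ranking. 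All of this is of the type developed in \cite{che_tercieux_effequiv} and the broader large-matching-market literature, so the proof reduces to a careful application of their theorem together with the deterministic computation above; alternatively one could bypass the citation and analyze the serial dictatorship directly via order-statistic and occupancy estimates, arriving at the same limit.
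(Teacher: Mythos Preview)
Your proposal is correct and aligned with the paper's treatment: the paper does not prove this lemma at all but simply imports it from \cite{che_tercieux_effequiv}, after noting (as you do) that the seat-proposing random serial dictatorship $\hat\phi_n$ is Pareto efficient. Your additional sketch of how to compute the frictionless benchmark $W^\ast$ from the tier structure and the $\alpha_i$'s is sound and fills in what the paper leaves implicit in the citation.
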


%
%
\cref{lemma: rsd high payoffs} shows that in $\hat{\phi}_n$, every hospital obtains a randomly assigned common value from its matched students, but obtains close to maximum idiosyncratic component from each student.

\begin{lemma} \label{lemma: rsd high payoffs}
For every $\epsilon>0$, there exists $N$ such that 
\begin{equation*}
\exp[u_f(\hat{\phi}_n ) ] > q \big[ \sum_{i=1}^{I-1} \alpha_iV(C_i,1) + (1- \sum_{i=1}^{I-1} \alpha_i)V(C_I,1) \big] -\epsilon.
\end{equation*}
for all market sizes $n>N$ and every $f\in \hospitals^k_n$.
\end{lemma}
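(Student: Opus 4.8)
The plan is to combine the aggregate convergence in Lemma \ref{lemma: CT} with a symmetry argument: by construction $\hat\phi_n$ is the seat-proposing random serial dictatorship, which treats all seats symmetrically (the random order of proposals is uniform), so for every seat $s$ in $\hospitals^k_n$ we have $\exp[u_s(\hat\phi_n(s))]$ equal to the common value $\frac{1}{|\hospitals^k_n|q}\exp\big[\sum_{s'} u_{s'}(\hat\phi_n(s'))\big]$. Hence the per-seat expected payoff equals the normalized aggregate, and Lemma \ref{lemma: CT} gives $\exp[u_s(\hat\phi_n(s))] \to \sum_{i=1}^{I-1}\alpha_i V(C_i,1) + (1-\sum_{i=1}^{I-1}\alpha_i)V(C_I,1)$.

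First I would make the symmetry claim precise. Fix $f\in\hospitals^k_n$ and let $s_1,\dots,s_q$ be its seats. In the seat-proposing RSD, we draw a uniformly random ordering of all $|\hospitals^k_n|q$ seats and let each seat, in turn, pick its most preferred still-available student. Because all seats of $f$ have the identical utility function $\tilde u_f$, and because the joint distribution of $(\text{ordering},\ \text{idiosyncratic matrix }\zeta_n,\ \text{within-class preference draws})$ is invariant under any permutation of the seat labels, the random variable $u_{s_j}(\hat\phi_n(s_j))$ has the same distribution for each $j$, and in fact the whole collection $\{u_s(\hat\phi_n(s))\}_{s}$ is exchangeable. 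Therefore $\exp[u_f(\hat\phi_n)] = \sum_{j=1}^q \exp[u_{s_j}(\hat\phi_n(s_j))] = q\cdot \exp[u_{s_1}(\hat\phi_n(s_1))] = \frac{q}{|\hospitals^k_n|q}\exp\big[\sum_{s} u_s(\hat\phi_n(s))\big]$, where the last equality again uses exchangeability across all seats (not just those of $f$).

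Next I would pass from the convergence in probability in Lemma \ref{lemma: CT} to a statement about expectations. The normalized aggregate $\frac{1}{|\hospitals^k_n|q}\sum_s u_s(\hat\phi_n(s))$ is bounded — it lies in $[0, V(C_1,1)]$ since every realized match gives a payoff at most $V(C_1,1)$ — so convergence in probability to the constant $\sum_{i=1}^{I-1}\alpha_i V(C_i,1) + (1-\sum_{i=1}^{I-1}\alpha_i)V(C_I,1)$ upgrades (by bounded convergence) to convergence of the expectation to that same constant. Combining with the identity from the previous paragraph, $\exp[u_f(\hat\phi_n)] = \frac{1}{|\hospitals^k_n|}\exp\big[\sum_s u_s(\hat\phi_n(s))\big] \to q\big[\sum_{i=1}^{I-1}\alpha_i V(C_i,1) + (1-\sum_{i=1}^{I-1}\alpha_i)V(C_I,1)\big]$, which yields an $N$ beyond which the quantity exceeds the target minus $\epsilon$, uniformly in $f\in\hospitals^k_n$ since all $f$ share the same value.

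The main obstacle, and the step that needs the most care, is the exchangeability/symmetry argument: one must verify that the seat-proposing RSD genuinely makes the realized payoffs exchangeable across seats. This requires that the idiosyncratic qualities $\zeta_{f,w}$ are i.i.d.\ across seats (true here, as they are drawn i.i.d.\ uniform on $[0,1]$), that the within-class student preferences are drawn symmetrically (true), and that the dictatorship order is uniform over all seats (true by definition of RSD). One subtlety is that seats belonging to the same hospital $f$ share the utility function but the $\zeta$ draws are still seat-specific, so symmetry within $f$ is not automatic from "same preferences" — it comes from the i.i.d.\ $\zeta$ structure plus the uniform order. Once exchangeability is established, the rest is just the bounded-convergence upgrade and arithmetic.
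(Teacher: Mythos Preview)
Your approach is essentially the paper's: both use that seats are treated symmetrically in the RSD so each seat's expected payoff equals the expected average, then invoke \cref{lemma: CT}. The paper carries out the passage from convergence in probability to the expectation bound by hand (splitting on the event $\{\overline{u}_n > \text{target} - \epsilon/2\}$ and lower-bounding each piece), whereas you package this as bounded convergence; these are the same argument.

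One correction to your discussion of the ``subtlety'': the idiosyncratic draws $\zeta_{f,w}$ are indexed by hospital--student pairs, not seat--student pairs, so all seats of a given hospital share the \emph{same} utility function and the same $\zeta$'s. This actually makes within-hospital symmetry immediate (only the RSD order distinguishes them), and cross-hospital symmetry comes from the i.i.d.\ structure of $\{\zeta_{f,\cdot}\}_{f\in\hospitals^k_n}$ together with the uniform-permutation student preferences within class $k$. Your conclusion stands, but the justification should be stated this way.
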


\begin{proof}
To prove the claim, it suffices to prove that $\exp[u_s(\hat{\phi}_n ) ] >  \sum_{i=1}^{I-1} \alpha_iV(C_i,1) + (1- \sum_{i=1}^{I-1} \alpha_i)V(C_I,1) - \epsilon$ for all $n>N$ and every individual seat $s$. This is what I will prove below.


Let 
\begin{equation*}	
\overline{u}_n \equiv  \frac{\sum_{s} u_s(\hat{\phi}_n(s))} {\abs{\hospitals^k_n }q}
\end{equation*}
denote the average realized utilities for individual seats, and let $\overline{F}_n$ denote the probability distribution of this random variable $\overline{u}_n$. Let $A_n^- (\epsilon )  \equiv \{ \overline{u}_n \le \sum_{i=1}^{I-1} \alpha_iV(C_i,1) + (1- \sum_{i=1}^{I-1} \alpha_i)V(C_I,1) -\frac{1}{2} \epsilon \}$ be the event that $\overline{u}$ is less than the payoff upper bound, and $A_n^+ (\epsilon ) \equiv \{ \overline{u}_n > \sum_{i=1}^{I-1} \alpha_iV(C_i,1) + (1- \sum_{i=1}^{I-1} \alpha_i)V(C_I,1) -\frac{1}{2} \epsilon \}$ be its complement. 

Note that by the Law of Total Expectation, for every seat $s$,
\begin{align*}
	\exp[u_s(\hat{\phi}_n ) ] =  \int \exp [u_s(\hat{\phi}_n ) \,|\, \overline{u}_n \,] d\overline{F}_n = \int_{A_n^-  (\epsilon )} \exp [u_s(\hat{\phi}_n ) \,|\, \overline{u}_n \,] d\overline{F}_n + \int_{A_n^+ (\epsilon )} \exp [u_s(\hat{\phi}_n) \,|\, \overline{u}_n \,] d\overline{F}_n.
\end{align*}
Since seats are treated symmetrically in a random serial dictatorship, it follows that $ \exp [u_s(\hat{\phi}_n) \,|\, \overline{u}_n \,] = \overline{u}_n$. So for every seat $s$,
\begin{align*}
	\exp[u_s(\hat{\phi}_n ) ] & = \int_{A_n^- (\epsilon )} \overline{u}_n \; d\overline{F}_n + \int_{A_n^+ (\epsilon )} \overline{u}_n \; d\overline{F}_n \\
	& \ge P(A_n^- (\epsilon )) \cdot 0 \, + \, P(A_n^+ (\epsilon )) \bigg [\sum_{i=1}^{I-1} \alpha_iV(C_i,1) + (1- \sum_{i=1}^{I-1} \alpha_i)V(C_I,1) - \frac{1}{2} \epsilon \bigg].
\end{align*}
By \cref{lemma: CT}, $P(A_n^+(\epsilon )) \rightarrow 1$ as $n \rightarrow \infty$. As the result, there exists $N$ such that $\exp[u_s(\hat{\phi}_n ) ] \ge  \sum_{i=1}^{I-1} \alpha_iV(C_i,1) + (1- \sum_{i=1}^{I-1} \alpha_i)V(C_I,1) - \epsilon$ for all $n>N$ and $s\in \hospitals^k_n$.
\end{proof}


\subsubsection{Punishment for Deviation} \label{section: punishment}

\begin{definition} \label{definition:minmax matching}
	Given a set of hospitals $\hospitals'$, students $\students'$, and a hospital $f \in \hospitals'$, the punitive matching for $f$, $\underline{\phi}_n^f$, is the matching produced by the following procedure
\begin{enumerate}
	\item Set $S_0 = \emptyset$ and $G_0 = \emptyset$.
	\item If either $\hospitals' \backslash S_{k-1} = \emptyset $ or $\students' \backslash G_{k-1} = \emptyset$, stop; otherwise, let $\hat{w}$ be $f$'s favorite student in $\students' \backslash G_{k-1}$, and $\hat{f}$ be $\hat{w}$'s favorite hospital among  $\hospitals' \backslash S_{k-1}$. Match $\hat{f}$ and $\hat{w}$, set $ S_{k} =  S_{k-1} \cup \{ \hat{f} \}$ and $G_{k} = G_{k-1}\cup \{\hat{w} \} $. Go to Step $k+1$.
\end{enumerate}
\end{definition}

For a seat $s$ belonging to a hospital $f$, let $\underline R^{s}_n$ denote $f$'s ranking of the student matched to seat $s$ in the matching $\underline{\phi}^f_n$: that is, $\underline R^{s}_n = j$ if $\underline{\phi}^f_n(s)$ is the $j$-th favorite student according to the preference of $f$. The next lemma shows that $R^{s}_n$ is uniformly distributed.

\begin{lemma} \label{lemma: seat distribution}
	For every seat $s$ belonging to $f$ and $ 1\le r \le \abs{\hospitals^k_n}q$, $P(\underline R_n^s = r) = \frac{1}{\abs{\hospitals^k_n}q}$.
\end{lemma}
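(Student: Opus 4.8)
The plan is to analyze the punitive matching procedure in \cref{definition:minmax matching} from the point of view of a single seat $s$ belonging to $f$, and argue that the procedure induces a uniformly random assignment of seats to ranks on $f$'s preference list. The key observation is that the procedure processes students one at a time: in step $k$, it takes $f$'s favorite remaining student $\hat w$, and $\hat w$ draws her favorite remaining hospital $\hat f$ from $\hospitals'\backslash S_{k-1}$. Since $f$'s preference over students is fixed (and strict), after $j-1$ students have already been removed, the student processed in step $j$ is deterministically $f$'s $j$-th favorite student overall. Hence ``$f$'s ranking of the student matched to seat $s$ equals $r$'' is exactly the event ``seat $s$ is the hospital drawn in round $r$'' of this sequential draft.

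First I would set up this reformulation precisely: index the rounds $1,2,\dots,|\hospitals^k_n|q$; in round $j$ the $j$-th favorite student of $f$ (call her $w^{(j)}$) picks uniformly at random among the hospital seats not yet matched in rounds $1,\dots,j-1$ — this uniformity is where the assumption that within-tier student preferences are uniform permutations enters, together with the observation that $w^{(j)}$'s choice among the remaining seats is (conditionally on the past) uniform, because conditioning on which seats have been taken in earlier rounds does not bias $w^{(j)}$'s ranking of the surviving seats. (One must be slightly careful here: seats belonging to the same hospital are not distinct in a student's preference; but since all seats of a hospital are interchangeable and the procedure fills one seat at a time, I would treat each hospital as contributing $q$ identical ``copies,'' and the uniform-random choice of a hospital translates into a uniform-random choice of an as-yet-unfilled seat — this bookkeeping step deserves a careful sentence.) Then the claim $P(\underline R^s_n = r) = 1/(|\hospitals^k_n|q)$ is the statement that, in a sequential uniform-without-replacement draw from $|\hospitals^k_n|q$ objects, each fixed object is equally likely to be drawn in any given round.

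Next I would prove that last fact, which is a standard symmetry argument: the sequential draft induces the uniform distribution over all orderings of the $|\hospitals^k_n|q$ seats, and by exchangeability each seat occupies each position with probability $1/(|\hospitals^k_n|q)$. Concretely, one can condition on the event that seat $s$ survives to round $r$ (probability $\prod_{j=1}^{r-1}\frac{|\hospitals^k_n|q-j}{|\hospitals^k_n|q-j+1} = \frac{|\hospitals^k_n|q - r + 1}{|\hospitals^k_n|q}$) and is then chosen in round $r$ (conditional probability $\frac{1}{|\hospitals^k_n|q - r + 1}$); multiplying gives $\frac{1}{|\hospitals^k_n|q}$, independent of $r$. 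Alternatively one appeals directly to exchangeability of the induced random permutation. Either route is elementary.

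The main obstacle — really the only delicate point — is not the probability computation but justifying the conditional uniformity of each round's pick, i.e. that after an arbitrary history of earlier draws the student $w^{(j)}$ being processed still ranks the surviving hospital seats uniformly. This follows because $w^{(j)}$'s within-tier preference is a uniform random permutation drawn independently of all other students' preferences and of the common-value structure, so conditioning on events determined by $f$'s (deterministic) preference and other students' choices leaves $w^{(j)}$'s ranking of any fixed surviving set uniform; I would state this as a short lemma-within-the-proof and then the rest is the symmetry argument above. (A subtlety worth flagging: one should restrict attention to the submarket of \cref{section: submarket} so that ``$f$'s $j$-th favorite student'' is well-defined and the relevant students all lie in tiers where preferences within $\hospitals^k_n$ are genuinely uniform; but this is already the standing setup of the section.)
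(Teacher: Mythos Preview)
Your proposal is correct and takes essentially the same approach as the paper: both reinterpret the punitive procedure as students sequentially drawing seats without replacement and then invoke symmetry (the paper phrases this as an urn with one red ball among $|\hospitals^k_n|q-1$ black ones; you write out the telescoping product and appeal to exchangeability of the induced permutation). You are more explicit about the conditional-uniformity step and the seat-versus-hospital bookkeeping that the paper compresses into a single sentence, but the core argument is identical.
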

\begin{proof}
In the procedure in \cref{definition:minmax matching}, starting from $f$'s favorite student, each student takes turn to pick a seat from the remaining seats. As we move down $f$'s preference list, the first student to pick  seat $s$ will determine $\underline R^{s}_n$. 

We can equivalently think of $\underline R^{s}_n$ as being determined through sampling \textit{without} replacement from an urn that contains a red ball (representing seat $s$) and $(\abs{\hospitals^k_n}q - 1)$ black balls (representing all other seats). Specifically, $\underline R^{s}_n$ is the number of draws it takes for the red ball to be sampled. Since the red ball is equally likely to be in any position in the sequence of balls to be drawn, $P(\underline R_n^s = r) = \frac{1}{\abs{\hospitals^k_n}q}$ for $ 1\le r \le \abs{\hospitals^k_n}q$.
\end{proof}

\cref{lemma: clustering} shows that if a hospital is excluded from the very top section of its preference list, then with high probability it obtains a low utility.

\begin{lemma} \label{lemma: clustering}
	Fix any $0< \epsilon <1 $ and $0<\gamma <\epsilon$. For every $f\in \hospitals^k_n$, let $X^f_n$ denote the event that, according to $f$'s preference, the worst $\gamma \abs{\mathcal{V}^I_n}$ students in $\mathcal{V}^I_n$ all satisfy $\zeta_{f,w} < \epsilon$. Then $P(X^f_n) \rightarrow 1 $ as $n\rightarrow \infty$.
\end{lemma}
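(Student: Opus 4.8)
The plan is to reduce the event $X^f_n$ to a one-sided binomial tail and then invoke the same kind of large-deviations estimate used in the proof of \cref{lemma: minmax set full1}. The first step is to observe that every student in $\mathcal{V}^I_n$ belongs to tier $I$, so $f$ derives utility $u_f(w) = V(C_I, \zeta_{f,w})$ from such a student; since $V$ is strictly increasing in its second argument, $f$'s preference ranking restricted to $\mathcal{V}^I_n$ coincides exactly with the ranking of these students by their idiosyncratic components $\zeta_{f,w}$. Hence the ``worst $\gamma\abs{\mathcal{V}^I_n}$ students in $\mathcal{V}^I_n$'' according to $f$ are precisely the $\lceil\gamma\abs{\mathcal{V}^I_n}\rceil$ students (any reasonable rounding convention yields at most $\gamma\abs{\mathcal{V}^I_n}+1$) with the smallest realizations of $\zeta_{f,w}$. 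Writing $N^f_n \equiv \#\{ w \in \mathcal{V}^I_n : \zeta_{f,w} < \epsilon \}$, and noting that ties occur with probability zero, the event $X^f_n$ — that all of these worst students have $\zeta_{f,w} < \epsilon$ — holds if and only if the $\lceil\gamma\abs{\mathcal{V}^I_n}\rceil$-th smallest value of $\zeta_{f,\cdot}$ over $\mathcal{V}^I_n$ lies below $\epsilon$, i.e. if and only if $N^f_n \ge \lceil\gamma\abs{\mathcal{V}^I_n}\rceil$.

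Next I would note that, because the $\zeta_{f,w}$ are drawn independently from the uniform distribution on $[0,1]$, $N^f_n$ has the binomial distribution $B(\abs{\mathcal{V}^I_n}, \epsilon)$, so $\exp[N^f_n] = \epsilon\abs{\mathcal{V}^I_n}$. The constraints $\sum_{i=1}^{I-1}\alpha_i < 1$ and $\sum_{i=1}^{I}\alpha_i \ge 1$ force $\alpha_I > 0$, and since $\abs{\hospitals^k_n}\to\infty$ under the hypothesis $\lim_{n\to\infty}\abs{\hospitals^k_n}/n>0$ of \cref{theorem: can reduce capacity}, we get $\abs{\mathcal{V}^I_n}\to\infty$. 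As $\gamma < \epsilon$, the event $\{N^f_n \ge \lceil\gamma\abs{\mathcal{V}^I_n}\rceil\}$ is the complement of a lower-tail deviation of a growing binomial sum by a constant factor $(\epsilon-\gamma)/\epsilon$ below its mean; a Chernoff bound exactly parallel to the one in the proof of \cref{lemma: minmax set full1} (or, more simply, the weak law of large numbers) then gives $P\big(N^f_n < \gamma\abs{\mathcal{V}^I_n}\big)\to 0$, hence $P(X^f_n)\to 1$. Since the resulting bound depends only on $\abs{\mathcal{V}^I_n}$, $\epsilon$, and $\gamma$ — and $\abs{\mathcal{V}^I_n}$ is common to all hospitals in a market of size $n$ — the convergence is uniform over $f\in\hospitals^k_n$, which is what the later construction requires.

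There is no substantive obstacle here; the argument is a routine large-deviations estimate. The only two points demanding a little care are (i) translating ``the worst $\gamma\abs{\mathcal{V}^I_n}$ students according to $f$'s preference'' into an order-statistic statement about the idiosyncratic components, which relies on the fact that within a single tier a hospital's preference is governed entirely by $\zeta_{f,\cdot}$, and (ii) the integer rounding in $\gamma\abs{\mathcal{V}^I_n}$, which is harmless precisely because $\gamma < \epsilon$ holds strictly.
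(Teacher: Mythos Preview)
Your proposal is correct and follows essentially the same route as the paper: define the count $N^f_n$ of students in $\mathcal{V}^I_n$ with $\zeta_{f,w}<\epsilon$, note it is $B(\abs{\mathcal{V}^I_n},\epsilon)$, and apply a concentration inequality (the paper uses Hoeffding where you invoke Chernoff/WLLN) to obtain $P(N^f_n<\gamma\abs{\mathcal{V}^I_n})\to 0$, then observe $\{N^f_n\ge \gamma\abs{\mathcal{V}^I_n}\}\subseteq X^f_n$. Your write-up is in fact more careful than the paper's on the order-statistic translation, the rounding, and the uniformity in $f$.
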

\begin{proof}
	For each $f\in \hospitals^k_n$ let $K^f_n$ be the number of students in $\mathcal{V}^I_n$ such that $\zeta_{f,w} < \epsilon$. I first prove that $P(K^f_n < \gamma \abs{\mathcal{V}^I_n} ) \rightarrow 0$ as $n\rightarrow \infty$.

	Since $\zeta_{f,w}$ is uniformly distributed over $[0,1]$, $K^f_n$ follows  binomial distribution $B(\abs{\mathcal{V}^I_n}, \epsilon )$. By Hoeffding's inequality, for every $f\in \hospitals^k_n$,
	\begin{align*}
		P(K^f_n < \gamma \abs{\mathcal{V}^I_n} ) \le \frac{1}{2} \text{exp} \Big \{{-2\frac{(\epsilon\abs{\mathcal{V}^I_n} - \gamma\abs{\mathcal{V}^I_n})^2}{\abs{\mathcal{V}^I_n}}} \Big \} \rightarrow 0
	\end{align*}
as $n \rightarrow \infty$. The claim of the lemma follows since $\{K^f_n \ge \gamma \abs{\mathcal{V}^I_n} \} \subseteq X^f_n $.
\end{proof}

\cref{lemma: punishment effective} proves that as market gets large, the payoff $f$ obtains from the punishment algorithm in \cref{definition:minmax matching} is bounded away from what it obtains from the reward matching $\hat{\phi}_n$.

\begin{lemma} \label{lemma: punishment effective}
There exists $g >0$ and $N$ such that 
\begin{equation*}
\exp [u_f(\underline{\phi}_n^f )] < q \big[ \sum_{i=1}^{I-1} \alpha_iV(C_i,1) + (1- \sum_{i=1}^{I-1} \alpha_i) V(C_I,1) \big]	 - g
\end{equation*}
for all $f\in \hospitals^k_n$ and $n>N$.
\end{lemma}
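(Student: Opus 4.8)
The plan is to use Lemma~\ref{lemma: seat distribution} to show that, under $\underline\phi_n^f$, each of $f$'s $q$ seats is matched to a student whose rank on $f$'s preference list is uniform over $f$'s top $|\hospitals^k_n|q$ students; this puts a seat into the lowest submarket tier with probability tending to $1-\sum_{i=1}^{I-1}\alpha_i>0$, and, conditional on landing in that tier, matches the seat to a student near the bottom of $f$'s list with probability bounded away from zero, at which point Lemma~\ref{lemma: clustering} forces a low idiosyncratic component. Aggregating over seats then yields the uniform loss $g$.

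Concretely, fix $f\in\hospitals^k_n$ and write $N=|\hospitals^k_n|q$, $\sigma=\sum_{i=1}^{I-1}\alpha_i<1$, and $B_n=N-\sum_{i<I}|\mathcal V^i_n|$, so $B_n/N\to 1-\sigma$. By the no-overlap condition $V(C_l,0)>V(C_{l'},1)$, $f$ ranks every tier-$i$ student above every tier-$j$ student whenever $i<j$; hence $f$'s $r$-th favorite lies in $\mathcal V^i_n$ exactly when $r$ falls in the corresponding ``tier-$i$ band'' of ranks, and $f$'s top $N$ students are all of $\cup_{i<I}\mathcal V^i_n$ together with the $B_n$ students of $\mathcal V^I_n$ having the largest $\zeta_{f,\cdot}$. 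Since $\underline\phi_n^f$ fills $f$'s seats with precisely its top $N$ students, and since (as the proof of Lemma~\ref{lemma: seat distribution} shows) the rank $\underline R_n^s$ of the student in a seat $s$ of $f$ is the position at which $s$ is selected in the run, which is uniform on $\{1,\dots,N\}$ for any order in which students are processed and hence independent of the match-quality matrix $\zeta_n$ (which only affects that order), I condition on which band $\underline R_n^s$ lands in: for $i<I$ the contribution to $\exp[\tilde u_f(\underline\phi_n^f(s))]$ is at most $\tfrac{|\mathcal V^i_n|}{N}V(C_i,1)$, while the tier-$I$ contribution equals $\tfrac{B_n}{N}\cdot\tfrac1{B_n}\sum_{j=1}^{B_n}\exp[V(C_I,\zeta^{(I)}_{(j)})]$, where $\zeta^{(I)}_{(1)}\ge\cdots\ge\zeta^{(I)}_{(|\mathcal V^I_n|)}$ are the decreasing order statistics of $\{\zeta_{f,w}:w\in\mathcal V^I_n\}$.

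The crux is bounding this tier-$I$ average strictly below $V(C_I,1)$. Choose $\gamma\in\big(1-\tfrac{1-\sigma}{\alpha_I},\,1\big)$ --- a nonempty interval because $\alpha_I\ge 1-\sigma>0$ --- and then $\epsilon\in(\gamma,1)$. By Lemma~\ref{lemma: clustering}, with probability $P(X^f_n)\to1$ (uniformly in $f$, as the Hoeffding bound there does not involve $f$) the worst $\gamma|\mathcal V^I_n|$ students of $\mathcal V^I_n$ all satisfy $\zeta_{f,\cdot}<\epsilon$, so $\exp[V(C_I,\zeta^{(I)}_{(j)})]\le V(C_I,\epsilon)+V(C_I,1)\,P((X^f_n)^c)$ for every $j>(1-\gamma)|\mathcal V^I_n|$. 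Because $\gamma>1-B_n/|\mathcal V^I_n|$ for large $n$, the fraction of indices $j\le B_n$ with $j>(1-\gamma)|\mathcal V^I_n|$ converges to $\nu:=1-\tfrac{(1-\gamma)\alpha_I}{1-\sigma}>0$, so the average is at most $V(C_I,1)-c$ with $c:=\tfrac{\nu}{4}\big(V(C_I,1)-V(C_I,\epsilon)\big)>0$ for all large $n$. Substituting back, $\exp[\tilde u_f(\underline\phi_n^f(s))]\le \sum_{i<I}\tfrac{|\mathcal V^i_n|}{N}V(C_i,1)+\tfrac{B_n}{N}\big(V(C_I,1)-c\big)$, which tends to $\sum_{i<I}\alpha_iV(C_i,1)+(1-\sigma)V(C_I,1)-(1-\sigma)c$; summing this identical bound over $f$'s $q$ seats and taking $n$ large gives $\exp[u_f(\underline\phi_n^f)]<q\big[\sum_{i<I}\alpha_iV(C_i,1)+(1-\sigma)V(C_I,1)\big]-g$ with $g:=q(1-\sigma)c/2$, uniformly over $f\in\hospitals^k_n$.

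The main difficulty is the tier-$I$ step. One must check that the part of $f$'s preference list controlled by Lemma~\ref{lemma: clustering} actually overlaps a non-vanishing share of the $B_n$ tier-$I$ students that $\underline\phi_n^f$ assigns to $f$ --- this is exactly why $\gamma$ cannot be taken arbitrarily small but must exceed $1-(1-\sigma)/\alpha_I$ --- and one must use that a seat's draft position is independent of $\zeta_n$ to replace the randomly indexed order statistic by an average of fixed-index order statistics. By contrast, the higher bands are routine: bounding each trivially by $V(C_i,1)$ loses nothing essential, since the tier-$I$ band alone carries asymptotic mass $1-\sigma>0$ and supplies the entire gap. (A minor edge case, that a few seats may be unmatched when $\sum_{i\le I}\alpha_i=1$, only lowers the bound and can be ignored.)
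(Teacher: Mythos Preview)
Your proposal is correct and takes essentially the same approach as the paper: both reduce to a per-seat bound via Lemma~\ref{lemma: seat distribution}, pick $\gamma$ just above the threshold $1-(1-\sigma)/\alpha_I$ (which is exactly the paper's $(\sum_{i}\alpha_i-1)/\alpha_I$), and then invoke Lemma~\ref{lemma: clustering} on the $\gamma$-tail of tier $I$ to produce the uniform gap, while bounding the higher tiers trivially by $V(C_i,1)$. The only cosmetic difference is bookkeeping: you make explicit that the draft position $\underline R_n^s$ is independent of $\zeta_n$ (valid, since the urn argument behind Lemma~\ref{lemma: seat distribution} uses only the i.i.d.\ student preferences and not the processing order) and therefore write the tier-$I$ contribution as an average of fixed-index order-statistic expectations, whereas the paper bounds the same quantity by intersecting the events $\Gamma_n^s$ and $X_n^f$ and using $P(\Gamma_n^s\cap (X_n^f)^c)\le P((X_n^f)^c)$ without appealing to independence.
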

\begin{proof}
To prove the claim, it suffices to prove that there exists $g>0$ and $N$ such that if $n>N$, then
\begin{equation*}
\exp [u_s(\underline{\phi}^f_n )] < \sum_{i=1}^{I-1} \alpha_iV(C_i,1) + (1- \sum_{i=1}^{I-1} \alpha_i)V(C_I,1) - g
\end{equation*}
for every hospital $f\in \hospitals^k_n$ and seat $s$ belonging to $f$. This is what I will prove below.

For every $f\in \hospitals^k_n$, every seat $s$ belonging to $f$, and $1 \le i \le I$, let 
\begin{equation*}
A^s_n(i) \equiv \big\{ \sum_{j\le i-1} \abs{\mathcal{V}^j_n}  < \underline R^s_n \le \sum_{j\le i} \abs{\mathcal{V}^j_n} \big \}	
\end{equation*}
denote the event that $s$ is matched to a student in $\mathcal{V}^i_n$ in the matching $\underline{\phi}^f_n$. By \cref{lemma: seat distribution}, $\underline R^s_n $ is uniformly distributed for every $s$ and $f$, so  $P(A^s_n(i))$ is independent of $f$ or $s$, and I will write $p^i_n \equiv P(A^s_n(i))$ for every $1 \le i \le I$.

Fix an $ { (\sum_{i=1}^{I} \alpha_i -1)}/{\alpha_I} < \epsilon <1$ and ${ (\sum_{i=1}^{I} \alpha_i -1)}/{\alpha_I} < \gamma < \epsilon$.\footnote{Note that $\alpha_I >0$ , and $0 < { (\sum_{i=1}^{I} \alpha_i -1)}/{\alpha_I} = 1 - {(1- \sum_{i=1}^{I-1} \alpha_i)}/{\alpha_I} <1$. So such $\epsilon$ always exists.} Since the function $V(C_I,.)$ is strictly increasing, there exists $g_I > 0$ such that $V(C_I, \zeta ) <V(C_I, 1) - g_I$ for all $\zeta < \epsilon$. For every seat $s$ belonging to any $f\in \hospitals^k_n$, let
\begin{equation*}
	\Gamma^s_n \equiv \big \{ \sum_{i=1}^{ I-1} \abs{\mathcal{V}^i_n} + (1 - \gamma) \abs{\mathcal{V}^I_n } \le \underline R^s_n  \le \sum_{i=1}^{I} \abs{\mathcal{V}^i_n} \big \}
\end{equation*}
denote the event that $s$ is matched to the $\gamma$-tail section of $\mathcal{V}^I_n$ in the matching $\underline{\phi}^f_n$. Again, by \cref{lemma: seat distribution}, $\underline R^s_n $ is uniformly distributed so $P( \Gamma^s_n)$ is independent of $f$ or $s$. I will write $p^{\Gamma}_n \equiv P( \Gamma^s_n)$.

For each hospital $f\in \hospitals^k_n$, let $X^f_n$ denote the event that, according to any hospital $f$'s preference, students in the $\gamma$-tail section of $\mathcal{V}^I_n$ all satisfy $V(C_I, \zeta_{f,w} ) <V(C_I, 1) - g_I$. Since all hospitals in $\hospitals^k_n$ are symmetric, the probability $P(X^f_n)$ does not depend on $f$, and I will simply write $p^X_n \equiv P(X^f_n)$.

By the definition of the events $\Gamma^s_n$ and $X^f_n$, we have, for every $f\in \hospitals^k_n$ and seat $s$ belonging to $f$,
\begin{align} \label{inequality: gamma nonzero}
	P \big (u_s( \underline{\phi}^f_n )  < V(C_I, 1) - g_I \, |\, A^s_n(I) \big ) & \ge  P( \Gamma^s_n \cap X^f_n \, |\, A^s_n(I) )\nonumber \\
	& = P( \Gamma^s_n \cap X^f_n \cap A^s_n(I) ) \, / \, P(A^s_n(I)) \nonumber \\
	& = P( \Gamma^s_n \cap X^f_n  ) \, / \, P(A^s_n(I)) \nonumber \\
	& = \Big[ p^{\Gamma}_n  - P \big(\Gamma^s_n \cap (X^f_n)^c \big) \Big ]\,/ \, p^{I}_n \nonumber \\
	& \ge \Big[ p^{\Gamma}_n - (1 - p^X_n) \big) \Big ]\,/ \, p^{I}_n
\end{align}
Note that the second expression above follows since $\Gamma^s_n \subseteq A^s_n(I)$.

As $n\rightarrow \infty$, $p^I_n \rightarrow 1 - \sum_{i=1}^{I-1} \alpha_i > 0$ and $p^{\Gamma}_n \rightarrow 1- \sum_{i=1}^{I-1} \alpha_i  - (1-\gamma)\alpha_I >0$. Importantly, these limits are all strictly positive numbers.\footnote{To see why, note that by our choice of $\gamma$, $p^{\Gamma}_n = 1- \sum_{i=1}^{I-1} \alpha_i  - (1-\gamma)\alpha_I > 1- \sum_{i=1}^{I-1} \alpha_i - [1 - { (\sum_{i=1}^{I} \alpha_i -1)}/{\alpha_I}] \alpha_I  = 0$} In addition, by \cref{lemma: clustering}, $p^X_n \rightarrow 1$ as $n \rightarrow \infty$. So inequality \eqref{inequality: gamma nonzero} implies that there exits $N_I$such that if $n \ge N_I$,
\begin{equation*}
P \Big( u_s( \underline{\phi}^f_n )  < V(C_I, 1) - g_I \;| \; A^s_n(I) \Big) \ge \frac{p^{\Gamma}_n }{2 p^I_n},
\end{equation*}
for all $f\in \hospitals^k_n$ and seat $s$ belonging to $f$, and therefore
\begin{equation} \label{inequality: gamma expectation}
\exp \big( u_s( \underline{\phi}^f_n )  \;| \; A^s_n(I) \big) < V(C_I, 1) - g_I \frac{p^{\Gamma}_n }{2 p^I_n}.
\end{equation}
Meanwhile, for $1 \le i \le I-1$, $p^i_n \rightarrow \alpha_i \ge 0$ and
\begin{equation} \label{inequality: non gamma expectation}
\exp \big( u_s( \underline{\phi}^f_n )  \;| \; A^s_n(i) \big) \le  V(C_i, 1).
\end{equation}

Define $g \equiv  (1- \sum_{i=1}^{I-1}  \alpha_i) g_I \frac{p^{\Gamma}_n }{2 p^I_n} >0$. Combining inequalities \eqref{inequality: gamma expectation} and \eqref{inequality: non gamma expectation}, there must exist $N$ such that for all $n > N$,
\begin{align*}
\exp \big( u_s( \underline{\phi}^f_n ) \big) & = \sum_{1\le i\le m} p^i_n \;\exp \big( u_s( \underline{\phi}^f_n )  \;| \; A^s_n(i) \big) \\
& < \sum_{i=1}^{I-1} \alpha_iV(C_i,1) + (1- \sum_{i=1}^{I-1} \alpha_i)V(C_I,1) - g
\end{align*}
for all $f\in \hospitals^k_n$ and seat $s$ belonging to $f$. This completes the proof.
\end{proof}

Finally, \cref{lemma: no deviation while punished} proves that when a hospital $f$ is being punished, it cannot find any profitable deviations with students.

\begin{lemma} \label{lemma: no deviation while punished}
For every $f\in \hospitals^k_n$,  no coalition in the form of $(f, W)$ can be a profitable deviation from the matching $\underline{\phi}^f_n$.
\end{lemma}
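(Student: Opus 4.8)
The plan is to isolate a one-sided ``stability'' property of the punitive matching $\underline{\phi}_n^f$ and then show it kills every blocking coalition led by $f$. The property I would record is the following. \textbf{Claim:} in $\underline{\phi}_n^f$, for every student $w$ with $\underline{\phi}_n^f(w)\ne f$, \emph{either} (i) $\underline{\phi}_n^f(w)\succ_w f$, \emph{or} (ii) $f$ is matched to $q$ students, each of whom $f$ ranks above $w$. This is read off directly from \cref{definition:minmax matching}: students are processed in $f$'s preference order, best first; at her turn a student is matched to her favorite seat among those still free; and the set of free seats only shrinks as the procedure runs. So if $w$ is processed at some step and $f$ still has a free seat at that step, $w$'s not choosing $f$ forces $\underline{\phi}_n^f(w)\succ_w f$, i.e.\ (i); if instead $f$ has no free seat left at that step, all $q$ seats of $f$ were filled at earlier steps by students $f$ ranks above $w$, i.e.\ (ii); and if $w$ is never processed, the procedure must have halted because all seats (in particular $f$'s) were filled, again by students $f$ ranks above $w$ (the processed students form a $f$-top prefix), so (ii) again.

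Granting the Claim, I would argue by contradiction: suppose $(f,W)$, $|W|\le q$, is a profitable deviation from $\underline{\phi}_n^f$. Since student preferences are strict and every $w\in W$ is rematched to $f$, each $w\in W\setminus\underline{\phi}_n^f(f)$ must strictly prefer $f$ to $\underline{\phi}_n^f(w)$; by the Claim this is incompatible with (i), so (ii) holds for every such $w$. In particular $\underline{\phi}_n^f(f)$ has exactly $q$ students, and $u_f(w)<u_f(w')$ for every $w\in W\setminus\underline{\phi}_n^f(f)$ and every $w'\in\underline{\phi}_n^f(f)$. Using $|W|\le q=|\underline{\phi}_n^f(f)|$ I would pick an injection $\iota\colon W\to\underline{\phi}_n^f(f)$ that fixes $W\cap\underline{\phi}_n^f(f)$ pointwise and sends $W\setminus\underline{\phi}_n^f(f)$ into $\underline{\phi}_n^f(f)\setminus W$; then $u_f(\iota(w))=u_f(w)$ for $w\in W\cap\underline{\phi}_n^f(f)$ and $u_f(\iota(w))>u_f(w)$ for $w\in W\setminus\underline{\phi}_n^f(f)$, so by additivity of $\tilde u_f$ in the large-market model $\tilde u_f(W)\le\sum_{w\in W}u_f(\iota(w))\le\tilde u_f(\underline{\phi}_n^f(f))$, the last step using that the remaining students of $\underline{\phi}_n^f(f)$ contribute positive utility. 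Hence $f$ does not strictly prefer $W$ to its current match, contradicting profitability. The degenerate case $W\subseteq\underline{\phi}_n^f(f)$ is immediate ($f$ only sheds students with positive utility), and deviations by the singleton $\{f\}$ are subsumed since, as noted earlier, they amount to coalitions $(f,W)$ with $W\subseteq\underline{\phi}_n^f(f)$. (If one wants $W$ to range over students outside the submarket as well: students in strictly higher tiers are matched to hospitals they strictly prefer to $f$ and so do not join, while students in strictly lower tiers give $f$ less utility than any member of $\underline{\phi}_n^f(f)$ by the no-overlap assumption $V(C_l,0)>V(C_{l'},1)$, so the same comparison goes through.)

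I expect the main obstacle to be purely the bookkeeping in the second step: making sure that $|W|\le q$ together with alternative (ii) really leaves enough ``room'' in $\underline{\phi}_n^f(f)$ to dominate $W$ student-by-student, and being careful that students already matched to $f$ enter the comparison only through equalities, so that one must separately dispatch the case $W=\underline{\phi}_n^f(f)$ (not a deviation). The Claim itself, and the reduction to it, are routine once one notes the monotonicity of the set of free seats along the procedure in \cref{definition:minmax matching}.
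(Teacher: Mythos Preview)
Your proof is correct and follows essentially the same approach as the paper. Both arguments hinge on the observation that if a student $w$ not matched to $f$ nonetheless prefers $f$ to her current match, then $f$ must already have been full when $w$ was processed in the serial dictatorship, so every student in $\underline{\phi}^f_n(f)$ is ranked above $w$ by $f$; responsiveness (or additivity) then rules out $f$ preferring $W$. You package this as a standalone Claim and spell out the injection argument and the degenerate cases more carefully than the paper does, and your parenthetical about students outside the submarket is a useful addition for when the lemma is applied to the full matching $\underline{m}^f_n$, but the underlying logic is identical.
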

\begin{proof}
Suppose there is a profitable coalition $(f, W)$ for some $W$. Let $w\in W\backslash  \underline{\phi}^f_n(f)$ be any new student in $W$ who is not originally matched to $f$ in $  \underline{\phi}^f_n$. I will show that $u_f(w') > u_f(w) $ for all $w'\in \underline{\phi}^f_n(f)$. This will be a contradiction to $(f, W)$ being a profitable coalition, because in this case any new student in $W$ is worse than the student in $\underline{\phi}^f_n(f)$ she replaced. It is then impossible for $f$ to prefer $W$ over $\underline{\phi}^f_n(f)$ since hospitals' preferences are responsive.

To this end, first observe that 
if 
there exists $w'\in \underline{\phi}^f_n(f)$ such that $u_f(w') < u_f(w) $, then it means  that in Step \ref{alg: m hat} in the punishment algorithm in \cref{definition:minmax matching}, $w$ chose $\underline{\phi}^f_n(w)$ over $f$ when $f$ still had vacancy (which would later be filled by $w'$) so $\underline{\phi}^f_n(w) \succ_w f$, and $w$ does not find this deviation profitable. This is a contradiction. So  $u_f(w') > u_f(w) $ for all $w'\in \underline{\phi}^f_n(f)$. This completes the proof.
\end{proof}

\section{Supplementary Appendix (Not for Publication)}
\subsection{Preliminary Lemmas} \label{section: supp prelim}

\begin{proofof}{\cref{lemma: one-shot-deviation}}
At each history $h$, hospital $f$ faces a decision over whether and how to deviate with students that are currently in the market. Suppose hospital $f$ has a deviation plan $d_f$ from matching process $\mu$ that is both feasible and profitable, since stage-game payoffs are bounded for hospital $f$ and there is discounting, the standard one-shot deviation principle for individual decision making \citep{blackwell1965discounted} implies that there exists a history $\hat{h} \in \histories$ such that
\begin{equation*}
	(1-\delta)\tilde{u}_f(d_f(\hat{h}) ) + \delta U_f\big(\hat{h}, \big[\mu(\hat{h}),(f, d_f(\hat{h}) ) \big]\, \big | \, \mu \big) > U_f(\hat{h}| \mu ).
\end{equation*}
Consider the deviation plan $d^o_f$ that satisfies $d^o_f(h) = d_f(h)$ if $h =\hat{h}$, and $d^o_f(h) = \mu(f|h) $ otherwise. $d^o_f$ is a profitable one-shot deviation plan for hospital $f$.
\end{proofof}

\cref{lemma: identifiability of manipulator} shows that if a hospital deviates from the recommended matching with a group of students, then this deviating hospital can be uniquely identified by comparing the resulting matching against the recommended matching.

\begin{lemma}\label{lemma: identifiability of manipulator}
	Let $m$ be a static matching. 
If $[m, (f, W)] = [m, (f', W')] \ne m $, then $f = f'$.
\end{lemma}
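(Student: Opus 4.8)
The plan is to prove the statement by contraposition through a short bookkeeping argument: assuming $f \neq f'$, I would show that the common matching $m' \equiv [m,(f,W)] = [m,(f',W')]$ must coincide with $m$, contradicting the hypothesis $m' \neq m$.

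First I would read off \cref{assumption: no further deviation} on both representations of $m'$. Because $f \neq f'$, applying the assumption to the right-hand deviation $(f',W')$ gives $m'(f) = m(f)\backslash W'$, while applying it to the left-hand deviation $(f,W)$ gives $m'(f) = W$; hence $W = m(f)\backslash W'$, and in particular $W \subseteq m(f)$. Symmetrically, comparing the two expressions for $m'(f')$ yields $W' = m(f')\backslash W \subseteq m(f')$.

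Next I would use that $m$ is a genuine stage-game matching: by condition (iii) in the definition of a stage-game matching, $m(f)$ and $m(f')$ are disjoint whenever $f \neq f'$ (a student cannot have $m(w)=f$ and $m(w)=f'$). Since $W \subseteq m(f)$ and $W' \subseteq m(f')$, this forces $W' \cap m(f) = \emptyset$, so $m(f)\backslash W' = m(f)$ and therefore $W = m(f)$; likewise $W' = m(f')$.

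Finally I would verify $m' = m$ hospital by hospital. We have $m'(f) = W = m(f)$ and $m'(f') = W' = m(f')$; and for every other hospital $f''$ with $f'' \notin \{f,f'\}$, \cref{assumption: no further deviation} gives $m'(f'') = m(f'')\backslash W = m(f'')\backslash m(f) = m(f'')$, once more by disjointness of distinct hospitals' match sets under $m$. So $m'$ and $m$ assign the same set of students to every hospital; since a stage-game matching is pinned down by these assignments (the unmatched students being exactly those in no $m(f)$), this gives $m' = m$, the desired contradiction, so $f = f'$. I do not expect a genuine obstacle here; the only points warranting care are this last observation—that agreement on the hospital side is agreement of matchings—and keeping the roles of the two deviations straight when invoking \cref{assumption: no further deviation}.
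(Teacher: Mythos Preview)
Your argument is correct and uses the same two ingredients as the paper's proof—\cref{assumption: no further deviation} and the disjointness of distinct hospitals' match sets under $m$. The organization differs slightly: the paper splits into the cases $W \subseteq m(f)$ and $W \nsubseteq m(f)$ and in each case exhibits a hospital at which the two post-deviation matchings disagree, contradicting $[m,(f,W)] = [m,(f',W')]$; you instead read off both representations of $m'(f)$ and $m'(f')$ to force $W = m(f)$ and $W' = m(f')$, and then conclude $m' = m$, contradicting the other half of the hypothesis. Your route avoids the case split and is a touch cleaner, but the two arguments are the same in substance.
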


\begin{proof}
	Let $\overline{m}\equiv [m, (f, W)] $, and $\hat{m} \equiv [m, (f', W')]$. 
Suppose by contradiction that $f \ne f'$, but $\hat{m} = \overline{m}  \ne m $. There are three cases to consider: 
\begin{enumerate}
	\item If $W \subseteq m (f )$: then $\overline{m}(f')  =  m( f' ) \ne  \hat{m}(f') $, so $\overline{m} \ne \hat{m}$, a contradiction.
	\item If $W \nsubseteq m(f)$: then $\overline{m}(f) \nsubseteq m(f) $ but $\hat{m}(f) \subseteq m(f)$, so $\overline{m} \ne \hat{m}$, a contradiction. 
\end{enumerate}
Therefore, $[m, (f, W )] = [m, (f', W' )] \ne m $ implies $f = f'$.
\end{proof}

\subsection{Proof of \cref{theorem:folk-theorem}}
Fix $\lambda^0 \in \Lambda^* $. Define $u^0 = u({\lambda^0}) $ and $U^* \equiv \{u(\lambda):\lambda \in \Lambda^* \}$. Observe that for hospitals in $\hospitals\cap\reduced$, the set $\redmatching^{\circ}$ satisfies the non-equivalent utilities (NEU) condition in \citet{abreuduttasmith}: holding $f\in \hospitals\cap \reduced$ unmatched, $f$ is indifferent towards how another hospital $f'\in \hospitals\cap \reduced$ matches with students, so their utilities cannot be positive affine transformation of another. Lemma 1 and Lemma 2 in \citet{abreuduttasmith} then ensure the existence of vectors $\{ u^{f}: f\in \hospitals\cap\reduced  \} \subseteq U^*$, such that
\begin{equation*}
u^{f}_f < u^0_f \,\, \text{ and }\,\, u^{f}_f < u^{f'}_f
\end{equation*}
for all $f, f' \in \hospitals\cap\reduced$ and $f \ne f'$. Let $\lambda^f \in \Lambda^*$ be the distribution over $\redmatching$ that give rise to the payoff vector $u^f$ for each $f$. In addition, for each $f\in \hospitals \cap \reduced$, let 
\begin{equation*}
\underline{m}_f  \equiv \argmin_{m \in \redmatching^{\circ}} \max_{ W \subseteq D_f(m), |W| \le q_f } u_f(W)	
\end{equation*}
be the stage-game recommendation to minmax hospital $f$.

Consider the matching process represented by the automaton $(\Theta, \gamma^0, f, \gamma)$, where
\begin{enumerate}
	\item $\Theta = \big \{\theta(e,m): e \in \hospitals\cap\reduced \cup \{ 0 \} , m  \in \redmatching \big \} \cup \big \{ \underline{\theta}(f, t): f \in \hospitals\cap\reduced , \; 0\le t < L \big \}$ is the set of all states;
	\item $\gamma^0$ is the initial distribution over states, which satisfies $\gamma^0( \theta(0,m) ) = \lambda^0(m ) $ for all $m \in \redmatching^{\circ}$;
	\item $O:\Theta \rightarrow \smatchings $ is the output function, where $O(\theta(e, m )) = m $ and $O(\underline{\theta}(f, t))= \underline{m}_f$;
	\item $\kappa: \Theta \times \smatchings \rightarrow \Delta(\Theta)$ is the transition function.
	For states $\{\underline{\theta}(f, t)| 0 \le t < L -1 \}$, $\kappa$ is defined as
	\[
	\kappa \big( \underline{\theta} (f, t ), m' \big) =
	\begin{cases}
	\underline{\theta}(f', 0) & \text{if $m' \ne \underline{m}_f$; $m' = [\underline{m}_{f'}, (f', W) ]$ for some  $f' \in \hospitals\cap\reduced$ and $W \subseteq \students$}\\
	\underline{\theta}(f, t +1) & \text{otherwise}
	\end{cases}
	\]
	For states $\underline{\theta}(f, L -1)$, the transition is defined as
	\[
	\kappa \big( \underline{\theta}(f, L -1), m' \big) =
	\begin{cases}
	\underline{\theta} (f', 0) & \text{if $m' \ne \underline{m}_f$; $m' = [\underline{m}_{f'}, (f', W) ]$ for some  $f' \in \hospitals\cap\reduced$ and $W \subseteq \students$} \\
		\gamma^f & \text{otherwise}
	\end{cases}
	\]
	where for each $f\in \hospitals\cap\reduced$, $p^f$ is the distribution over states that satisfies $\gamma^f(\theta(f, m) ) = \lambda^f (m)$ for all $m \in \smatchings$.
	
	For states $\theta (e,m )$, the transition is
	\[
	\kappa \big( \theta (e, m ), m' \big) =
	\begin{cases}
	\underline{\theta}(f',0) & \text{if $m' \ne \underline{m}_f$; $m' = [\underline{m}_{f'}, (f', W) ]$ for some  $f' \in \hospitals\cap\reduced$ and $W \subseteq \students$} \\
	\gamma^e & \text{otherwise}
	\end{cases}
	\]			
	where the distributions $\gamma^e$ are defined as above.
\end{enumerate}
Note that owing to the identifiability of deviating hospital (\cref{lemma: identifiability of manipulator}), for any $\theta \in \Theta$ and matching $m' \ne O(\theta)$, we can uniquely identify the hospital responsible for this deviation, so the transition above is well-defined. 


Note that no hospitals in $\tcs$ wish to deviate, since they are always matched with their top coalition students; no students want to deviate since all recommended matchings are in $\redmatching^{\circ}$. It remains to verify no hospital $f\in \hospitals\cap\reduced$ has incentives to deviate. Choose a number $Z >\sup_{ \{ m \in \smatchings , f\in \hospitals\cap\reduced \} } u_{f}(m ) $
\bigskip
	
\noindent \textbf {{For states of the form $\big \{\theta(e,m) \big \}$}:} Consider a one-shot deviation $(f, W)$. There are two cases to consider.

\smallskip
\noindent \textit{Case 1:} $f \ne e$. 
Without deviation, $f$ has value $(1-\delta) {u}_{f}(m) + \delta  u_{f}^{e} $. After deviation, $f$ yields less than $(1-\delta) Z +\delta(1-\delta^{L }) \underline{u}^{\reduced}_{f} +\delta^{L +1} u^{f}_{f} $. There is no profitable one-shot deviation for $f$ if
	\[
	(1-\delta) u_{f}(m) + \delta u_{f}^{e} \ge (1-\delta) Z +\delta(1-\delta^{L }) \underline{u}^{\reduced}_{f} +\delta^{L +1} u^{f}_{f} 
	\]
	As $\delta \rightarrow 1$, the LHS converges to $u^e_{f}$ while the RHS converges to $u^{f}_{f}$. By construction, $u^e_{f}> u^{f}_{f}$, so such deviations are not profitable for $\delta$ high enough.
\medskip
	
\noindent \textit{Case 2:}  $f = e$.  Without deviation, $f$ has value $(1-\delta) u_{f}(m) + \delta u_{f}^{f} $. After deviation, $f$ yields less than $(1-\delta) Z +\delta(1-\delta^{L }) \underline{u}^{\reduced}_{f} +\delta^{L +1} u^{f}_{f}$.
There is no profitable one-shot deviation for $f$ if
\[
	(1-\delta) u_{f}(m) + \delta u_{f}^{f} \ge (1-\delta) Z +\delta(1-\delta^{L }) \underline{u}^{\reduced}_{f} +\delta^{L +1} u^{f}_{f}.
\]
The inequality is equivalent to
\[
 Z - u_{f}(m )   \le \delta(1 + \ldots + \delta^{L-1} ) [u^{f}_{f} - \underline{u}^{\reduced}_{f} ]
\]
By construction, $u^{f}_{f} - \underline{u}^{\reduced}_{f}>0$. Choose $L$ large enough so that $L (u^{f}_{f} - \underline{u}^{\reduced}_{f}) >  Z - u_{f}(m )$. As $\delta \rightarrow 1$, the LHS remains unchanged while the RHS converges to $L (u^{f}_{f} - \underline{u}^{\reduced}_{f})$, so such deviations are not profitable for $\delta$ high enough.
\bigskip

\noindent \textbf{For states of the form $\big \{ \underline{\theta}(f, t) \big \}$:} Consider a one-shot deviation $(f', W)$. There are two cases to consider.
\smallskip

	\noindent \textit{Case 1:} $f' \ne f$. Without deviation, hospital $f'$ has payoff $(1-\delta^{L - t} ) u_{f'} (\underline{m}_f)  + \delta^{L - t} u_{f'}^{f}$. After deviation, $f'$ has payoff less than $(1-\delta) Z + \delta(1-\delta^{L}) \underline{u}^{\reduced}_{f'}  +\delta^{L +1} u_{f'}^{f'}$. There is no profitable one-shot deviation for $f'$ if
	\begin{equation*}
	(1-\delta^{L - t} ) u_{f'} (\underline{m}_f)  + \delta^{L - t} u_{f'}^{f} \ge (1-\delta) Z + \delta(1-\delta^{L}) \underline{u}^{\reduced}_{f'}  +\delta^{L +1} u_{f'}^{f'}
	\end{equation*}
	As $\delta \rightarrow 1$, the LHS converges to $u^{f}_{f'}$ for all $0\le t \le L$, while the RHS converges to $u^{f'}_{f'}$. By construction $u^{f}_{f'} > u^{f'}_{f'}$. So the above inequality holds for sufficiently high $\delta$. 
\medskip

\noindent \textit{Case 2:} $f' = f$. Without deviation, hospital $f'$ has payoff $(1-\delta^{L - t}) \underline{u}^{\reduced}_{f'} + \delta^{L - t} u_{f'}^{f'}$. When deviating from $\underline{m}_{f'}$, $f'$ can obtain at most $\underline{u}^{\reduced}_f$. So its payoff from deviation is at most
\[
(1-\delta) \underline{u}^{\reduced}_{f'} + \delta(1-\delta^{L}) \underline{u}^{\reduced}_{f'} +\delta^{L +1} u_{f'}^{f'} = (1 - \delta^{L+1} ) \underline{u}^{\reduced}_{f'} + \delta^{L +1} u_{f'}^{f'} 
\]
Hospital $f'$ has no profitable deviation if
$ (1-\delta^{L - t}) \underline{u}^{\reduced}_{f'}  + \delta^{L - t} u_{f'}^{f'} \ge (1 - \delta^{L+1} ) \underline{u}^{\reduced}_{f'} + \delta^{L +1} u_{f'}^{f'}$, or
\[
	u_{f'}^{f'} \ge \underline{u}^{\reduced}_{f'}.
\]
This is true by construction. So $f'$ has no profitable one-shot deviation.

We have verified that there is no profitable one-shot deviation in any states of the automaton. This completes the proof.

\subsection{Completion of the Proof of \cref{theorem: can reduce capacity}} \label{section: supp matching process}

Recall that $Q^{\students}_n (l) \equiv \sum_{l'\le l} |\students^{l'}_n |$ denote the number of students who are in a quality class no worse than $\students^l_n $, while $Q^{\hospitals}_n (k) \equiv q \sum_{k'\le k } |\hospitals^{k'}_n |$ denote the number of hospital seats that are in a quality class no worse than $\hospitals^k_n$. Let $\mathcal{A}(k) = \{ j+1, \ldots, j+I \}$ be the set of student quality classes achievable by hospitals in quality class $k$. By definition, there exists $N_0$ such that for all $n>N_0$ and every $l \notin \mathcal{A}(k)$,

\begin{equation} \label{equation: non-achievable out}
\text{either } \;\;  Q^{\students}_n(l) < Q^{\hospitals}_n(k-1)  \;\; \text{ or }   \;\; Q^{\students}_n(l-1) > Q^{\hospitals}_n(k).
\end{equation} 
I focus on market sizes greater than this $N_0$.
By the inequalities in \eqref{equation: non-achievable out}, in all the matchings defined below, students in any unachievable quality class by $\hospitals^k_n$ are either already matched to hospitals in a higher quality class, or ranked too low to be relevant for $\hospitals^k_n$. Therefore, to analyze the payoff for a hospital $f\in \hospitals^k_n$, it suffices to isolate $\hospitals^k_n$ and $\cup_{l\in \mathcal{A}(k)} \students^l_n$ as if they are the only hospitals and students in the market. For each $l \in \mathcal{A}(k)$, let
\begin{equation*}
\alpha_{l} \equiv \lim_{n \rightarrow \infty} \Bigg\{ \frac{ Q^{\students}_n(l) - \max \{ Q^{\students}_n(l-1), Q^{\hospitals}_n(k-1) \} } {q|\hospitals^k_n |} \Bigg\}.
\end{equation*}
denote the asymptotic share of seats in $\hospitals^k_n$ filled by students in $\students^l_n$. Since $\abs{\students_n} = \lceil \beta nq \rceil$, we know all $\alpha_l$'s are finite numbers. Note that they also satisfy $ \sum_{i=1}^{I-1}\alpha_{j+i} <1 $ and $ \sum_{i=1}^{I}\alpha_{j+i} \ge 1 $.



\paragraph{Minmax Matching.} For each $f\in \hospitals^k_n$, consider the matching $\underline{m}_n^{f}$ defined by the following procedure:
\begin{enumerate}
	\item Let ${\phi}_n^{<}$ be a stable matching between $\cup_{j<k}\hospitals^j_n$ and $\students_n$. Set $\underline{m}_n^{f}(f') = {\phi}^{<}(f')$ for every $f' \in \cup_{j<k}\hospitals^j_n$;
	\item Let $\underline{\phi}_n^f$ be the punitive matching for $f$ among $\hospitals^k_n$ and $\students_n\backslash\, {\phi}_n^{<}(\cup_{j<k} \hospitals^j_n )$. Set $\underline{m}^f_n(f') = \underline{\phi}^f_n(f')$ for every $f' \in \hospitals^k_n$; \label{alg: m hat}
	\item Let ${\phi}_n^{>}$ be a stable matching between $\cup_{j>k}\hospitals^j_n$ and $\students_n\backslash \big[ {\phi}_n^{<}(\cup_{j<k} \hospitals^j_n )\cup \underline{\phi}_n^f(\hospitals^k_n ) \big]$. Set $\underline{m}_n^f(f') = {\phi}_n^{>}(f')$ for every $f' \in \cup_{j>k}\hospitals^j_n$.
\end{enumerate}

By \cref{lemma: punishment effective}, there exists $\underline{g}>0$ and market size $N_1$ such that $\exp [u_f(\underline{m}^f_n)] < \sum_{i = 1}^{I-1} \alpha_{j+i} V(C_{j+i}, 1 ) + (1 - \sum_{i = 1}^{I-1} \alpha_{j+i} )V(C_{j+I}, 1) - \underline{g} $ for all $f \in \hospitals^k_n$ and $n>N_1$.



\paragraph{Reward Matching.}
Consider the matching $\hat{m}_n$ defined by the following procedure:
\begin{enumerate}
	\item Let ${\phi}^{<}_n$ be a stable matching between $\cup_{j<k}\hospitals^j_n$ and $\students_n$. Set $\hat{m}_n(f) = {\phi}_n^{<}(f)$ for every $f \in \cup_{j<k}\hospitals^j_n$;
	\item Let $\hat{\phi}_n$ be the matching resulting from hospital-proposing random serial dictatorship between $\hospitals^k_n$ and $\students_n\backslash {\phi}_n^{<}(\cup_{j<k} \hospitals^j_n )$. Set $\hat{m}_n(f) = \hat{\phi}_n (f) $ for all $f \in \hospitals^k_n$; 
	\item Let $\phi^{>}_n$ be a stable matching between $\cup_{j>k}\hospitals^j_n$ and $\students_n\backslash \big[ \phi_n^{<}(\cup_{j<k} \hospitals^j_n )\cup \hat{\phi}_n (\hospitals^k_n ) \big]$. Set $\hat{m}_n(f) = \phi_n^{>}(f)$ for every $f \in \cup_{j>k}\hospitals^j_n$.
\end{enumerate}


By \cref{lemma: rsd high payoffs}, there exists  market size $N_2$ such that $\exp [u_f(\hat{m}_n)] > \sum_{i = 1}^{I-1} \alpha_{j+i} V(C_{j+i}, 1 ) + (1 - \sum_{i = 1}^{I-1} \alpha_{j+i} )V(C_{j+I}, 1) - \frac{1}{2} \underline{g}$ for all $f \in \hospitals^k_n$ and $n>N_2$.

I focus on market sizes greater than $N \equiv \max\{N_0, N_1, N_2\} $. In particular, 
\begin{equation} \label{inequality: gap between reward and punishment}
	\exp [u_f(\hat{m}_n )] > \exp [u_f(\underline{m}^f_n )] + \frac{1}{2} \underline{g}
\end{equation}
for all $n>N$ and all $f\in \hospitals^k_n$.

\paragraph{Matching with Reduced Capacity.} Consider the matching ${m}^*_n$ defined by the following procedure:
\begin{enumerate}
	\item Let ${\phi}^{<}_n$ be a stable matching between $\cup_{j<k}\hospitals^j_n$ and $\students_n$. Set $m^*_n(f) = {\phi}_n^{<}(f)$ for every $f \in \cup_{j<k}\hospitals^j_n$;
	\item Let $\phi^*_n$ be the student-proposing stable matching between $\hospitals^k_n$ and $\students_n\backslash {\phi}_n^{<}(\cup_{j<k} \hospitals^j_n )$, where instead of letting every hospital use up its hiring capacity $q$, I treat each hospital as if only having $(q-1)$ quotas. Set $m^*_n(f) = \phi^*_n(f)$ for every $f\in \hospitals^k_n$.
	\item Let $\phi^{>}_n$ be a stable matching between $\cup_{j>k}\hospitals^j_n$ and $\students_n\backslash \big[ \phi_n^{<}(\cup_{j<k} \hospitals^j_n )\cup \phi^*_n (\hospitals^k_n ) \big]$. Set $m^*_n(f) = \phi_n^{>}(f)$ for every $f \in \cup_{j>k}\hospitals^j_n$.
\end{enumerate}
Observe that $\exp [u_f( m^*_n)] \ge 0 $ for every $f\in \hospitals^k_n$. Combine this with inequality \eqref{inequality: gap between reward and punishment}, we know there exists $p^0 \in (0,1)$ and $\epsilon_0>0$ such that 
\begin{equation*} 
	p^0 \exp [u_f(m^*_n)] + (1- p^0) \exp [u_f(\hat{m}_n )] > \exp [u_f(\underline{m}^f_n)] + \epsilon_0
\end{equation*}
for all $n>N$ and every $f\in \hospitals^k_n$. Let $\lambda^0_n$ denote the lottery over $m^*_n$ and $\hat{m}_n$ with $	p^0$ weight on $m^*_n$ and $(1-p^0)$ weight on $\hat{m}_n$, then the inequality above is equivalent to
\begin{equation} \label{inequality: target lottery non vanishing gap}
	\exp [u_f(\lambda^0_n)]  > \exp [u_f(\underline{m}^f_n)] + \epsilon_0	
\end{equation}
for all $n>N$ and every $f\in \hospitals^k_n$.

\paragraph{Hospital-Specific Punishments.}
For each hospital $f\in \hospitals^k_n$, consider the matching ${m}^f_n$ defined by the following procedure:
\begin{enumerate}
	\item Let ${\phi}^{<}_n$ be a stable matching between $\cup_{j<k}\hospitals^j_n$ and $\students_n$. Set $m^f_n(f') = {\phi}_n^{<}(f')$ for every $f' \in \cup_{j<k}\hospitals^j_n$;
	\item Let $\phi^f_n$ be the student-proposing stable matching between $\hospitals^k_n$ and $\students_n\backslash {\phi}_n^{<}(\cup_{j<k} \hospitals^j_n )$, where we treat hospital $f$ as if having $0$ quota, while others are allowed to hire up to full capacity. Set $m^f_n(f') = \phi^f_n(f')$ for every $f'\in \hospitals^k_n$.
	\item Let $\phi^{>}_n$ be a stable matching between $\cup_{j>k}\hospitals^j_n$ and $\students_n\backslash \big[ \phi_n^{<}(\cup_{j<k} \hospitals^j_n )\cup \phi^f_n (\hospitals^k_n ) \big]$. Set $m^f_n(f') = \phi_n^{>}(f')$ for every $f' \in \cup_{j>k}\hospitals^j_n$.
\end{enumerate}
Observe that $\exp [u_{f'}(m^f_n)]\ge 0$ for all $n> N$ and $f, f'\in \hospitals^k_n$. Combine this with inequality \eqref{inequality: target lottery non vanishing gap}, we know there exists $p^r \in (0,1)$ and $\epsilon_1$ such that for all $n>N$ and $f,f' \in \hospitals^k_n$,
\begin{equation*}
	p^r \exp [u_f( \lambda^0_n ) ] + (1 -p^r ) \exp [ u_f(m^{f'}_n )] > \exp [u_f(\underline{m}^f_n)] + \epsilon_1 
\end{equation*}
Also since $\exp [u_f( m^f_n)] = 0 $ for all $n>N$ and $f\in \hospitals^k_n$, combined with inequality \eqref{inequality: target lottery non vanishing gap}, we know there exists $\epsilon_2$ such that for all $f \in \hospitals^k_n$,
\begin{equation*}
	\exp [u_f(\lambda^0_n )] > p^r \exp [u_f( \lambda^0_n ) ] + (1 -p^r ) \exp [ u_f(m^f_n )] + \epsilon_2   
\end{equation*}
Lastly, since $\exp [u_{f'}(m^f_n)] \ge qV(C_1, 0) >  0 = \exp [u_f( m^f_n)]$ for all $n>N$ and $f,f'\in \hospitals^k_n$ where $f \ne f'$, combined with inequality \eqref{inequality: target lottery non vanishing gap}, we know there exists $\epsilon_3$ such that for all $n>N$ and all $f,f' \in \hospitals^k_n$ with $f \ne f'$,
\begin{equation*}
 p^r \exp [u_f( \lambda^0_n ) ] + (1 -p^r ) \exp [ u_f(m^{f'}_n )] 	 > p^r \exp [u_f( \lambda^0_n ) ] + (1 -p^r ) \exp [ u_f(m^f_n )] + \epsilon_3 
\end{equation*}
Let $\epsilon = \min\{\epsilon_1, \epsilon_2, \epsilon_3 \}$ and let $\lambda^f_n$ denote the compound lottery placing $p^r$ weight on the lottery $\lambda^0_n$ and $(1- p^r)$ on the matching $m^f_n$, the above three inequalities imply that for all $n > N$
\begin{align}
		 \exp [u_f( \lambda^{f'}_n ) ] &> \exp [u_f(\underline{m}^f_n)] + \epsilon \;\; \text{ for all } f, f'\in \hospitals^k_n \label{inequality: p-s punishment1} \\
		 \exp [u_f(\lambda^{f}_n )]  &< \exp [u_f(\lambda^0_n )] - \epsilon \;\; \text{ for all } f\in \hospitals^k_n \label{inequality: p-s punishment2} \\
		 \exp [u_f(\lambda^{f}_n )] &< \exp [u_f(\lambda^{f'}_n )]  - \epsilon \;\; \text{ for all } f, f'\in \hospitals^k_n \text{ and } f \ne f' \label{inequality: p-s punishment3}
\end{align}
Taken together, inequalities \eqref{inequality: p-s punishment1}, \eqref{inequality: p-s punishment2} and \eqref{inequality: p-s punishment3} imply that the matchings $\{\lambda^{f}_n: f\in \hospitals^k_n \}$ form hospital-specific punishments for $\lambda^{0}_n$ for all $n> N$.

\paragraph{A Capacity-Reducing, Self-Enforcing Matching Process.} Consider the matching process represented by the automaton $(\Theta, \gamma^0 , O, \kappa)$, where
\begin{enumerate}
	\item $\Theta_n = \big \{\theta( e ,m): f \in \hospitals^k_n \cup \{ 0 \};\; m  \in {M}_n \big \} \cup \big \{ \underline{\theta}(f, t): f \in \hospitals^k_n , \; 0\le t < L \big \}$ is the set of all possible states;
	\item $\gamma^0$ is the initial distribution over states, which satisfies $\gamma^0 ( \theta(0,m) ) = \lambda^0(m ) $ for all $m \in M_n$;
	\item $O:\Theta \rightarrow M_n$ is the output function, where $O(\theta(f, m )) = m $ and $O(\underline{\theta}(f, t))= \underline{m}^f_n$;
	\item $\kappa: \Theta \times M_n \rightarrow \Delta(\Theta)$ is the transition function defined as follows.

	For states $\{\underline{\theta}(f, t)| 0 \le t < L -1 \}$, $\kappa$ is defined as
	\[
	\kappa \big( \underline{\theta} (f, t ), m' \big) =
	\begin{cases}
	\underline{\theta}(f', 0) & \text{if $m' \ne \underline{m}^f_n$; $m' = [\underline{m}^{f}_n, (f', W) ]$ for some  $f' \in \hospitals^k_n$ and $W \subseteq \students_n $} \\
	\underline{\theta}(f, t +1) & \text{otherwise}
	\end{cases}
	\]
	For states $\underline{\theta}(f, L -1)$, the transition is defined as
	\[
	\kappa \big( \underline{\theta}(f, L -1), m' \big) =
	\begin{cases}
	\underline{\theta} (f', 0) & \text{if $m' \ne \underline{m}^f_n$; $m' = [\underline{m}^{f}_n, (f', W) ]$ for some  $f' \in \hospitals^k_n$ and $W \subseteq \students_n $} \\
		\gamma^f & \text{otherwise}
	\end{cases}
	\]
	where $p^f$ is the distribution over states that satisfies $\gamma ^f(\theta(f, m) ) = \lambda^f (m)$ for all $f \in \hospitals^k_n$ and $m \in M_n$.
		
	For states $\theta(e, m)$, the transition is
	\[
	\kappa \big( \theta (e, m ), m' \big) =
	\begin{cases}
	\underline{\theta}(f',0) & \text{if $m' \ne m$; $m' = [m, (f', W) ]$ for some  $f' \in \hospitals^k_n$ and $W \subseteq \students_n$} \\
		\gamma ^e & \text{otherwise}
	\end{cases}
	\]	
	where $\gamma^f$ and $ \gamma^0$ are defined as above.
\end{enumerate}

\noindent The matching process represented by the above automaton randomizes over $\redmatching$ according to $\lambda^0$ in every period. It remains to  verify no hospital has profitable one shot deviations in any automaton state.

\noindent \textbf {{For states of the form $ \theta(e,m)$}:} Consider a one-shot deviation $(f', W)$ by hospital $f'$. There are two cases to consider.

\noindent \textit{Case 1:} $f'  \ne e$. Choose a number $Z > q V(C_1,1)$, so no hospital can derive payoff higher than $Z$ from any deviation. Without deviation, $f'$ has value $(1-\delta)  u_{f'}(m) + \delta \exp [u_{f'}(\lambda^{e}_n )] $. After deviation, $f'$ yields less than $(1-\delta) Z +\delta(1-\delta^{L }) \exp [u_{f'}(\underline{m}^{f'}_n)] +\delta^{L +1}  \exp [u_{f'}(\lambda^{f'}_n )] $.
There is no profitable one-shot deviation for $f'$ if
\[
(1-\delta)  u_{f'}(m) + \delta \exp [u_{f'}(\lambda^{e}_n )]  \ge (1-\delta) Z +\delta(1-\delta^{L }) \exp [u_{f'}(\underline{m}^{f'}_n)] +\delta^{L +1}  \exp [u_{f'}(\lambda^{f'}_n )]
\]
As $\delta \rightarrow 1$, the LHS converges to $\exp [u_{f'}(\lambda^{e}_n )]$ while the RHS converges to $\exp [u_{f'}(\lambda^{f'}_n )]$. By \eqref{inequality: p-s punishment2} and \eqref{inequality: p-s punishment3}
, $\exp [u_{f'}(\lambda^{e}_n )] > \exp [u_{f'}(\lambda^{f'}_n )] + \epsilon$ for all $n>N$. It follows that there is a $\underline{\delta}_1$ such that these deviations are not profitable for $\delta > \underline{\delta}_1$ and $n>N$.
\medskip

\noindent \textit{Case 2:} $f' = e$.  Without deviation, $f'$ has value $(1-\delta) u_{f'}(m) + \delta \exp [u_{f'}(\lambda^{f'}_n )]$. After deviation, $f'$ yields less than $(1-\delta) Z +\delta(1-\delta^{L }) \exp [u_{f'}(\underline{m}^{f'}_n)] +\delta^{L +1}  \exp [u_{f'}(\lambda^{f'}_n )]$. There is no profitable one-shot deviation for $f'$ if
\[
	(1-\delta) u_{f'}(m) + \delta \exp [u_{f'}(\lambda^{f'}_n )]  \ge (1-\delta) Z +\delta(1-\delta^{L }) \exp [u_{f'}(\underline{m}^{f'}_n)] +\delta^{L +1}  \exp [u_{f'}(\lambda^{f'}_n )].
\]
The inequality is equivalent to
\[
 Z - u_{f'}(m )   \le \delta(1 + \ldots + \delta^{L-1} ) \big[\exp [u_{f'}(\lambda^{f'}_n )] - \exp [u_{f'}(\underline{m}^{f'}_n)]\big].
\]
By \eqref{inequality: p-s punishment1}, $\exp [u_{f'}(\lambda^{f'}_n )] - \exp [u_{f'}(\underline{m}^{f'}_n)] > \epsilon$ for all $n>N$. Choose $L$ large enough so that $L \epsilon >  Z - u_{f'}(m )$. As $\delta \rightarrow 1$, the LHS remains unchanged while the RHS converges to $L (\exp [u_{f'}(\lambda^{f'}_n )] - \exp [u_{f'}(\underline{m}^{f'}_n)]) > L\epsilon$, there exist $\underline{\delta}_2$ such that no deviation is profitable for $\delta > \underline{\delta}_2$ and $n>N$.
\bigskip

\noindent \textbf{For states of the form {$ \underline{\theta}(f, t)$}:} there are two cases to consider.

\noindent \textit{Case 1:} $f' \ne f$. Without deviation, $f'$ has payoff $(1-\delta^{L - t} ) \exp [u_{f'}(\underline{m}^{f}_n)]  + \delta^{L - t} \exp [u_{f'}(\lambda^{f}_n )]$. With any deviation, $f'$ has payoff less than $(1-\delta) Z + \delta(1-\delta^{L})  \exp [u_{f'}(\underline{m}^{f'}_n)]  +\delta^{L +1} \exp [u_{f'}(\lambda^{f'}_n )]$.
There is no profitable one-shot deviation for $f'$ if
\begin{equation*}
(1-\delta^{L - t} ) \exp [u_{f'}(\underline{m}^{f}_n)]  + \delta^{L - t} \exp [u_{f'}(\lambda^{f}_n )] \ge (1-\delta) Z + \delta(1-\delta^{L})  \exp [u_{f'}(\underline{m}^{f'}_n)]  +\delta^{L +1} \exp [u_{f'}(\lambda^{f'}_n )]
\end{equation*}
As $\delta \rightarrow 1$, the LHS converges to $\exp [u_{f'}(\lambda^{f}_n )]$ for all $t$ such that $0\le t \le L$, while the RHS converges to $\exp [u_{f'}(\lambda^{f'}_n )]$. By \eqref{inequality: p-s punishment3}, $\exp [u_{f'}(\lambda^{f}_n )] > \exp [u_{f'}(\lambda^{f'}_n )] +\epsilon$ for all $n>N$. So there exists $\underline{\delta}_3$ such that for all $\delta >\underline{\delta}_3$ and $n>N$ there is no profitable deviations. 
\medskip

\noindent \textit{Case 2:} $f'= f$. Without deviation, hospital $f'$ has payoff
\[
(1-\delta^{L - t}) \exp [u_{f'}(\underline{m}^{f'}_n)]+ \delta^{L - t} \exp [u_{f'}(\lambda^{f'}_n )].
\]
When deviating from $\underline{m}_{k'}$, by \cref{lemma: no deviation while punished}, $f'$'s stage-game payoff is at most $\exp [u_{f'}(\underline{m}^{f'}_n)]$. So $f'$'s discounted expected payoff from deviation is at most
\[
(1-\delta) \exp [u_{f'}(\underline{m}^{f'}_n)] + \delta(1-\delta^{L}) \exp [u_{f'}(\underline{m}^{f'}_n)] +\delta^{L +1} \exp [u_{f'}(\lambda^{f'}_n )] = (1 - \delta^{L+1} ) \exp [u_{f'}(\underline{m}^{f'}_n)]  + \delta^{L +1} \exp [u_{f'}(\lambda^{f'}_n )].
\]
Hospital $f'$ has no profitable deviation if
\begin{equation*} 
(1-\delta^{L - t}) \exp [u_{f'}(\underline{m}^{f'}_n)]+ \delta^{L - t} \exp [u_{f'}(\lambda^{f'}_n )] \ge (1 - \delta^{L+1} ) \exp [u_{f'}(\underline{m}^{f'}_n)]  + \delta^{L +1} \exp [u_{f'}(\lambda^{f'}_n )],
\end{equation*}
or
\[
	\exp [u_{f'}(\lambda^{f'}_n )] > \exp [u_{f'}(\underline{m}^{f'}_n)].
\]
which is true for all $n>N$. So $f'$ has no profitable one-shot deviation for all $n>N$.

Define $\underline{\delta} \equiv \max\{\delta_1, \delta_2, \delta_3 \}$. There is no profitable one-shot deviation in any states of the automaton, as long as $\delta>\underline{\delta}$ and $n>N$. This completes the proof.

\subsection{Can Hospitals Treat Students As Actions?} \label{section: product structure}

In this section I consider a non-cooperative game where hospitals choose students as actions, and discuss its relation to cooperative game form of two-sided matching games. When the matching environment has no transfers and no peer effects, this non-cooperative game yields the same outcome as the cooperative game form of two-sided matching games. However, this equivalence breaks down in matching games with money or externalities. One advantage of the hybrid solution concept in \cref{definition:self-enforcing-matching-process} is therefore its portability to all kinds of matching environments. 

Below I illustrate the difference between matching games and the non-cooperative game form using an example: the key difference is whether or not students are given agency in the formation of blocking coalitions.

\paragraph{A Static Matching Game with Peer Effects.} There are two hospitals $\mathcal{F} = \{f_1, f_2 \}$ and two students $\mathcal{W} = \{w_1, w_2\}$. There is no money. Each hospital has capacity to hire both students. 
\medskip

\noindent \textbf{Hospitals:}
Let $u_k(.)$ be hospital $f_k$'s utility function for students. Assume each hospital derives $1$ util from hiring each student, and both hospitals have additive utilities for the two positions to be filled. Formally,  $u_k(\{w_1\}) = u_k(\{w_2\}) = 1$, and $u_k(\{w_1, w_2\}) = 2$ for $k=1,2$.
\medskip

\noindent \textbf{Students:}
Let $v_j(f, w)$ be student $w_j$'s utility from working for hospital $f$, alongside colleague $w$. Assume $v_j(f_2, \, . \,) = 0$ for $j = 1, 2$, so both students are indifferent between working for $f_2$ and their outside options, regardless of colleagues. Assume $v_j(f_1, \emptyset ) = 2$ and $v_j(f_1, w_i ) = 1$ for all $j=1, 2$ and $i \ne j$: both students enjoy working for $f_1$ but prefer working alone.

\paragraph{Stable Matchings.}
There are three stable matchings, as illustrated in \Cref{figure:cooperative matchings}. Notice that $m_1$ and $m_2$ are stable matchings because the coalition $\{w_1, w_2\}$ is not a feasible deviation for $f_1$: since students have agency in the formation of blocking coalitions, and students dislike colleagues, in both $m_1$ and $m_2$, the coalition $\{f_1, w_1, w_2\}$ is turned down by $f_1$'s existing student.

\begin{figure}[h]  \centering
\minipage{0.2\textwidth} \centering
\begin{tikzpicture}
\tikzstyle{every node} = [draw, shape=circle, line width=1pt, inner sep=0pt, minimum size=20pt]
\tikzset{every path/.append style={line width=1pt}}

\path (0,3) node (f2) {$f_2$};
\path (0,1) node (f1) {$f_1$};
\path (3,3) node (w2) {$w_2$};
\path (3,1) node (w1) {$w_1$};
\draw (f1) -- (w1);
\draw (f2) -- (w2);
\end{tikzpicture}	
\caption*{$m_{1}$}
\endminipage 
\hspace{10ex}
\minipage{0.2\textwidth} \centering
\begin{tikzpicture}
\tikzstyle{every node} = [draw, shape=circle, line width=1pt, inner sep=0pt, minimum size=20pt]
\tikzset{every path/.append style={line width=1pt}}

\path (0,3) node (f2) {$f_2$};
\path (0,1) node (f1) {$f_1$};
\path (3,3) node (w2) {$w_2$};
\path (3,1) node (w1) {$w_1$};
\draw (f1) -- (w2);
\draw (f2) -- (w1);
\end{tikzpicture}	
\caption*{$m_{2}$}
\endminipage 
\hspace{10ex}
\minipage{0.2\textwidth} \centering
\begin{tikzpicture}
\tikzstyle{every node} = [draw, shape=circle, line width=1pt, inner sep=0pt, minimum size=20pt]
\tikzset{every path/.append style={line width=1pt}}
\path (0,3) node (f2) {$f_2$};
\path (0,1) node (f1) {$f_1$};
\path (3,3) node (w2) {$w_2$};
\path (3,1) node (w1) {$w_1$};
\draw (f1) -- (w1);
\draw (f1) -- (w2);
\end{tikzpicture}	
\caption*{$m_{3}$}
\endminipage 
\caption{Three Stable Matchings\label{figure:cooperative matchings}}
\end{figure}

\paragraph{Non-Cooperative Game with Students As Actions.}
\Cref{figure:example matchings2} shows the non-cooperative game with students treated as actions. Essentially, all Nash equilibria correspond to $m_3$.

\begin{center}
\begin{figure}[h]
\hspace{0.6in} \includegraphics[scale = 1.1]{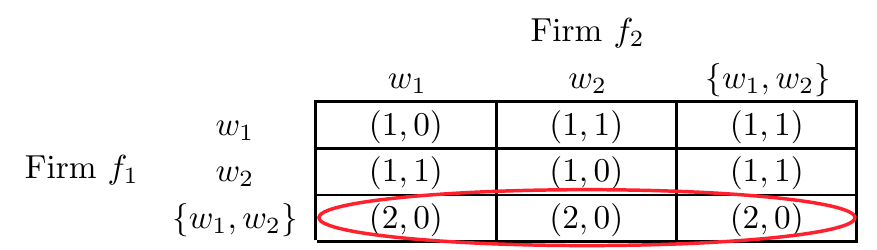}		
\caption{Students As Actions\label{figure:example matchings2}}
\end{figure}
\end{center}

Note that neither $m_1$ nor $m_2$ is a pure strategy Nash equilibrium here, because this non-cooperative game does not give agency to students in the formation of blocking coalitions.

\end{document}